\renewcommand{\paragraph}[1]{\noindent\textbf{#1}}
\def\bc{\color{black}}
\newcommand\numberthis{\addtocounter{equation}{1}\tag{\theequation}}
\newcounter{ass_counter}
\newcounter{thm_counter}
\newtheorem{theorem}[thm_counter]{Theorem}
\newtheorem{lemma}[thm_counter]{Lemma}
\newtheorem{corollary}[thm_counter]{Corollary}
\newtheorem{assumption}[ass_counter]{Assumption}
\DeclareMathOperator{\Tr}{Tr}
\title{Distributed Learning over Unreliable Networks}
\date{}
\author[1]{Chen Yu\thanks{\texttt{cyu28@ur.rochester.edu}}}
\author[1]{Hanlin Tang\thanks{\texttt{htang14@ur.rochester.edu}}}
\author[2]{Cedric Renggli\thanks{\texttt{cedric.renggli@inf.ethz.ch}}}
\author[2]{Simon Kassing\thanks{\texttt{simon.kassing@inf.ethz.ch}}}
\author[2]{Ankit Singla\thanks{\texttt{ankit.singla@inf.ethz.ch}}}
\author[4]{Dan Alistarh\thanks{\texttt{d.alistarh@gmail.com}}}
\author[2]{Ce Zhang\thanks{\texttt{ce.zhang@inf.ethz.ch}}}
\author[1,3]{Ji Liu\thanks{\texttt{ji.liu.uwisc@gmail.com}}}
\affil[1]{Department of Computer Science, University of Rochester}
\affil[2]{Department of Computer Science, ETH Zurich}
\affil[3]{Seattle AI Lab, FeDA Lab, Kwai Inc}
\affil[4]{Institute of Science and Technology Austria}
\begin{document}

\maketitle

\begin{abstract}
Most of today's distributed machine learning systems
assume {\em reliable networks}: whenever
two machines exchange information (e.g., gradients
or models), the network should guarantee the delivery
of the message. At the same time, recent 
work exhibits 
the impressive tolerance of machine learning algorithms to errors or noise arising from relaxed communication or synchronization. 
In this paper, we connect these two trends, and consider the  
following question: {\em Can we design machine
learning systems that are tolerant to network unreliability during training?}
With this motivation, we focus on a theoretical
problem of independent interest---given a standard distributed parameter server 
architecture, if every communication between
the worker and the server has a non-zero probability $p$
of being dropped, does there exist 
an algorithm that still converges,
and at what speed?
The technical contribution of this paper is a novel theoretical analysis proving that 
 distributed learning over unreliable network can achieve comparable convergence rate to centralized or distributed learning over reliable networks. 
 Further, we prove that the influence of the packet drop rate diminishes with the growth of the number of \textcolor{black}{parameter servers}.
We map this theoretical result onto a real-world scenario, training deep neural networks over
an unreliable network layer, and conduct network simulation
to validate the system improvement by allowing the networks to be unreliable.
\end{abstract}

\section{Introduction}

Distributed learning has attracted significant interest from both
academia and industry. Over the last decade, researchers have 
come with up a range of different designs of more efficient 
learning systems. An important subset of this work focuses on 
understanding the impact of different system relaxations to
the convergence and performance of distributed stochastic gradient
descent, such as the compression of communication, e.g~\citep{seide2016cntk},
decentralized communication~\citep{lian2017can,sirb2016consensus,lan2017communication,Tang:2018aa,Stich:2018aa}, and asynchronous
communication~\citep{lian2017asynchronous,zhang2013asynchronous,lian2015asynchronous}. 
Most of these works are motivated by real-world system bottlenecks, abstracted as general problems for the purposes of analysis.

In this paper, we focus on the centralized SGD algorithm in the distributed machine learning scenario implemented using standard AllReduce operator or parameter server ar- chitecture and we are motivated by a new type of system relaxation—the reliability of the communication channel. We abstract this problem as a theoretical one, conduct a novel convergence analysis for this scenario, and then vali- date our results in a practical setting.


\begin{figure} 
\centering
\includegraphics[width=0.5\textwidth]{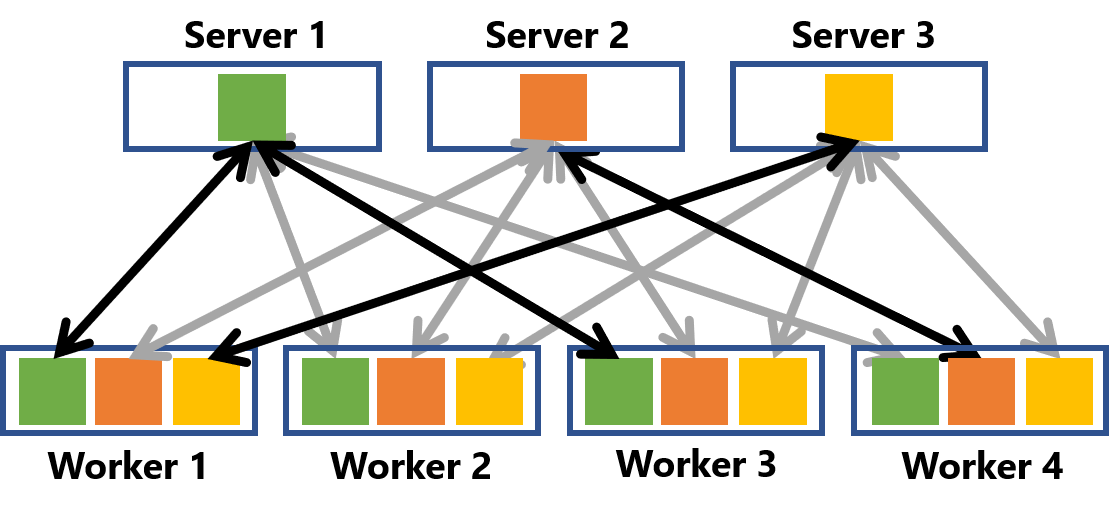}
\caption{An illustration of the communication pattern 
of distributed learning with three parameter servers
and four workers --- each server serves a partition
of the model, and each worker holds a replica of
the whole model. In this paper, we focus on the
case in which every communication between
the worker and the server has a non-zero probability $p$
of being dropped.}
\label{fig: allreduce}
\end{figure}

The \textit{Centralized SGD} algorithm works as Figure \ref{fig: allreduce} shows. Given $n$ machines/workers, each maintaining its own local model, each machine alternates local SGD steps with global communication steps, in which machines exchange their {\em local models}.
In this paper, we covers two standard distributed settings: the Parameter Server model\footnote{Our modeling above fits the case of $n$ workers and $n$ parameter servers, although our analysis will extend to any setting of these parameters.}~\cite{li2014scaling,abadi2016tensorflow}, as well as standard implementations of the AllReduce averaging operation in a decentralized setting~\cite{seide2016cntk,renggli2018sparcml}. There are two components in the
communication step:

\begin{enumerate}
\item {\bf Step 0 - Model Partitioning (Only Conducted Once).}
In most state-of-the-art implementations
of AllReduce and parameter servers~\cite{li2014scaling,abadi2016tensorflow,Thakur:2005:OCC:2747766.2747771},
models are partitioned into $n$ blocks,
and each machine is the owner of one block~\cite{Thakur:2005:OCC:2747766.2747771}.
The rationale is to increase parallelism 
over the utilization of the underlying 
communication network. The partitioning
strategy does not change during training.

\item {\bf Step 1.1 - Reduce-Scatter.}
In the Reduce-Scatter step, 
for each block (model partition) $i$, the machines average their model on the block by sending it to the corresponding machine.  

\item {\bf Step 1.2 - All-Gather.}
In the subsequent All-Gather step, each machine broadcasts its block to all others, so that all machines have a consistent model copy. 

\end{enumerate}

In this paper, we focus on the
scenario in which the communication is unreliable --- The communication channel between any two machines has a probability $p$ of not delivering a message, as the Figure \ref{fig: allreduce} shows, where the grey arrows represent the dropping message and the black arrows represent the success-delivering message. 
We change the two aggregation steps as follows. In the Reduce-Scatter step, a {\em uniform random} subset of machines will average their model on each model block $i$. 
In the All-Gather step, it is again a {\em uniform random} subset of machines which receive the resulting average. 
Specifically, machines not chosen for the Reduce-Scatter step do not contribute to the average, and all machines that are not chosen  for the All-Gather will not receive updates on their model block $i$.
This is a realistic model of running an AllReduce operator implemented with Reduce-Scatter/All-Gather on unreliable network. We call this revised algorithm the RPS algorithm.

Our main technical contribution is characterizing the convergence properties of the RPS algorithm. To the best of our knowledge, this is a novel theoretical analysis of this faulty communication model.
We will survey related work in more
details in Section~\ref{sec:related-work}.


We then apply our theoretical result to a real-world use case,
illustrating the potential benefit of allowing an unreliable network. We focus on a realistic scenario where the network is shared among multiple applications or tenants, for instance in a data center. 
Both applications communicate using the same network. 
In this case, if the machine learning traffic is tolerant to some packet loss, the other application can potentially be made faster by receiving priority for its network traffic. 
Via network simulations, we find that tolerating a $10\%$ drop rate for the learning traffic can make a simple (emulated) Web service up to $1.2\times$ faster
(Even small speedups of $10\%$ are significant for such services; for example, Google actively pursues minimizing its Web services' response latency). 
At the same time, this degree of loss does not impact the convergence rate for a range of machine learning applications, such as image classification and natural language processing.

\paragraph{Organization}
The rest of this paper is organized as follow. We first review some related work in Section \ref{sec:related-work}. Then we formulate the problem in Section \ref{sec:problem} and describe the RPS algorithm in Section \ref{sec:algorithm}, with its theoretical guarantee stated in Section \ref{sec:theorem}. We evaluate the scalability and accuracy of the RPS algorithm in Section \ref{sec:experiments} and study an interesting case of speeding up colocated applications in Section \ref{sec:case}. At last, we conclude the paper in
Section \ref{sec:conclusion}. The proofs of our theoretical results can be found in
the supplementary material.
\section{Related Work} \label{sec:related-work}
{\bc
\paragraph{Distributed Learning} 
There has been a huge number of works on distributing deep learning, e.g.~\citet{seide2016cntk,abadi2016tensorflow,goyal2017accurate,colin2016gossip}. Also, many optimization algorithms are proved to achieve much better performance with more workers. For example,  \citet{hajinezhad2016nestt} utilize a primal-dual based method for optimizing a finite-sum objective function and proved that it's possible to achieve a $\mathcal{O}(n)$ speedup corresponding to the number of the workers. In \citet{xu2017adaptive}, an adaptive consensus ADMM is proposed and \citet{goldstein2016unwrapping} studied the performance of transpose ADMM on an entire distributed dataset. In \citet{scaman2017optimal}, the optimal convergence rate for both centralized and decentralized distributed learning is given with the time cost for communication included. In \citet{Lin:2018aa,local_sgd}, they investigate the trade off between getting more mini-batches or having more communication.  To save the communication cost, some sparse based distributed learning algorithms is proposed \citep{shen2018towards,wu2018error,wen2017terngrad,mcmahan2016communication,wang2016efficient}. Recent works indicate that many distributed learning is delay-tolerant under an asynchronous setting \citep{zhou2018distributed,lian2015asynchronous,sra2015adadelay,leblond2016asaga}. Also, in \citet{blanchard2017machine,Yin2018ByzantineRobustDL,Alistarh2018ByzantineSG} They study the Byzantine-robust distributed learning when the Byzantine worker is included in the network. In \citet{NIPS2018_7327}, authors proposed a compressed DNN training strategy in order to save the computational cost of floating point.

}



\paragraph{Centralized parallel training} 
Centralized parallel \citep{agarwal2011distributed, recht2011hogwild} training works on the network that is designed to ensure that all workers could get information of all others.
One communication primitive
in centralized training is
to average/aggregate 
all models, which is
called {\em a collective 
communication operator}
in HPC literature~\cite{Thakur:2005:OCC:2747766.2747771}. Modern
machine learning systems
rely on different implementations,
e.g., parameter server model~\cite{li2014scaling,abadi2016tensorflow}
or the standard implementations of the AllReduce averaging operation in a decentralized setting  \citep{seide2016cntk,renggli2018sparcml}.
In this work, we
focus on the behavior of
centralized ML
systems under unreliable network,
when this primitive is implemented
as a distributed 
parameter servers~\cite{Jiang:2017:HDP:3035918.3035933}, which
is similar to a Reduce-Scatter/All-Gather
communication paradigm.
For many implementations of 
collective 
communication operators,
partitioning the model is one key
design point to reach the peak
communication performance~\cite{Thakur:2005:OCC:2747766.2747771}.

\paragraph{Decentralized parallel training}
Another direction of related work considers decentralized learning. Decentralized learning algorithms can be divided into \emph{fixed} topology algorithms and \emph{random} topology algorithms. There are many work related to the \emph{fixed} topology decentralized learning. 
Specifically,~\citet{jin2016scale} proposes to scale the gradient aggregation process via a gossip-like mechanism. \citet{lian2017can} provided strong convergence bounds for a similar algorithm to the one we are considering, in a setting where the communication graph is fixed and regular. In \citet{tang2018d}, a new approach that admits a better performance than decentralized SGD when the data among workers is very different is studied. \citet{pmlr-v80-shen18a} generalize the decentralized optimization problem to a monotone operator. In \citet{He:2018aa}, authors study a decentralized gradient descent  based algorithm (\textbf{CoLA}) for learning of linear classification and regression model.
For the \emph{random} topology decentralized learning, the weighted matrix for randomized algorithms can be time-varying, which means workers are allowed to change the communication network based on the availability of the network. There are many works \citep{boyd2006randomized, li2010consensus,lobel2011distributed, nedic2017achieving,nedic2015distributed} studying the random topology decentralized SGD algorithms under different assumptions. \citet{blot2016gossip} considers a more radical approach, called GoSGD, where each worker exchanges gradients with a random subset of other workers in each round. They show that GoSGD can be faster than Elastic Averaging SGD~\citep{zhang2015deep} on CIFAR-10, but provide no large-scale experiments or theoretical justification. 
Recently,~\citet{daily2018gossipgrad} proposed GossipGrad, a more complex gossip-based scheme with upper bounds on the time for workers to communicate indirectly, periodic rotation of partners and shuffling of the input data, which provides strong empirical results on large-scale deployments. 
The authors also provide an informal justification for why GossipGrad should converge.

In this paper, we consider a general model communication, which covers both Parameter Server~\citep{li2014scaling} and AllReduce~\citep{seide2016cntk} distribution strategies. 
We specifically include the uncertainty of the network into our theoretical analysis. 
In addition, our analysis highlights the fact that the system can handle additional packet drops as we increase the number of worker nodes. 

\section{Problem Setup} \label{sec:problem}

We consider the following distributed optimization problem:
\begin{equation}
\min_{\bm{x}}\quad f(\bm{x}) = {1\over n} \sum_{i=1}^n \underbrace{\mathbb{E}_{\bm{\bm{\xi}}\sim\mathcal{D}_i}F_{i}(\bm{x}; \bm{\bm{\xi}})}_{=: f_i(\bm{x})},\label{eq:main}
\end{equation}
where $n$ is the number of workers, $D_i$ is the local data distribution for worker $i$ (in other words, we do not assume that all nodes can access the same data set), and $F_i(\bm{x}; \bm{\bm{\xi}} )$ is the local loss function of model $\bm{x}$ given data $\bm{\bm{\xi}}$ for worker $i$.

\paragraph{Unreliable Network Connection} Nodes can communicate with all other workers, but with packet drop rate $p$ (here we do not use the common-used phrase ``packet loss rate'' because we use ``loss'' to refer to the loss function). That means, whenever any node forwards models or data to any other model, the destination worker {\em fails} to receive it, with probability $p$. 
For simplicity, we assume that all packet drop events are independent, and that they occur with the same probability $p$.

\paragraph{Definitions and notations}
Throughout, we use the following notation and definitions:
\begin{itemize}
\item $\nabla f(\cdot)$ denotes the gradient of the function $f$.
\item $\lambda_{i}(\cdot)$ denotes the $i$th largest eigenvalue of a matrix.
\item $\bm{1}_n=[1,1,\cdots,1]^{\top}\in\mathbb{R}^n$ denotes the full-one vector.
\item $A_n:=\frac{\bm{1}_n\bm{1}_n^{\top}}{n}$ denotes the all $\frac{1}{n}$'s $n$ by $n$ matrix.
\item $\|\cdot\|$ denotes the  $\ell_2$ norm for vectors.
\item $\|\cdot\|_F$ denotes the Frobenius norm of matrices.
\end{itemize}

\section{Algorithm} \label{sec:algorithm}

In this section, we describe our algorithm,
namely RPS --- {\em Reliable
Parameter Server} --- as it is robust to
package loss in the network layer. 
We first describe our
algorithm in detail, followed by its interpretation from a global view. 

\subsection{Our Algorithm: RPS}
In the RPS algorithm, each worker maintains an individual local model. We use $\bm{x}^{(i)}_t$ to denote the local model on worker $i$ at time step $t$. 
At each iteration $t$, each worker first performs a regular SGD step
\begin{align*}
\bm{v}^{(i)}_t \leftarrow \bm{x}^{(i)}_t - \gamma \nabla F_i(\bm{x}^{(i)}_t; {\bm{\xi}}^{(i)}_t);
\end{align*}
where $\gamma$ is the learning rate and $\bm{\xi}_{t}^{(i)}$ are the data samples of worker $i$ at iteration $t$.

We would like to reliably average the vector $\bm{v}_t^{(i)}$ among all workers, via the RPS procedure. In brief, the RS step perfors communication-efficient model averaging, and the AG step performs communication-efficient model sharing.

\paragraph{The Reduce-Scatter (RS) step:}  In this step, each worker $i$ divides $\bm{v}_t^{(i)}$ into $n$ equally-sized blocks.
\begin{equation}\label{eq: dividev}
\bm{v}_t^{(i)} = \left(\left(\bm{v}_t^{(i,1)}\right)^{\top}, \left(\bm{v}_t^{(i,2)}\right)^{\top}, \cdots, \left(\bm{v}_t^{(i,n)}\right)^{\top}\right)^{\top}.
\end{equation}

The reason for this division is to reduce the communication cost and parallelize model averaging since we only assign each worker for averaging one of those blocks. For example, worker $1$ can be assigned for averaging the first block while worker $2$ might be assigned to deal with the third block. For simplicity, we would proceed our discussion in the case that worker $i$ is assigned for averaging the $i$th block.
 
After the division, each worker sends its $i$th block to worker $i$. Once receiving those blocks, each worker would average all the blocks it receives. As noted, some packets might be dropped. In this case,  worker $i$ averages all those blocks using
\begin{align*}
\tilde{\bm{v}}_t^{(i)} = \frac{1}{|\mathcal{N}_t^{(i)}|}\sum_{j\in\mathcal{N}_t^{(i)}}\bm{v}_t^{(i,j)},
\end{align*}
where $\mathcal{N}_t^{(i)}$ is the set of the packages worker $i$ receives (may including the worker $i$'s own package).

\paragraph{The AllGather (AG) step:} After computing $\tilde{\bm{v}}_t^{(i)}$, each worker $i$ attempts to broadcast $\tilde{\bm{v}}_t^{(i)}$ to all other workers, using the averaged blocks to recover the averaged original vector $\bm{v}_t^{(i)}$ by concatenation:
\begin{align*}
\bm{x}_{t+1}^{(i)} = \left( \left(\tilde{\bm{v}}_{t}^{(i,1)}\right)^{\top},\left(\tilde{\bm{v}}_{t}^{(i,2)}\right)^{\top},\cdots,\left(\tilde{\bm{v}}_{t}^{(i,n)}\right)^{\top}\right)^{\top}.
\end{align*}
Note that it is entirely possible that some workers in the network may not be able to receive some of the averaged blocks. 
In this case, they just use the original block. Formally, 
\begin{equation} \label{eq: dividex}
	\bm{x}_{t+1}^{(i)} = \left(\left(\bm{x}_{t+1}^{(i,1)}\right)^{\top}, \left(\bm{x}_{t+1}^{(i,2)}\right)^{\top},\cdots, \left(\bm{x}_{t+1}^{(i,n)}\right)^{\top}\right)^{\top},
\end{equation}
where
\begin{displaymath}
	\bm{x}_{t+1}^{(i,j)} = \left\{ \begin{array}{ll}
		\tilde{\bm{v}}_t^{(j)} & j\in \widetilde{\mathcal{N}}_t^{(i)}\\
		\bm{v}_t^{(i,j)} & j\notin \widetilde{\mathcal{N}}_t^{(i)}
	\end{array} \right.
\end{displaymath}
We can see that each worker just replace the corresponding blocks of $v_t^{(i)}$ using received averaged blocks.
 The complete algorithm is summarized in Algorithm~\ref{alg1}.

\begin{algorithm}[t!]
\caption{RPS}\label{alg1}
\begin{minipage}{1.0\linewidth}

\begin{algorithmic}[1]
\STATE {\bfseries Input:} Initialize all $\bm{x}^{(i)}_1, \forall i\in[n]$ with the same value, learning rate $\gamma$, and number of total iterations $T$.
\FOR{$t = 1,2,\cdots,T$}
\STATE Randomly sample $\bm{\xi}^{(i)}_t$ from local data of the $i$th worker, $\forall i\in[n]$.
\STATE Compute a local stochastic gradient based on $\bm{\xi}^{(i)}_t$ and current optimization variable $\bm{x}^{(i)}_t:\nabla F_i(\bm{x}^{(i)}_t;\bm{\xi}^{(i)}_t), \forall i\in[n]$
\STATE Compute the intermediate model $\bm{v}^{(i)}_t$ according to
\[
\bm{v}_{t}^{(i)}\gets \bm{x}_{t}^{(i)}-\gamma\nabla F_i(\bm{x}^{(i)}_t;\bm{\xi}^{(i)}_t),
\]
and divide $\bm{v}^{(i)}_t$ into $n$ blocks $\left(\left(\bm{v}_t^{(i,1)}\right)^{\top}, \left(\bm{v}_t^{(i,2)}\right)^{\top}, \cdots, \left(\bm{v}_t^{(i,n)}\right)^{\top}\right)^{\top}$.
\STATE For any $i\in [n]$, randomly choose one worker $b^{(i)}_{t}$ \footnote{Here $b_t^{(i)} \in \{1,2,\cdots ,n\}$ indicates which worker is assigned for averaging the $i$th block.} without replacement. Then, every worker attempts to send their $i$th block of their intermediate model to worker $b_t^{(i)}$. Then each worker averages all received blocks using
\begin{equation*}
	\tilde{\bm{v}}_t^{(i)} = \frac{1}{|\mathcal{N}_t^{(i)}|}\sum\limits_{j\in \mathcal{N}_t^{(j)}} \bm{v}_t^{(i,j)}.
\end{equation*}
\STATE Worker $b_t^{(i)}$ broadcast $\tilde{\bm{v}}_t^{(i)}$ to all workers (maybe dropped due to packet drop), $\forall i\in [n]$.
\STATE $\bm{x}_{t+1}^{(i)} = \left(\left(\bm{x}_{t+1}^{(i,1)}\right)^{\top}, \left(\bm{x}_{t+1}^{(i,2)}\right)^{\top},\cdots, \left(\bm{x}_{t+1}^{(i,n)}\right)^{\top}\right)^{\top}$, where
\begin{displaymath}
	\bm{x}_{t+1}^{(i,j)} = \left\{ \begin{array}{ll}
		\tilde{\bm{v}}_t^{(j)} & j\in \widetilde{\mathcal{N}}_t^{(i)}\\
		\bm{v}_t^{(i,j)} & j\notin \widetilde{\mathcal{N}}_t^{(i)}
	\end{array} \right. ,
\end{displaymath}
for all $i\in [n]$.
\ENDFOR
\STATE {\bfseries Output:} $\bm{x}^{(i)}_T$
\end{algorithmic}
\end{minipage}
\end{algorithm}

\subsection{RPS From a Global Viewpoint} 
It can be seen that at time step $t$, the $j$th block of worker $i$'s local model, that is, $\bm{x}_{t}^{(i,j)}$, is a linear combination of $j$th block of all workers' intermediate model $\bm{v}_t^{(k,j)} (k\in [n])$,
\begin{align} \label{eq: updatingrule}
X_{t+1}^{(j)} = V_t^{(j)} W_{t}^{(j)},
\end{align}
where 
\begin{align*}
X_{t+1}^{(j)}:= &\big(\bm{x}_{t+1}^{(1,j)}, \bm{x}_{t+1}^{(2,j)}, \cdots, \bm{x}_{t+1}^{(n,j)}\big)\\
V_t^{(j)} := &\big(\bm{v}_t^{(1,j)}, \bm{v}_t^{(2,j)}, \cdots, \bm{v}_t^{(n,j)}\big)
\end{align*}
and $W_t^{(j)}$ is the coefficient matrix indicating the communication outcome at time step $t$. The $(m,k)$th element of $W_t^{(j)}$ is denoted by $\left[W_t^{(j)}\right]_{m,k}$. $\left[W_t^{(j)}\right]_{m,k} \neq 0$ means that worker $k$ receives worker $m$'s individual $j$th block (that is, $v^{(m,j)}_t$), whereas $\left[W_t^{(j)}\right]_{m,k} = 0$ means that the package might be dropped either in RS step (worker $m$ fails to send) or AG step (worker $k$ fails to receive).
So $W_t^{(j)}$ is time-varying because of the randomness of the package drop. Also $W_t^{(j)}$ is not doubly-stochastic (in general) because the package drop is independent between RS step and AG step.
\begin{figure}[h]
\centering
\begin{minipage}{.4\textwidth}
  \centering
  \includegraphics[width=1.0\linewidth]{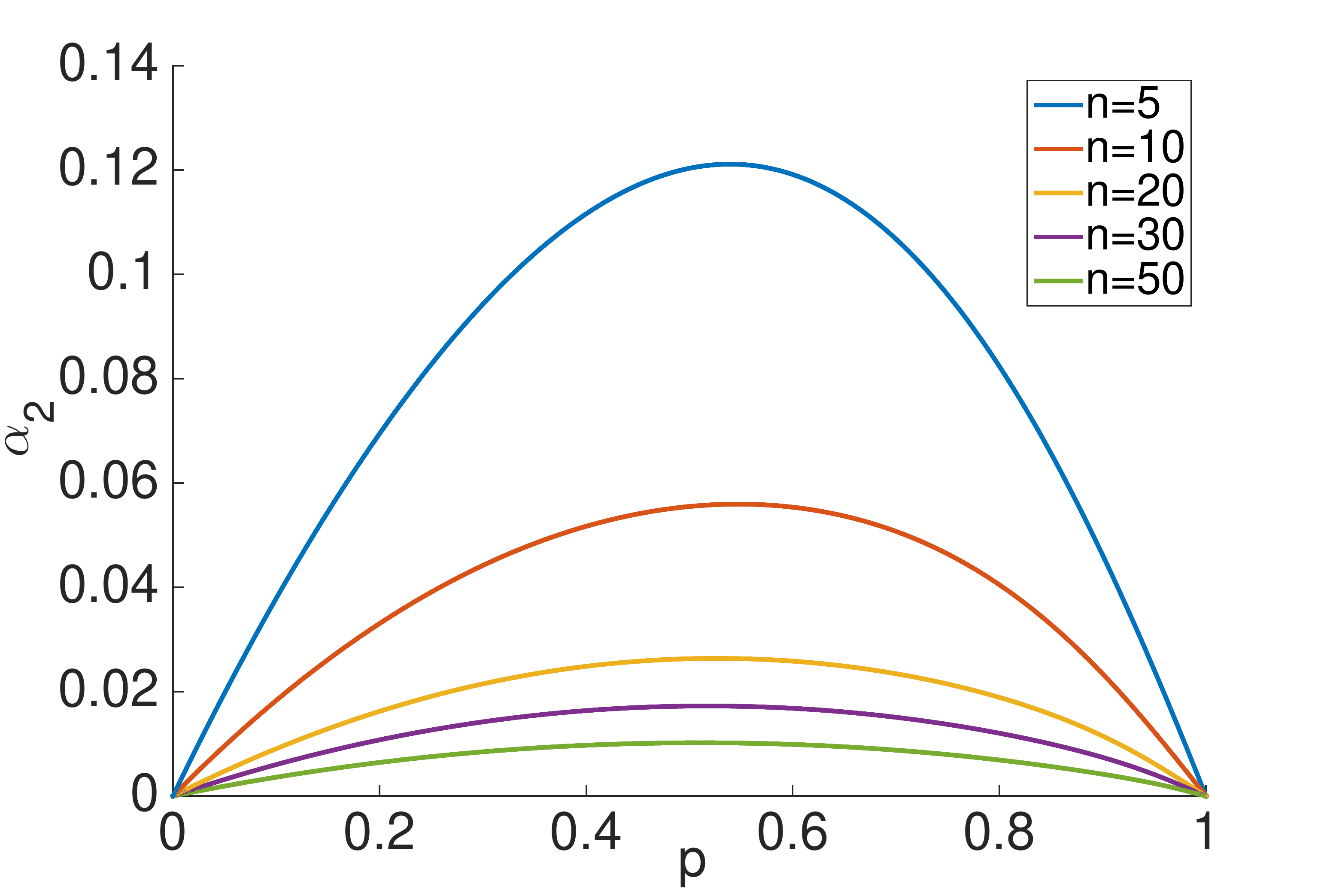}
  \captionof{figure}{$\alpha_2$ under different number of workers n and package drop rate $p$.}
  \label{fig_alpha2}
\end{minipage}%
\hspace{0.03\textwidth}%
\begin{minipage}{.4\textwidth}
  \centering
  \includegraphics[width=1.0\linewidth]{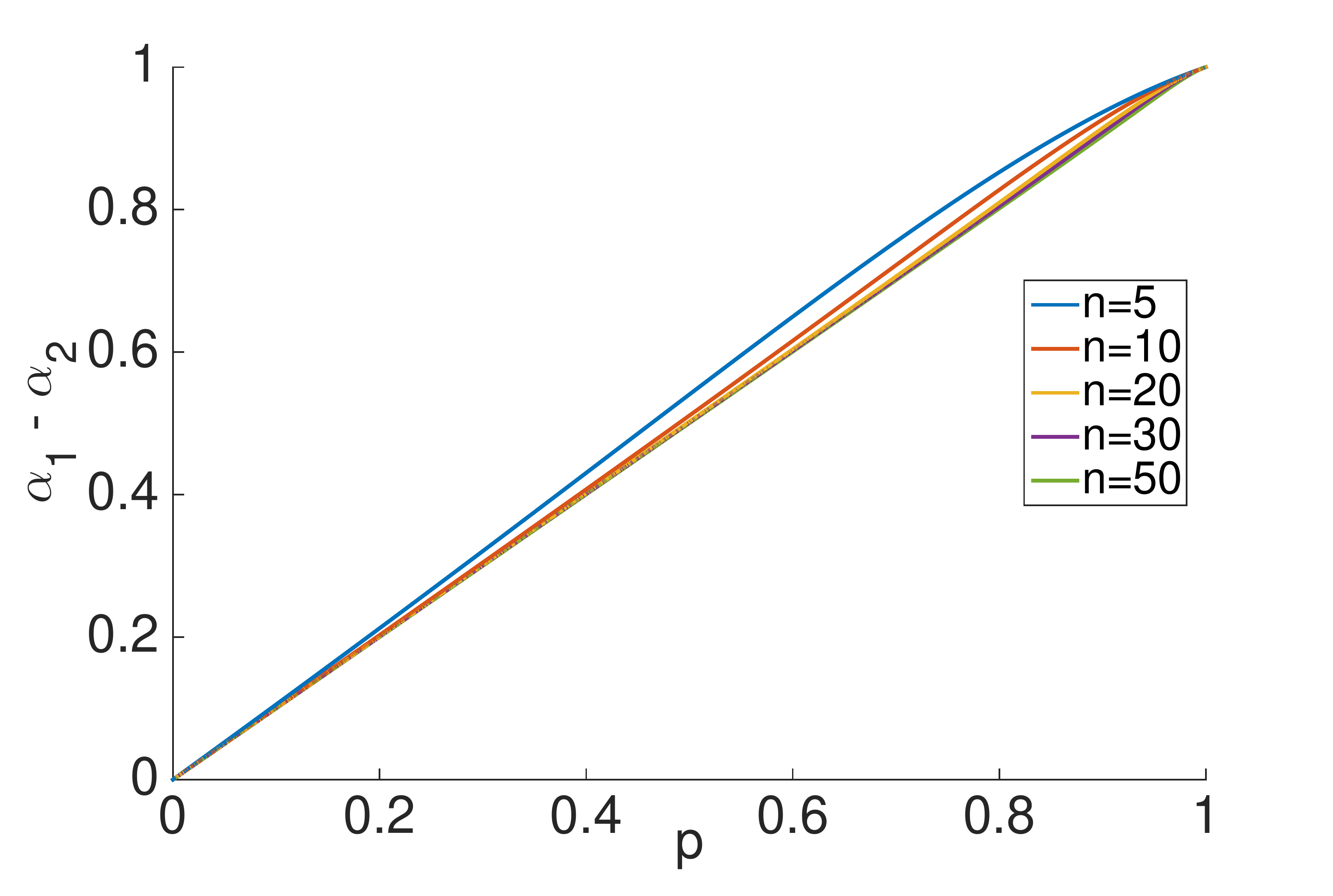}
  \captionof{figure}{$(\alpha_1 - \alpha_2)$ under different number of workers n and package drop rate $p$.}
  \label{fig_beta}
\end{minipage}%
\end{figure}

\paragraph {The property of $W^{(j)}_t$}
In fact, it can be shown that all $W^{(j)}_t$'s ($\forall j, \forall t$) satisfy the following properties 
\begin{align*}
\left(\mathbb E(W_{t}^{(j)})\right) A_n = & A_n\\
\mathbb E \left[ W_t^{(j)} \left( W_{t}^{(j)}\right)^{\top} \right] = &\alpha_1 I_n + (1-\alpha_1)A_n \numberthis \label{def_alpha1}\\
\mathbb E \left[ W_t^{(j)} A_n\left( W_{t}^{(j)}\right)^{\top} \right] = &\alpha_2 I_n + (1-\alpha_2)A_n  \numberthis \label{def_alpha2}
\end{align*}
for some constants $\alpha_1$ and $\alpha_2$ satisfying $0<\alpha_2<\alpha_1<1$ (see Lemmas~\ref{lem: EW}, \ref{lem: EWW}, and \ref{lem: EWAW} in Supplementary Material). Since the exact expression is too complex, we plot the $\alpha_1$ and $\alpha_2$ related to different $n$ in Figure~\ref{fig_alpha2} and  Figure~\ref{fig_beta} (detailed discussion is included in \textbf{Section ~\ref{secD}} in Supplementary Material.). Here, we do not plot $\alpha_2$, but plot $\alpha_1-\alpha_2$ instead. This is because $\alpha_1-\alpha_2$ is an important factor in our Theorem (See Section \ref{sec:theorem} where we define $\alpha_1-\alpha_2$ as $\beta$).

\section{Theoretical Guarantees and Discussion} \label{sec:theorem}


Below we show that, for certain parameter values, RPS with unreliable communication rates admits the same convergence rate as the standard algorithms. In other words, the impact of network unreliablity may be seen as negligible.

First let us make some necessary assumptions, that are commonly used in analyzing stochastic optimization algorithms.

\begin{assumption}
\label{ass:global}
We make the following commonly used assumptions:
\begin{enumerate}
  \item \textbf{Lipschitzian gradient:} All function $f_i(\cdot)$'s are with $L$-Lipschitzian gradients, which means
  \begin{align*}
  \|\nabla f_i(\bm{x}) - \nabla f_i(\bm{y})\| \leq L\|\bm{x} - \bm{y}\|
  \end{align*}
 \item \textbf{Bounded variance:} Assume the variance of stochastic gradient
\begin{align*}
    \mathbb{E}_{\xi\sim \mathcal{D}_i} \left\| \nabla F_i (\bm{x}; \xi) - \nabla f_i (\bm{x})\right\|^2 \leqslant & \sigma^2, \quad \forall i, \forall \bm{x},\\
     {1\over n}\sum_{i=1}^n\left\| \nabla f_i (\bm{x})-\nabla f (\bm{x})\right\|^2 \leqslant & \zeta^2, \quad \forall i, \forall \bm{x},
\end{align*}
  is bounded for any $x$ in each worker $i$.
  \item \textbf{Start from 0:} We assume $X_1 = 0$ for simplicity w.l.o.g.
  \end{enumerate}
\end{assumption}


Next we are ready to show our main result.
\begin{theorem}[Convergence of Algorithm~\ref{alg1}] \label{theo:1}
Under Assumption~\ref{ass:global}, choosing $\gamma$ in Algorithm~\ref{alg1} to be small enough that satisfies $1- \frac{6L^2\gamma^2}{(1-\sqrt{\beta})^2}>0$, we have the following convergence rate for Algorithm~\ref{alg1}
\begin{align*}
\frac{1}{T}\sum_{t=1}^T\left( \mathbb{E}\left\|\nabla f(\overline{\bm{x}}_{t})\right\|^2 + (1-L\gamma)\mathbb{E}\left\|\overline{\nabla} f(X_t)\right\|^2 \right)
\leq &\frac{2f(\bm{0}) -2f(\bm{x}^*)}{\gamma T}  + \frac{\gamma L\sigma^2 }{n} + 4\alpha_2 L\gamma(\sigma^2 + 3\zeta^2)\\
 &+ \frac{\left(2\alpha_2 L\gamma + L^2\gamma^2 + 12\alpha_2 L^3 \gamma^3\right)\sigma^2 C_1}{(1-\sqrt{\beta})^2}\\
 &+ \frac{3\left(2\alpha_2 L\gamma + L^2\gamma^2 + 12\alpha_2 L^3 \gamma^3\right)\zeta^2 C_1}{(1-\sqrt{\beta})^2}
,\numberthis\label{theo1eq}
\end{align*}
where
\begin{alignat*}{2}
\nabla f(\overline{\bm{x}}_t) = & f\left(\frac{1}{n}\sum_{i=1}^n\bm{x}_t^{(i)}\right) ,\\
  \overline{\nabla} f(X_t) = &\sum_{i=1}^n\nabla f_i\left(\bm{x}_t^{(i)}\right),\\
 \beta =& \alpha_1 - \alpha_2, \\
  C_1 =& \left(1- \frac{6L^2\gamma^2}{(1-\sqrt{\beta})^2} \right)^{-1},
\end{alignat*}
and $\alpha_1$, $\alpha_2$ follows the definition in \eqref{def_alpha1} and \eqref{def_alpha2}.
\end{theorem}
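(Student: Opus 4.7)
The plan is to follow the standard ``descent plus consensus'' blueprint for decentralized/distributed SGD, but with the communication operators $W_t^{(j)}$ replaced by random (not necessarily doubly stochastic) matrices whose first and second moments are characterized by $\alpha_1$, $\alpha_2$, and $\beta = \alpha_1-\alpha_2$. First I would rewrite Algorithm~\ref{alg1} in matrix form, collecting $X_t = (\bm{x}_t^{(1)},\ldots,\bm{x}_t^{(n)})$, $V_t = X_t-\gamma G_t$ (with $G_t$ the stochastic-gradient matrix), and the per-block update $X_{t+1}^{(j)} = V_t^{(j)} W_t^{(j)}$ from \eqref{eq: updatingrule}. Since $\mathbb{E}[W_t^{(j)}]A_n = A_n$ forces $\mathbb{E}[W_t^{(j)}]\bm{1}_n = \bm{1}_n$, the averaged iterate $\overline{\bm{x}}_t = X_t\bm{1}_n/n$ satisfies, in conditional expectation, a plain SGD step with stochastic gradient $\tfrac{1}{n}\sum_i \nabla F_i(\bm{x}_t^{(i)};\bm{\xi}_t^{(i)})$. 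Applying $L$-smoothness to $f(\overline{\bm{x}}_{t+1})$ and taking expectations gives a descent inequality whose linear term produces $\|\nabla f(\overline{\bm{x}}_t)\|^2$ together with a discrepancy controlled by the consensus error through $L$-Lipschitzness of the $\nabla f_i$.

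The quadratic term $\tfrac{L}{2}\|\overline{\bm{x}}_{t+1}-\overline{\bm{x}}_t\|^2$ is where the constant $\alpha_2$ enters. Conditioning on $V_t^{(j)}$, the identity \eqref{def_alpha2} gives $\mathbb{E}[V_t^{(j)} W_t^{(j)} A_n (W_t^{(j)})^{\top} V_t^{(j)\top}] = \alpha_2 V_t^{(j)} V_t^{(j)\top} + (1-\alpha_2)V_t^{(j)} A_n V_t^{(j)\top}$, whose trace separates into the ``ideal'' averaging contribution plus an extra $\alpha_2$-weighted row-variance piece. Bounding the latter by $\sigma^2$ and $\zeta^2$ via the bounded-variance assumption yields the $\alpha_2 L\gamma(\sigma^2+3\zeta^2)$ term in the theorem, together with a standard $\gamma L\sigma^2/n$ SGD-noise term.

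The main technical step is bounding the consensus error $U_t := \mathbb{E}\|X_t(I-A_n)\|_F^2$. The key observation is that subtracting \eqref{def_alpha2} from \eqref{def_alpha1} gives $\mathbb{E}[W_t^{(j)}(I-A_n)(W_t^{(j)})^{\top}] = \beta(I-A_n)$; combined with the idempotence $(I-A_n)^2=I-A_n$, this yields the clean one-step contraction $\mathbb{E}\|V_t^{(j)}W_t^{(j)}(I-A_n)\|_F^2 = \beta \|V_t^{(j)}(I-A_n)\|_F^2$ (summed across blocks). Decomposing $V_t = X_t-\gamma G_t$ and applying Young's inequality with parameter chosen so that the contraction factor is $\sqrt{\beta}$ gives a recursion of the form $U_{t+1}\leq \sqrt{\beta}\,U_t + \tfrac{\sqrt{\beta}}{1-\sqrt{\beta}}\gamma^2\mathbb{E}\|G_t(I-A_n)\|_F^2$. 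Unrolling this recursion and swapping the double sum produces the characteristic $(1-\sqrt{\beta})^{-2}$ factor. Bounding $\mathbb{E}\|G_t(I-A_n)\|_F^2$ via bounded variance, gradient heterogeneity, and $L$-smoothness expresses it in terms of $\sigma^2$, $\zeta^2$, and $\overline{\nabla}f(X_t)$, which closes the loop.

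The last step is to substitute the consensus bound into the descent inequality, telescope over $t=1,\ldots,T$, and rearrange. The condition $1-6L^2\gamma^2/(1-\sqrt{\beta})^2 > 0$ is exactly what is needed to absorb the consensus-induced gradient term back into $(1-L\gamma)\mathbb{E}\|\overline{\nabla}f(X_t)\|^2$, producing $C_1 = (1-6L^2\gamma^2/(1-\sqrt{\beta})^2)^{-1}$. The hard part is Step~3: unlike fixed-topology decentralized SGD, $W_t^{(j)}$ is neither symmetric nor doubly stochastic, so the standard ``mixing rate equals second-largest eigenvalue'' argument fails, and everything must flow through the second-moment identities of Lemmas~\ref{lem: EW}, \ref{lem: EWW}, \ref{lem: EWAW}; getting the split between $\alpha_2$ (a persistent noise floor) and $\beta$ (the geometric-mixing rate) right is what produces a bound sharp enough that the $p$-dependence vanishes as $n$ grows.
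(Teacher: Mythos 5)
Your proposal is correct and follows essentially the same route as the paper: the descent lemma on the averaged iterate, $\alpha_2$ entering through $\mathbb{E}[W_t^{(j)}A_n(W_t^{(j)})^{\top}]$ in the second moment of the averaged step, the contraction identity $\mathbb{E}[W_t^{(j)}(I_n-A_n)(W_t^{(j)})^{\top}]=\beta(I_n-A_n)$ driving the consensus bound with the $(1-\sqrt{\beta})^{-2}$ factor, and the step-size condition yielding $C_1$ via a self-bounding inequality on the accumulated consensus error. The only cosmetic difference is that you derive the consensus bound from a one-step recursion $U_{t+1}\le\sqrt{\beta}\,U_t+\cdots$ via Young's inequality, whereas the paper unrolls $X_{t+1}^{(j)}$ back to $X_1=0$ and handles the cross terms with scale factors $k_{s,s'}=\beta^{(s-s')/2}$; the two are equivalent up to constants.
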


To make the result more clear, we appropriately choose the learning rate as follows:
\begin{corollary}\label{coro1}
Choose $\gamma = \frac{1-\sqrt{\beta}}{6L + 3(\sigma+\zeta)\sqrt{\alpha_2 T} + \frac{\sigma\sqrt{T}}{\sqrt{n}}}$ in Algorithm~\ref{alg1}, under Assumption~\ref{ass:global}, we have the follow convergence rate for Algorithm~\ref{alg1}
\begin{align*}
\frac{1}{T}\sum_{t=1}^T\mathbb{E}\left\|\nabla f(\overline{\bm{x}}_{t})\right\|^2 
\lesssim &\frac{(\sigma + \zeta)\left(1 + \sqrt{n\alpha_2}\right)}{(1-\sqrt{\beta})\sqrt{nT}} 
 + \frac{1}{T}
 + \frac{n(\sigma^2 + \zeta^2)}{(1 + n\alpha_2  )\sigma^2 T + n\alpha_2 T \zeta^2},
\end{align*}
where $\beta$, $\alpha_1$, $\alpha_2$, $\nabla f(\overline{\bm{x}})$ follow to the definitions in Theorem~\ref{theo:1}, and we treat $f(0)$,$f^*$, and $L$ to be constants.
\end{corollary}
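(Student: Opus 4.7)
The plan is to substitute the specified learning rate into Theorem~\ref{theo:1} and bound each of the five error terms on the right-hand side separately. First I would verify admissibility: by construction $\gamma \leq \frac{1-\sqrt{\beta}}{6L}$, so $\frac{6L^2\gamma^2}{(1-\sqrt{\beta})^2} \leq \frac{1}{6}$ and hence $C_1 \leq \tfrac{6}{5}$ may be absorbed into the $\lesssim$ constant; the same inequality yields $L\gamma \leq \tfrac{1}{6} < 1$, which lets me drop the non-negative $(1-L\gamma)\mathbb{E}\|\overline{\nabla} f(X_t)\|^2$ term from the left-hand side of \eqref{theo1eq}.

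Next I would exploit the three-piece structure of the denominator of $\gamma$. Because each additive piece is non-negative, $\gamma$ separately satisfies
$$\gamma \leq \tfrac{1-\sqrt{\beta}}{6L}, \qquad \gamma \leq \tfrac{1-\sqrt{\beta}}{3(\sigma+\zeta)\sqrt{\alpha_2 T}}, \qquad \gamma \leq \tfrac{(1-\sqrt{\beta})\sqrt{n}}{\sigma\sqrt{T}}.$$
For the optimization-gap term $\frac{2(f(\bm{0})-f(\bm{x}^*))}{\gamma T}$ I would use the identity $\frac{1}{\gamma} = \frac{6L + 3(\sigma+\zeta)\sqrt{\alpha_2 T} + \sigma\sqrt{T/n}}{1-\sqrt{\beta}}$ directly, splitting into three pieces that land on $\frac{1}{T}$, $\frac{(\sigma+\zeta)\sqrt{\alpha_2}}{(1-\sqrt{\beta})\sqrt{T}}$, and $\frac{\sigma}{(1-\sqrt{\beta})\sqrt{nT}}$ respectively; together these yield the first two summands of the target bound. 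The noise term $\frac{\gamma L\sigma^2}{n}$ is then handled by the third upper bound, and $4\alpha_2 L\gamma(\sigma^2+3\zeta^2)$ by the second; both are absorbed into $\frac{(\sigma+\zeta)(1+\sqrt{n\alpha_2})}{(1-\sqrt{\beta})\sqrt{nT}}$.

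The only remaining work is to dispatch the $L^2\gamma^2$ contributions, which produce the somewhat unusual third summand of the corollary. For non-negative $a,b,c$ one has $(a+b+c)^2 \geq a^2 + b^2 + c^2$, so squaring the denominator of $\gamma$ gives
$$\frac{L^2\gamma^2\sigma^2}{(1-\sqrt{\beta})^2} \;\lesssim\; \frac{\sigma^2}{(\sigma+\zeta)^2\alpha_2 T + \sigma^2 T/n}.$$
Multiplying numerator and denominator by $n$ and using $(\sigma+\zeta)^2 \geq \sigma^2+\zeta^2$ in the denominator together with $\sigma^2 \leq \sigma^2+\zeta^2$ in the numerator matches exactly $\frac{n(\sigma^2+\zeta^2)}{(1+n\alpha_2)\sigma^2 T + n\alpha_2 T \zeta^2}$, and the $L^2\gamma^2\zeta^2$ term is handled identically. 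The cubic $\alpha_2 L^3\gamma^3$ pieces inside the $C_1$ bracket are dominated by the linear $\alpha_2 L\gamma$ pieces because $L\gamma$ is bounded by a constant, and the $\alpha_2 L\gamma \sigma^2$ and $\alpha_2 L\gamma\zeta^2$ contributions inside that bracket are dispatched just like the earlier $4\alpha_2 L\gamma(\sigma^2+3\zeta^2)$ term.

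The main obstacle is bookkeeping rather than any deep estimate: one must track five error terms, each depending on up to three separate upper bounds for $\gamma$, and verify that every contribution fits under the three summands on the right-hand side. Treating $f(\bm{0})-f(\bm{x}^*)$ and $L$ as constants (as the statement permits) collapses several leading factors, while the elementary inequality $(a+b+c)^2 \geq a^2+b^2+c^2$ is the one algebraic trick that turns the $\gamma^2$ terms into the final summand of the stated bound.
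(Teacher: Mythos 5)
Your proposal is correct and follows essentially the same route as the paper: substitute the prescribed $\gamma$ into \eqref{theo1eq}, use the three separate lower bounds on the denominator of $\gamma$ (equivalently the identity for $1/\gamma$) to control each error term, bound $C_1$ and $L\gamma$ by constants, and use $(a+b+c)^2\ge a^2+b^2+c^2$ together with $(\sigma+\zeta)^2\ge\sigma^2+\zeta^2$ to produce the final summand. The only (shared) imprecision is that the $6L/(\gamma T)$ piece really yields $1/\bigl((1-\sqrt{\beta})T\bigr)$ rather than $1/T$, but this matches the paper's own statement and derivation, so your argument is faithful to the intended result.
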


We discuss our theoretical results below
\begin{itemize}
\item ({\bf Comparison with centralized SGD and decentralized SGD}) The dominant term in the convergence rate is $O(1/\sqrt{nT})$ (here we use $\alpha_2 = \mathcal{O}(p(1-p)/n)$ and $\beta= \mathcal{O}(p)$ which is shown by Lemma~\ref{lem: EWAW} in Supplement), which is consistent with the rate for centralized SGD and decentralized SGD \cite{lian2017can}.
\item ({\bf Linear Speedup}) Since the the leading term of convergence rate for $\frac{1}{T}\sum_{t=1}^T\mathbb{E}\left\|\nabla f(\overline{\bm{x}}_{t})\right\|^2$ is $\mathcal{O}(1/\sqrt{nT})$. It suggests that our algorithm admits the linear speedup property with respect to the number of workers $n$.

\item ({\bf Better performance for larger networks}) Fixing the package drop rate $p$ (implicitly included in \textbf{Section ~\ref{secD}}), the convergence rate under a larger network (increasing $n$) would be superior, because the leading terms' dependence of the $\alpha_2 = \mathcal{O}(p(1-p)/n)$. This indicates that the affection of the package drop ratio diminishes, as we increase the number of workers and parameter servers.

\item ({\bf Why only converges to a ball of a critical point}) This is because we use a constant learning rate, the algorithm could only converges to a ball centered at a critical point. This is a common choice to make the statement simpler, just like many other analysis for SGD. Our proved convergence rate is totally consistent with SGD’s rate, and could converge (in the same rate) to a critical point by choosing a decayed learning rate such as $O(1/\sqrt{T})$ like SGD. 

\end{itemize}


\section{Experiments: Convergence of RPS} \label{sec:experiments}

We now validate empirically the scalability and accuracy 
of the RPS algorithm, given reasonable message arrival rates. 

\subsection{Experimental Setup}
\paragraph{Datasets and models} We evaluate our algorithm on two state of the art machine learning tasks: (1) image classification and (2) natural language understanding (NLU). We train ResNet~\cite{he2016deep} with different number of layers on CIFAR-10~\cite{krizhevsky2009learning} for classifying images. We perform the NLU task on the Air travel information system (ATIS) corpus on a one layer LSTM network. 

\paragraph{Implementation} We simulate packet losses by adapting the latest version 2.5 of the Microsoft Cognitive Toolkit~\cite{seide2016cntk}. We implement the RPS algorithm using MPI. During training, we use a local batch size of 32 samples per worker for image classification. We adjust the learning rate by applying a linear scaling rule~\cite{goyal2017accurate} and decay of 10 percent after 80 and 120 epochs, respectively. 
To achieve the best possible convergence, we apply a gradual warmup strategy~\cite{goyal2017accurate} during the first 5 epochs. We deliberately do not use any regularization or momentum during the experiments in order to be consistent with the described algorithm and proof. 
The NLU experiments are conducted with the default parameters given by the CNTK examples, with scaling the learning rate accordingly, and omit momentum and regularization terms on purpose.
The training of the models is executed on 16 NVIDIA TITAN Xp GPUs. The workers are connected by Gigabit Ethernet. We use each GPU as a worker.
We describe the results in terms of training loss convergence, although the validation trends are similar. 

%

\begin{figure*}[tbp]
	\centering
	\subfloat[ResNet20 - CIFAR10]{
\includegraphics[width=0.33\textwidth,height=\textheight,keepaspectratio]{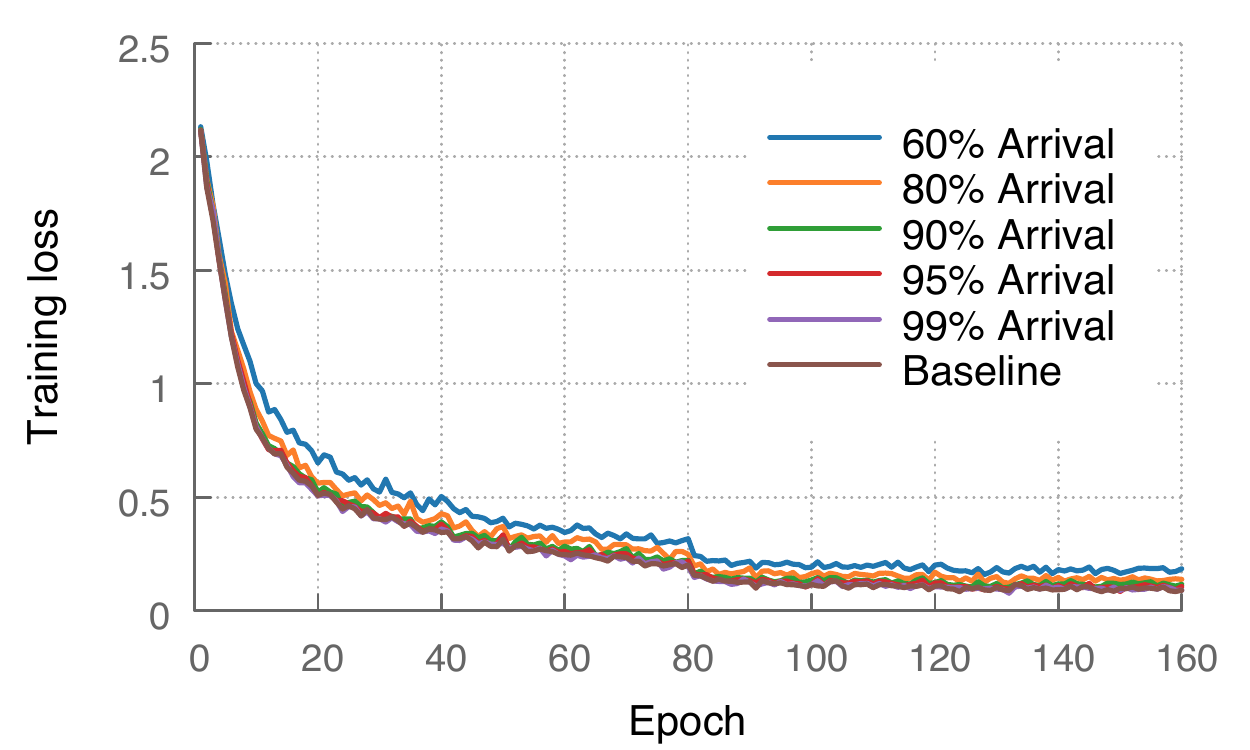}
	}
	\subfloat[ResNet110 - CIFAR10]{
\includegraphics[width=0.33\textwidth,height=\textheight,keepaspectratio]{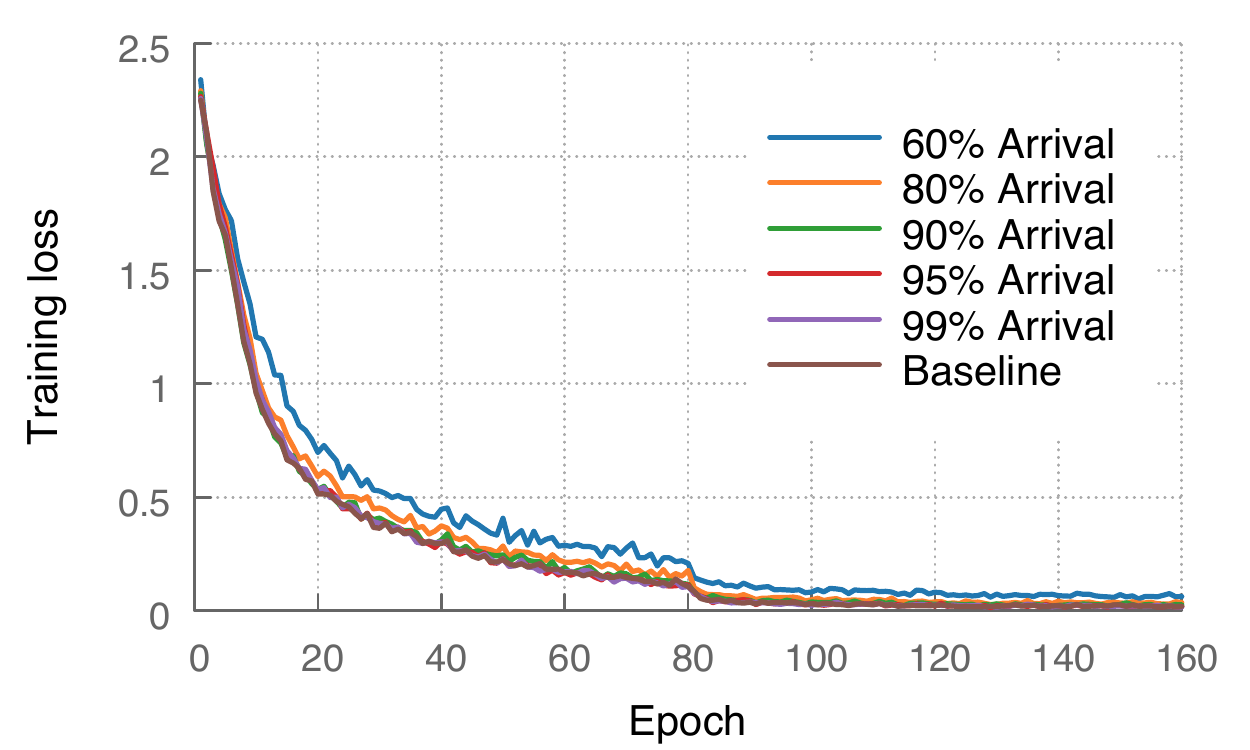}
	}
	\subfloat[LSTM - ATIS]{
\includegraphics[width=0.33\textwidth,height=\textheight,keepaspectratio]{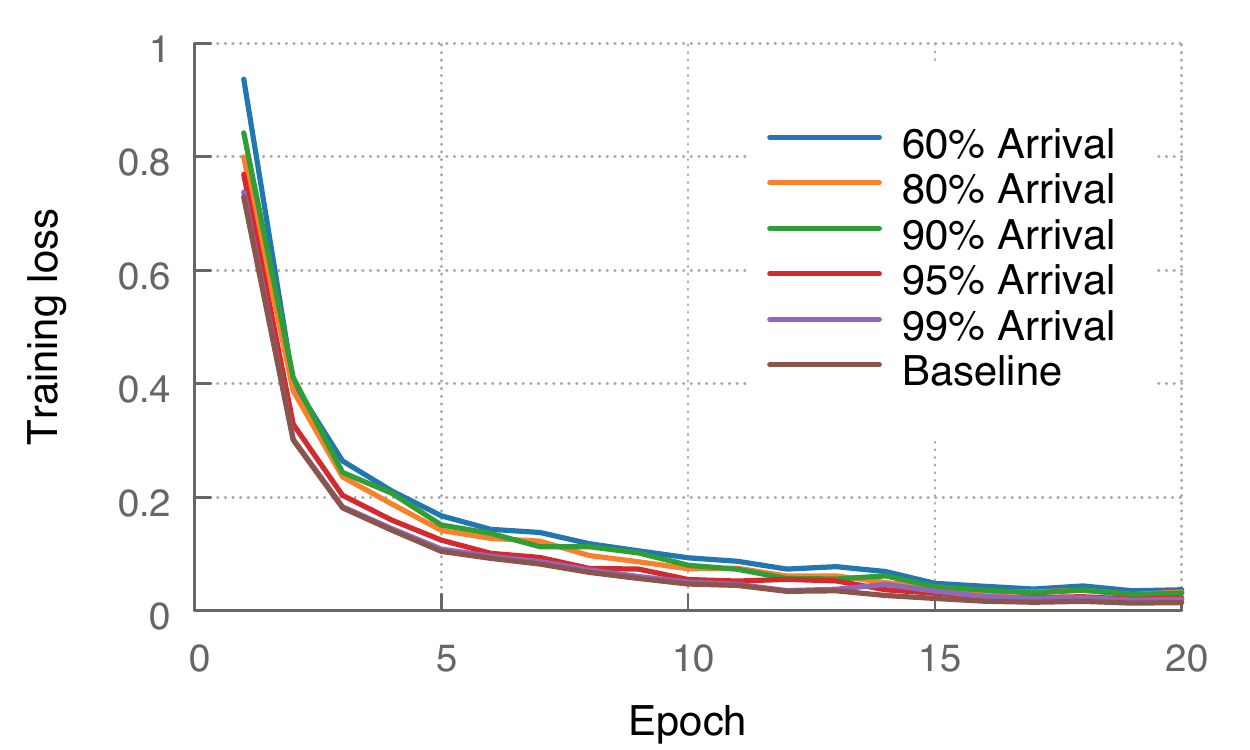}
	}
    \caption{Convergence of RPS on different datasets.}
    \label{fig:model}
\end{figure*}

\paragraph{Convergence of Image Classification} We perform convergence tests using the analyzed algorithm, model averaging SGD, on both ResNet110 and ResNet20 with CIFAR-10. Figure~\ref{fig:model}(a,b) shows
the result. We vary probabilities for each packet being correctly delivered at each worker between 80\%, 90\%, 95\% and 99\%. The baseline is 100\% message delivery rate. The baseline achieves a training loss of 0.02 using ResNet110 and 0.09 for ResNet20. Dropping 1\% doesn't increase the training loss achieved after 160 epochs. For 5\% the training loss is identical on ResNet110 and increased by 0.02 on ResNet20. Having a probability of 90\% of arrival leads to a training loss of 0.03 for ResNet110 and 0.11 for ResNet20 respectively.

\paragraph{Convergence of NLU} We perform full convergence tests for the NLU task on the ATIS corpus and a single layer LSTM.
Figure~\ref{fig:model}(c) shows the result. The baseline achieves a training loss of 0.01. Dropping 1, 5 or 10 percent of the communicated partial vectors result in an increase of 0.01 in training loss.


\begin{figure*}[tbp]
	\centering
	\subfloat[ResNet20 - CIFAR10]{
\includegraphics[width=0.33\textwidth,height=\textheight,keepaspectratio]{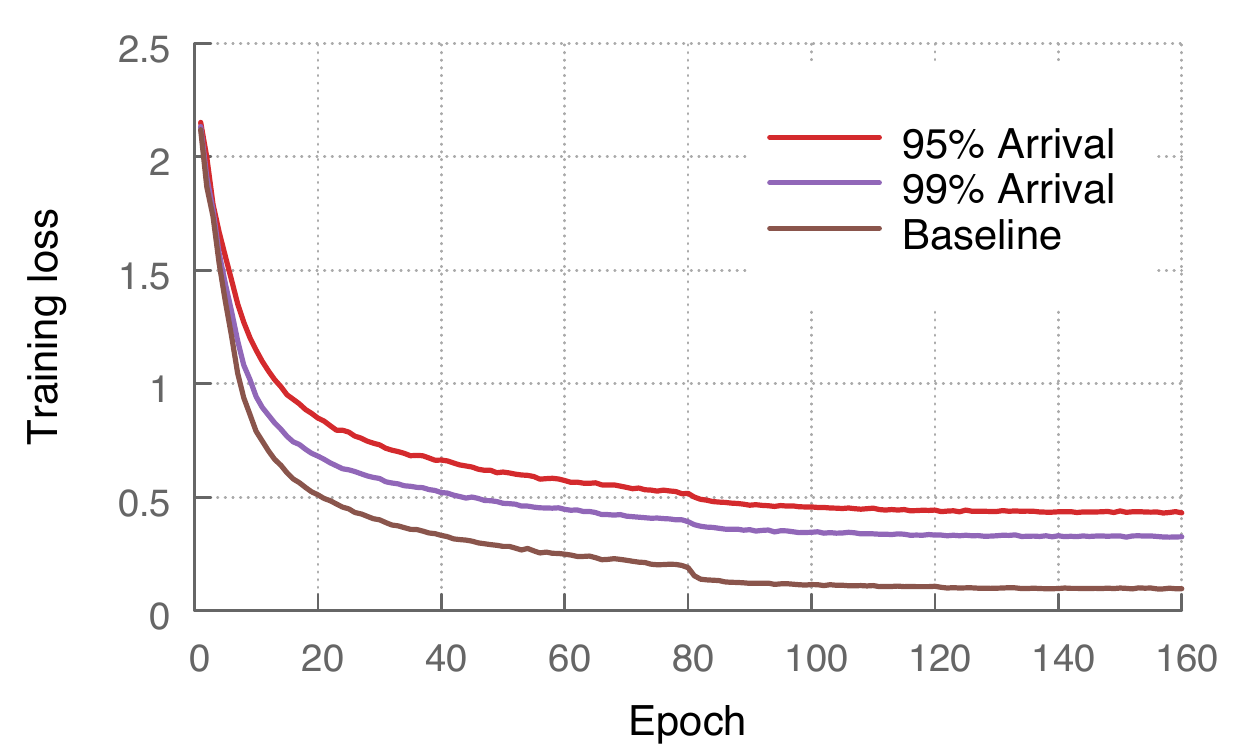}
	}
	\subfloat[ResNet110 - CIFAR10]{
\includegraphics[width=0.33\textwidth,height=\textheight,keepaspectratio]{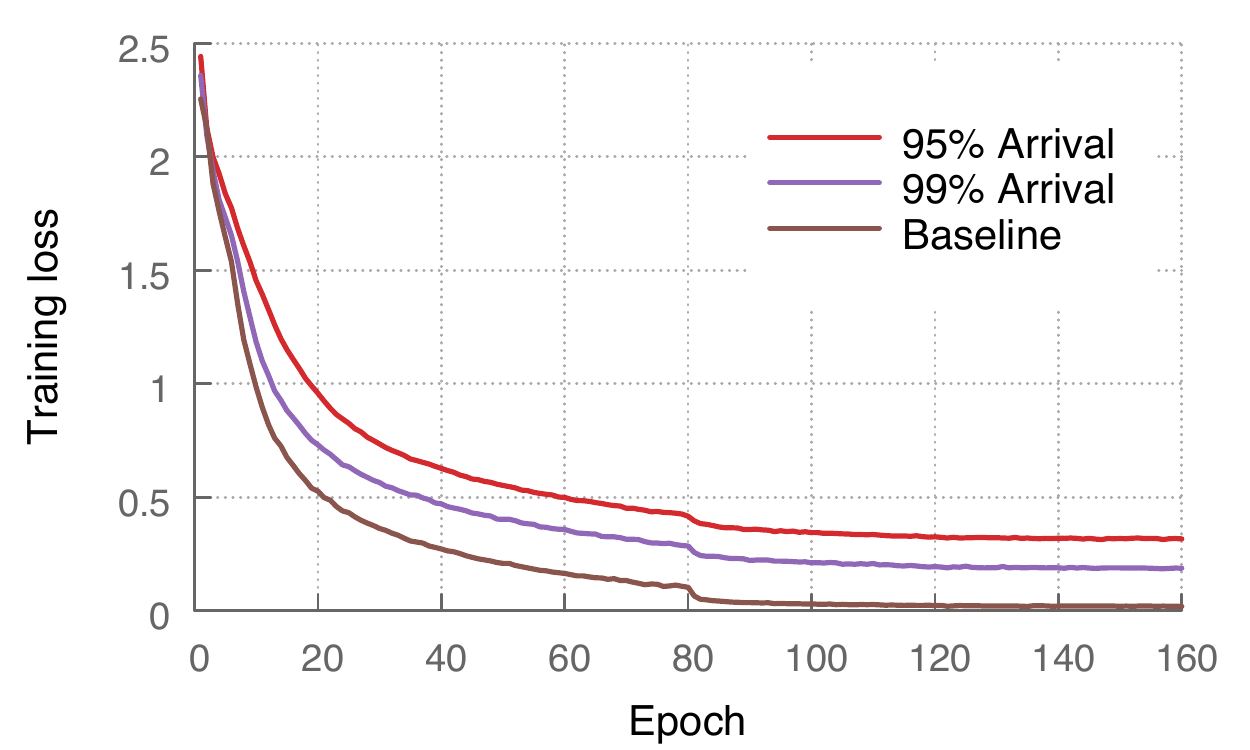}
	}
	\subfloat[LSTM - ATIS]{
\includegraphics[width=0.33\textwidth,height=\textheight,keepaspectratio]{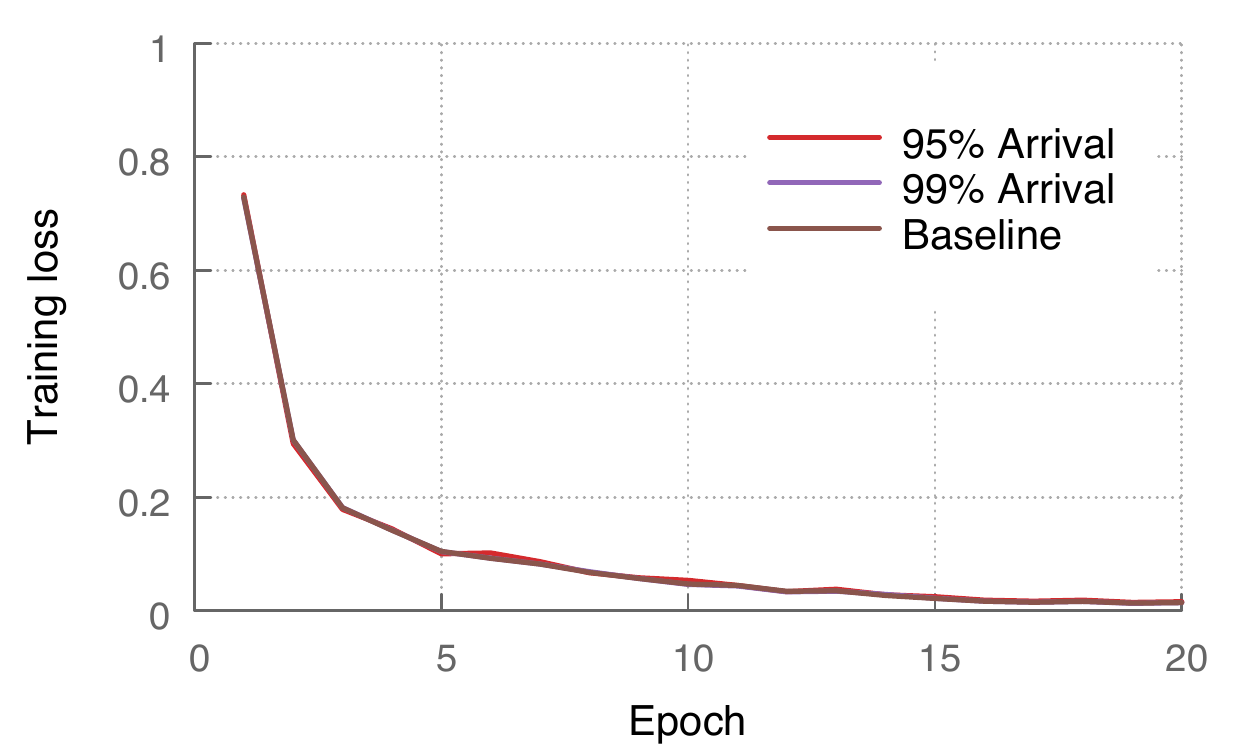}
	}
    \caption{Why RPS? The Behavior of Standard SGD in the Presence of Message Drop.}
    \label{fig:grad}
\end{figure*}

\paragraph{Comparison with Gradient Averaging} We conduct experiments with identical setup and a probability of 99 percent of arrival using a \emph{gradient} averaging methods, instead of model averaging. When running data distributed SGD, gradient averaging is the most widely used technique in practice, also implemented by default in most deep learning frameworks\cite{abadi2016tensorflow, seide2016cntk}. As expected, the baseline (all the transmissions are successful) convergences to the same training loss as its model averaging counterpart, when omitting momentum and regularization terms. As seen in figures~\ref{fig:grad}(a,b), having a loss in communication of even 1 percentage results in worse convergence in terms of accuracy for both ResNet architectures on CIFAR-10. The reason is that the error of package drop will accumulate over iterations but never decay, because the model is the sum of all early gradients, so the model never converges to the optimal one. Nevertheless, this insight suggests that one should favor a model averaging algorithm over gradient averaging, if the underlying network connection is unreliable.

\section{Case study: Speeding up Colocated Applications} \label{sec:case}

\begin{figure}[h]
\centering
\begin{minipage}{.4\textwidth}
  \centering
  \includegraphics[width=1.0\linewidth]{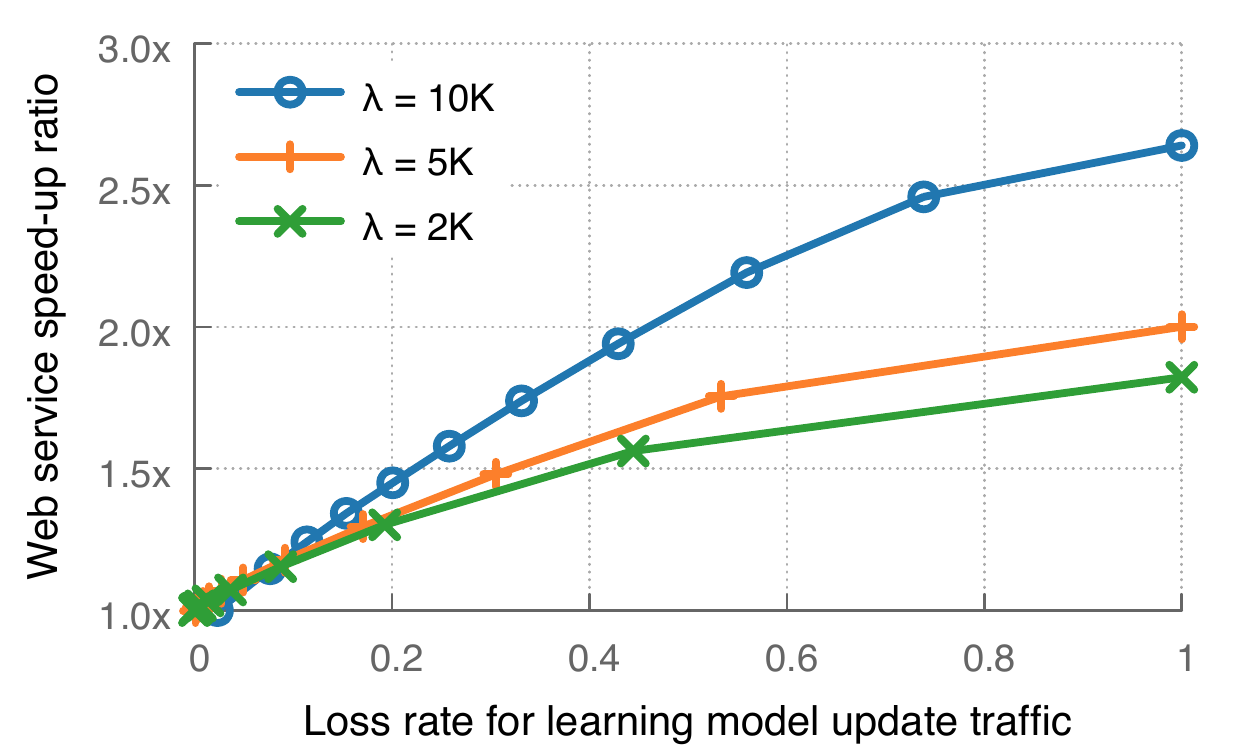}
  \captionof{figure}{Allowing an increasing rate of losses for model updates speeds up the Web service.}
  \label{fig:drop-rate-to-speed-up}
\end{minipage}%
\hspace{0.05\textwidth}%
\begin{minipage}{.4\textwidth}
  \centering
  \includegraphics[width=1.0\linewidth]{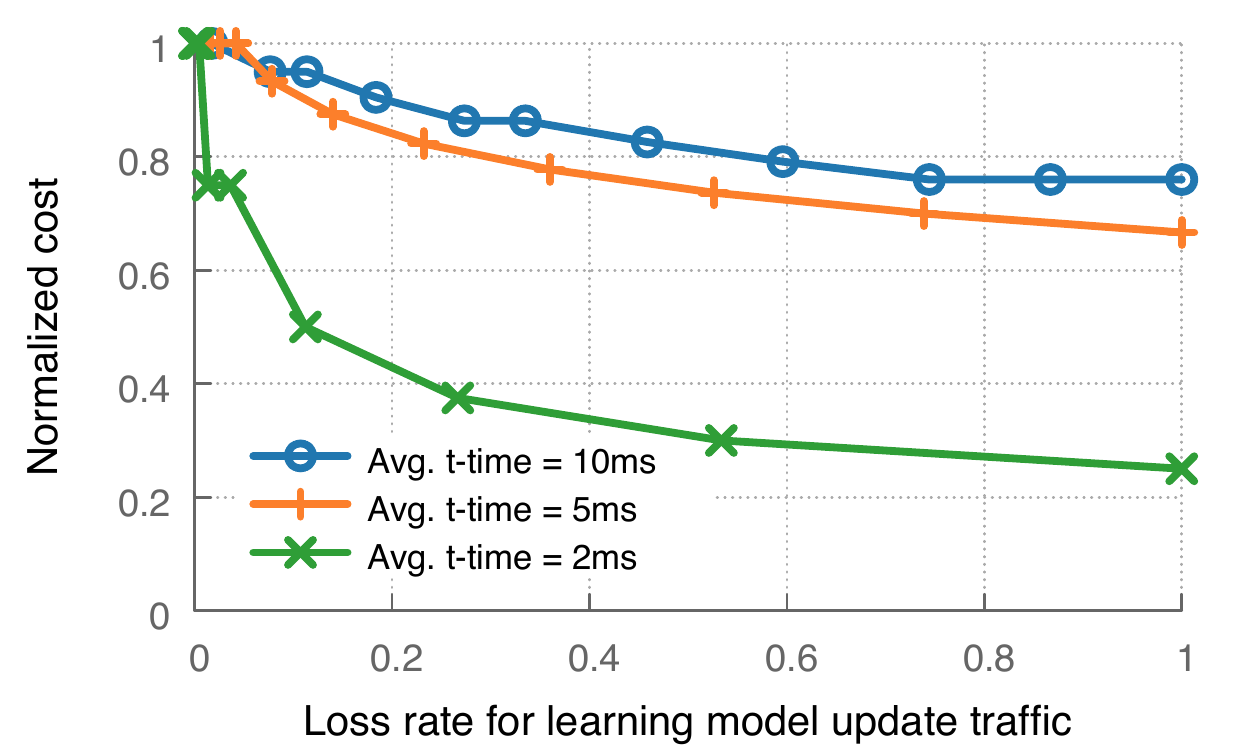}
  \captionof{figure}{Allowing more losses for model updates reduces the cost for the Web service.}
  \label{fig:drop-rate-to-normalized-cost}
\end{minipage}%
\end{figure}

Our results on the resilience of distributed learning to losses of model updates open up an interesting use case. That model updates can be lost (within some tolerance) without the deterioration of model convergence implies that model updates transmitted over the physical network can be de-prioritized compared to other more ``inflexible,'' delay-sensitive traffic, such as for Web services. 
Thus, we can colocate other applications with the training workloads, and reduce infrastructure costs for running them. Equivalently, workloads that are colocated with learning workers can benefit from prioritized network traffic (at the expense of some model update losses), and thus achieve lower latency.


To demonstrate this in practice, we perform a packet simulation over 16 servers, each connected with a $1$~Gbps link to a network switch. 
Over this network of $16$ servers, we run two workloads: (a) replaying traces from the machine learning process of ResNet110 on CIFAR-10 (which translates to a load of 2.4 Gbps) which is sent \emph{unreliably}, and (b) a simple emulated Web service running on all $16$ servers. Web services often produce significant background traffic between servers within the data center, consisting typically of small messages fetching distributed pieces of content to compose a response (e.g., a Google query response potentially consists of advertisements, search results, and images). We emulate this intra data center traffic for the Web service as all-to-all traffic between these servers, with small messages of $100$~KB (a reasonable size for such services) sent reliably between these servers. The inter-arrival time for these messages follows a Poisson process, parametrized by the expected message rate, $\lambda$ (aggregated across the $16$ servers).  

Different degrees of prioritization of the Web service traffic over learning traffic result in different degrees of loss in learning updates transmitted over the network. As the Web service is prioritized to a greater extent, its performance improves -- its message exchanges take less time; we refer to this reduction in (average) completion time for these messages as a speed-up. Note that even small speedups of $10\%$ are significant for such services; for example, Google actively pursues minimizing its Web services' response latency. An alternative method of quantifying the benefit for the colocated Web service is to measure how many additional messages the Web service can send, while maintaining a fixed average completion time. This translates to running more Web service queries and achieving more throughput over the same infrastructure, thus reducing cost per request / message.



Fig.~\ref{fig:drop-rate-to-speed-up} and Fig. \ref{fig:drop-rate-to-normalized-cost} show results for the above described Web service speedup and cost reduction respectively. In Fig.~\ref{fig:drop-rate-to-speed-up}, the arrival rate of Web service messages is fixed ($\lambda = \{2000, 5000, 10000\}$ per second). As the network prioritizes the Web service more and more over learning update traffic, more learning traffic suffers losses (on the $x$-axis), but performance for the Web service improves. With just $10\%$ losses for learning updates, the Web service can be sped up by more than $20\%$ (\emph{i.e.,} $1.2\times$). 

In Fig.~\ref{fig:drop-rate-to-normalized-cost}, we set a target average transmission time ($2$, $5$, or $10$~ms) for the Web service's messages, and increase the message arrival rate, $\lambda$, thus causing more and more losses for learning updates on the $x$-axis. But accommodating higher $\lambda$ over the same infrastructure translates to a lower cost of running the Web service (with this reduction shown on the $y$-axis).

Thus, tolerating small amounts of loss in model update traffic can result in significant benefits for colocated services, while not deteriorating convergence.

\section{Conclusion} \label{sec:conclusion}

In this paper, we present a novel analysis for a general model of distributed machine learning, 
under a realistic unreliable communication model.
We present a novel theoretical analysis for such a scenario, and
evaluated it while training neural networks on both image and natural language datasets. 
We also provided a case study of application collocation, to illustrate the potential benefit that can be provided
by allowing learning algorithms to take advantage of unreliable communication channels.

\bibliography{references}
\bibliographystyle{abbrvnat}

\newpage
\appendix
\newpage
\onecolumn
\begin{center}
{\Huge \bf
Supplemental Materials
}
\end{center}

\section {Notations}
In order to unify notations, we define the following notations about gradient:
\begin{align*}
\bm{g}^{(i)}(\bm{x}^{(i)}_t;\bm{\xi}^{(i)}_t) = \nabla F_i(\bm{x}^{(i)}_t;\bm{\xi}^{(i)}_t)
\end{align*}
We define $I_n$ as the $n\times n$ identity matrix, $\mathbf{1}_n$ as $(1,1,\cdots, 1)^\top$ and $A_n$ as $\frac{1}{n}\mathbf{1}\mathbf{1}^{\top}$. Also, we suppose the packet drop rate is $p$.

The following equation is used frequently:
\begin{equation} \label{eq: T}
	\Tr (XA_nX^{\top}) = \Tr\Big(X\frac{\mathbf{1}\mathbf{1}^{\top}}{n}X^{\top}\Big) = n\Tr \bigg(\Big( X\frac{\mathbf{1}}{n}\Big)^{\top}X\frac{\mathbf{1}}{n}\bigg) = n\Big( X\frac{\mathbf{1}}{n}\Big)^{\top}X\frac{\mathbf{1}}{n} = n\left\|X\frac{\mathbf{1}}{n}\right\|^2
\end{equation}

\subsection{Matrix Notations}
We aggregate vectors into matrix, and using matrix to simplify the proof.
\begin{align*}
	X_t &:= \left(\bm{x}_t^{(1)}, \bm{x}_t^{(2)}, \cdots, \bm{x}_t^{(n)}\right)\\
	V_t &:= \left(\bm{v}_t^{(1)}, \bm{v}_t^{(2)}, \cdots, \bm{v}_t^{(n)}\right)\\
	\Xi_t &:= \big(\bm{\xi}_t^{(1)}, \bm{\xi}_t^{(2)}, \cdots, \bm{\xi}_t^{(n)}\big)\\
	G(X_t;\Xi_t)&:= \left(\bm{g}^{(1)}(\bm{x}_t^{(1)};\bm{\xi}_t^{(1)}),\cdots,\bm{g}^{(n)}(\bm{x}_t^{(n)};\bm{\xi}_t^{(n)}) \right)
\end{align*}

\subsection{Averaged Notations}
We define averaged vectors as follows:
\begin{align*}
	\overline{\bm{x}}_t &:= \frac{1}{n}\sum_{i=1}^n \bm{x}_t^{(i)} \numberthis\label{re: 1}\\
	\overline{\bm{v}}_t &:= \frac{1}{n}\sum_{i=1}^{n} \bm{v}_t \numberthis\label{re: 7}\\
	\overline{\bm{g}}(X_t;\Xi_t) &:= \sum_{i=1}^n \bm{g}^{(i)}(\bm{x}^{(i)}_t;\bm{\xi}^{(i)}_t) \numberthis\label{re: 8}\\
	\overline{\nabla}f(X_t)&:= \frac{1}{n}\sum\limits_{i=1}^n f_i(\bm{x}_t^{(i)})\\
	\Delta \overline{\bm{x}}_t &:= \overline{\bm{x}}_{t+1} - \overline{\bm{x}}_t
\end{align*}

\subsection{Block Notations}
Remember in (\ref{eq: dividev}) and (\ref{eq: dividex}), we have divided models in blocks:
\begin{align*}
\bm{v}_t^{(i)} = &\left(\left(\bm{v}_t^{(i,1)}\right)^{\top}, \left(\bm{v}_t^{(i,2)}\right)^{\top}, \cdots, \left(\bm{v}_t^{(i,n)}\right)^{\top}\right)^{\top}\\
\bm{x}_{t}^{(i)} = &\left(\left(\bm{x}_{t}^{(i,1)}\right)^{\top}, \left(\bm{x}_{t}^{(i,2)}\right)^{\top},\cdots, \left(\bm{x}_{t}^{(i,n)}\right)^{\top}\right)^{\top}, \forall i\in[n].
\end{align*}
 We do the some division on some other quantities, see following (the dimension of each block is the same as the corresponding block in $\bm{v}_t^{(i)}$) : 
\begin{align*}
	\overline{\bm{x}}_t &= \left(\left(\overline{\bm{x}}_t^{(1)}\right)^{\top},\left(\overline{\bm{x}}_t^{(2)}\right)^{\top},\cdots, \left(\overline{\bm{x}}_t^{(n)}\right)^{\top}\right)^{\top}\\
	\overline{\bm{v}}_t &= \left(\left(\overline{\bm{v}}_t^{(1)}\right)^{\top},\left(\overline{\bm{v}}_t^{(2)}\right)^{\top},\cdots, \left(\overline{\bm{v}}_t^{(n)}\right)^{\top}\right)^{\top}\\
	\Delta \overline{\bm{x}}_t &= \left(\left(\Delta^{(1)} \overline{\bm{x}}_t\right)^{\top}, \left(\Delta_t^{(2)} \overline{\bm{x}}\right)^{\top}, \cdots, \left(\Delta_t^{(n)} \overline{\bm{x}}\right)^{\top}\right)^{\top}\\
	\bm{g}^{(i)}(\cdot; \cdot) &= \left(\left(\bm{g}^{(i,1)}(\cdot; \cdot)\right)^{\top}, \left(\bm{g}^{(i,2)}(\cdot; \cdot)\right)^{\top}, \cdots, \left(\bm{g}^{(i,n)}(\cdot; \cdot)\right)^{\top}\right)^{\top}\\
	\overline{\bm{g}}(X_t;\Xi_t) &= \left(\left(\overline{\bm{g}}^{(1)}(X_t;\Xi_t)\right)^{\top},\cdots, \left(\overline{\bm{g}}^{(n)}(X_t;\Xi_t)\right)^{\top}\right)^{\top}\\
	\nabla f_i(\bm{x}_t^{(i)}) &= \left(\left(\nabla^{(1)} f_i(\bm{x}_t^{(i)})\right)^{\top}, \left(\nabla^{(2)} f_i(\bm{x}_t^{(i)})\right)^{\top}, \cdots, \left(\nabla^{(n)} f_i(\bm{x}_t^{(i)})\right)^{\top}\right)^{\top}\\
	\overline{\nabla} f(X_t) &= \left(\left(\overline{\nabla}^{(1)} f(X_t)\right)^{\top}, \left(\overline{\nabla}^{(2)} f(X_t)\right)^{\top}, \cdots, \left(\overline{\nabla}^{(n)} f(X_t)\right)^{\top}\right)^{\top}\\
	\nabla f(\overline{\bm{x}}_t)&= \left(\left(\nabla^{(1)} f(\overline{\bm{x}}_t)\right)^{\top}, \left(\nabla^{(2)} f(\overline{\bm{x}}_t)\right)^{\top}, \cdots, \left(\nabla^{(n)} f(\overline{\bm{x}}_t)\right)^{\top}\right)^{\top}.
\end{align*}

\subsection{Aggregated Block Notations}
Now, we can define some additional notations throughout the following proof
\begin{align*}
X_t^{(j)}:=&(\bm{x}_t^{(1,j)},\bm{x}_t^{(2,j)},\cdots,\bm{x}_t^{(n,j)})\\
V_t^{(j)}:=&(\bm{v}_t^{(1,j)},\bm{v}_t^{(2,j)},\cdots,\bm{v}_t^{(n,j)})\\
G^{(j)}(X_t;\Xi_t):= &\left(\bm{g}^{(1,j)}(\bm{x}_t^{(1)};\bm{\xi}_t^{(1)}),\cdots,\bm{g}^{(n,j)}(\bm{x}_t^{(n)};\bm{\xi}_t^{(n)}) \right)
\end{align*}

\subsection{Relations between Notations}
We have the following relations between these notations:
\begin{align*}
	\overline{\bm{x}}_t^{(j)} &= X_t^{(j)}\frac{\mathbf{1}}{n} \numberthis\label{re: 2}\\
	\overline{\bm{v}}_t^{(j)} &= V_t^{(j)}\frac{\mathbf{1}}{n} \numberthis\label{re: 3}\\
	\overline{\bm{g}}^{(j)}(X_t;\Xi_t) &= G^{(j)}(X_t;\Xi_t)\frac{\mathbf{1}}{n} \numberthis\label{re: 9}\\
	A_nA_n &= A_n \numberthis\label{re: 4}\\ 
	V_t &= X_t - \gamma G(X_t; \Xi_t) \numberthis\label{re: 5}\\
	V_t^{(j)} &= X_t^{(j)} - \gamma G^{(j)}(X_t; \Xi_t) \numberthis\label{re: 6}
\end{align*}

\subsection{Expectation Notations}
There are different conditions when taking expectations in the proof, so we list these conditions below:
\paragraph{$\mathbb{E}_{t,G}[\cdot]$}
Denote taking the expectation over the \textbf{computing stochastic Gradient} procedure at $t$th iteration on condition of the history information before the $t$th iteration.
\paragraph{$\mathbb{E}_{t,P}[\cdot]$}
Denote taking the expectation over the \textbf{Package drop in sending and receiving blocks} procedure at $t$th iteration on condition of the history information before the $t$th iteration and the SGD procedure at the $t$th iteration.
\paragraph{$\mathbb{E}_t[\cdot]$}
Denote taking the expectation over all procedure during the $t$th iteration on condition of the history information before the $t$th iteration.
\paragraph{$\mathbb{E}[\cdot]$}
Denote taking the expectation over all history information.

\section{Proof to Theorem~\ref{theo:1}}
The critical part for a decentralized algorithm to be successful, is that local model on each node will converge to their average model. We summarize this critical property by the next lemma.

\begin{lemma}\label{L:xavekey}
From the updating rule (\ref{eq: updatingrule}) and Assumption \ref{ass:global}, we have
\begin{align*}
\sum_{s=1}^T\sum_{i=1}^n\mathbb E\left\|\bm{x}_{s+1}^{(i)} - \overline{\bm{x}}_{s+1}\right\|^2 \leq & \frac{2\gamma^2n\sigma^2TC_1}{(1-\sqrt{\beta})^2} + \frac{6n\zeta^2TC_1}{(1-\sqrt{\beta})^2},\numberthis\label{lemma:xavekey}
\end{align*}
where $C_1:=\left(1- \frac{6L^2\gamma^2}{(1-\sqrt{\beta})^2} \right)^{-1}$ and $\beta = \alpha_1 - \alpha_2 $.
\end{lemma}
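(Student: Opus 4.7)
My plan is to work in the matrix notation that the paper has set up and exploit the block-wise update rule $X_{t+1}^{(j)} = V_t^{(j)} W_t^{(j)}$ together with the second-moment identities for $W_t^{(j)}$. First I would observe that the quantity of interest can be rewritten as $\sum_{i=1}^n\|\bm{x}_{s+1}^{(i)}-\overline{\bm{x}}_{s+1}\|^2 = \|X_{s+1}(I_n - A_n)\|_F^2$, and then that this Frobenius norm decomposes across blocks, so the analysis reduces to controlling each $\|X_{t+1}^{(j)}(I_n-A_n)\|_F^2$ separately. Since $W_t^{(j)}$ is drawn independently of $V_t^{(j)}$, writing $\|X_{t+1}^{(j)}(I_n-A_n)\|_F^2 = \Tr\!\bigl(V_t^{(j)} W_t^{(j)}(I_n-A_n)(W_t^{(j)})^\top (V_t^{(j)})^\top\bigr)$ and taking the conditional expectation over $W_t^{(j)}$ gives, by subtracting \eqref{def_alpha2} from \eqref{def_alpha1},
\begin{equation*}
\mathbb{E}\bigl[W_t^{(j)}(I_n-A_n)(W_t^{(j)})^\top\bigr] = (\alpha_1-\alpha_2)(I_n-A_n) = \beta (I_n-A_n),
\end{equation*}
so summing over blocks yields the contraction $\mathbb{E}\|X_{t+1}(I_n-A_n)\|_F^2 \le \beta\,\mathbb{E}\|V_t(I_n-A_n)\|_F^2$.

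Next I would substitute $V_t = X_t - \gamma G(X_t;\Xi_t)$, split the gradient into its mean and its mean-zero noise $G - \nabla F(X_t)$ (where $\nabla F(X_t):=(\nabla f_1(\bm{x}_t^{(1)}),\dots,\nabla f_n(\bm{x}_t^{(n)}))$), and use independence of the sampling noise across workers to get the cross term to vanish and to bound the noise term by $n\sigma^2$ via Assumption~\ref{ass:global}. For the remaining deterministic piece $\|X_t(I_n-A_n) - \gamma \nabla F(X_t)(I_n-A_n)\|_F^2$ I would apply Young's inequality $\|a-b\|^2 \le (1+c)\|a\|^2 + (1+1/c)\|b\|^2$ with the specifically tuned choice $c=\sqrt{\beta}/(1-\sqrt{\beta})$, which is exactly what produces $\beta(1+c)=\sqrt{\beta}$ and $(1+1/c)=1/(1-\sqrt{\beta})$ and matches the $(1-\sqrt{\beta})^2$ denominator appearing in $C_1$. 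Finally I would bound $\|\nabla F(X_t)(I_n-A_n)\|_F^2$ by the standard three-way split $\nabla f_i(\bm{x}_t^{(i)}) - \overline{\nabla f}(X_t) = [\nabla f_i(\bm{x}_t^{(i)})-\nabla f_i(\overline{\bm{x}}_t)] + [\nabla f_i(\overline{\bm{x}}_t) - \nabla f(\overline{\bm{x}}_t)] + [\nabla f(\overline{\bm{x}}_t) - \overline{\nabla f}(X_t)]$, using $L$-smoothness on the first and third terms and the $\zeta^2$ data-heterogeneity bound on the second, giving $\le 6L^2\|X_t(I_n-A_n)\|_F^2 + 3n\zeta^2$.

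Putting these pieces together produces a one-step recursion of the form
\begin{equation*}
a_{t+1} \le \Bigl(\sqrt{\beta} + \tfrac{6\beta L^2\gamma^2}{1-\sqrt{\beta}}\Bigr) a_t + \tfrac{3\beta n\gamma^2\zeta^2}{1-\sqrt{\beta}} + \beta n\gamma^2\sigma^2,
\end{equation*}
for $a_t := \mathbb{E}\|X_t(I_n-A_n)\|_F^2$. The hypothesis $1 - 6L^2\gamma^2/(1-\sqrt{\beta})^2 > 0$ guarantees the contraction factor is strictly below $1$; using $\beta<1$ to further bound the coefficient, telescoping from $s=1$ to $T$, and invoking $a_1=0$ from Assumption~\ref{ass:global}(3), one obtains $\sum_{s=1}^T a_{s+1} \le \frac{T}{1-\rho}\bigl(\beta n\gamma^2\sigma^2 + \tfrac{3\beta n\gamma^2\zeta^2}{1-\sqrt{\beta}}\bigr)$, which after simplification via $\beta\le 1$ and plugging in the definition of $C_1$ gives the stated bound.

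The main obstacle, and the step that requires the most care, is the precise choice of the Young's inequality parameter in the second step. Picking $c$ naively (e.g.\ $c=1$) will produce the correct order but not the exact denominator $(1-\sqrt{\beta})^2$ that appears inside $C_1$; the choice $c=\sqrt{\beta}/(1-\sqrt{\beta})$ is the unique one that simultaneously produces a contraction factor of $\sqrt{\beta}$ on $a_t$ and the factor $1/(1-\sqrt{\beta})$ on the gradient term, so that after absorbing one more factor of $1/(1-\sqrt{\beta})$ from the geometric sum $\sum \rho^k = 1/(1-\rho)$, exactly $(1-\sqrt{\beta})^{-2}C_1$ emerges. Everything else in the proof is a routine SGD/decentralized-SGD calculation.
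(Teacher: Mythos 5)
Your proof is correct, but it follows a genuinely different route from the paper's. You derive the one-step identity $\mathbb{E}_{t,P}\|X_{t+1}^{(j)}(I_n-A_n)\|_F^2=\beta\,\|V_t^{(j)}(I_n-A_n)\|_F^2$ and turn the whole argument into a linear recursion $a_{t+1}\le\rho\,a_t+b$ that you telescope; the paper instead unrolls $X_{t+1}^{(j)}=-\gamma\sum_{s\le t}G^{(j)}(X_s;\Xi_s)\prod_{r=s}^tW_r^{(j)}$ all the way back to $X_1=0$, controls the resulting double sum over $(s,s')$ with the trace inequality $2\Tr(A^{\top}B)\le k\|A\|_F^2+k^{-1}\|B\|_F^2$ and the tuned scale factors $k_{s,s'}=\beta^{(s-s')/2}$, and then closes a self-bounding inequality in $\sum_t a_t$ using the same smallness condition on $\gamma$. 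Both arguments hinge on exactly the same two ingredients --- the mixing identity $\mathbb{E}[W(I_n-A_n)W^{\top}]=(\alpha_1-\alpha_2)(I_n-A_n)$ obtained by subtracting \eqref{def_alpha2} from \eqref{def_alpha1}, and the dispersion bound $\|\nabla F(X_t)(I_n-A_n)\|_F^2\le 6L^2a_t+3n\zeta^2$ --- so they are equivalent in substance, but your recursion is more modular and avoids the double-sum bookkeeping, at the cost of obscuring slightly where the geometric series enters. Two small points. First, the Young parameter you name, $c=\sqrt{\beta}/(1-\sqrt{\beta})$, is the reciprocal of the one your displayed coefficients actually correspond to: $\beta(1+c)=\sqrt{\beta}$ and $1+1/c=1/(1-\sqrt{\beta})$ require $c=(1-\sqrt{\beta})/\sqrt{\beta}$; the computation you carry out is the right one, so this is only a typo. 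Second, your final $\zeta^2$ term carries a factor $\gamma^2$ that the stated lemma lacks; this is not an error on your side --- the paper's own intermediate display \eqref{lemma:xavekey_4} also produces that $\gamma^2$ and then silently drops it --- and your constants are in fact slightly sharper than the paper's (e.g.\ $\beta\le 1$ and $1-\sqrt{\beta}\le1$ give you room to spare on the $\sigma^2$ term).
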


We will prove this critical property first. Then, after proving some lemmas, we will prove the final theorem. During the proof, we will use properties of weighted matrix $W_t^{(j)}$ which is showed in \textbf{Section ~\ref{secD}}.

\subsection{Proof of Lemma~\ref{L:xavekey}} 
\begin{proof} [\textbf{Proof to Lemma~\ref{L:xavekey}}]
According to updating rule (\ref{eq: updatingrule}) and Assumption \ref{ass:global}, we have
\begin{align*}
X_{t+1}^{(j)} = & V_t^{(j)}W_{t}^{(j)}\\
= & \left(X_t^{(j)} - \gamma G^{(j)}(X_t;\Xi_t)\right)W_{t}^{(j)}\\
= & X_1^{(j)}\prod_{r=1}^t W_{r}^{(j)} - \gamma\sum_{s=1}^tG^{(j)}(X_s;\Xi_s)\prod_{r=s}^tW_{r}^{(j)}\\
= & - \gamma\sum_{s=1}^tG^{(j)}(X_s;\Xi_s)\prod_{r=s}^tW_{r}^{(j)}. \text{   } \left(\text{due to } X_1 = 0\right)\numberthis\label{lemma:xavekey_1} 
\end{align*}

We also have 
\begin{align*}
\sum_{i=1}^n\left\|\bm{x}_{t+1}^{(i,j)} - \overline{\bm{x}}_{t+1}^{(j)}\right\|^2 = &\left\|X_{t+1}^{(j)} - X_{t+1}^{(j)}\frac{\bm{1}}{n}\bm{1}^{\top}_n \right\|^2_F =  \left\|X_{t+1}^{(j)} - X_{t+1}^{(j)}A_n \right\|^2_F \numberthis\label{lemma:xavekey_2}
\end{align*}
Combing \eqref{lemma:xavekey_1} and \eqref{lemma:xavekey_2} together, and define
\begin{align*}
H_{t,s}^{(j)} : = G^{(j)}(X_s;\Xi_s)\prod_{r=s}^tW_{r}^{(j)}
\end{align*}
we get
\begin{align*}
\sum_{i=1}^n\left\|\left(\bm{x}_{t+1}^{(i,j)} - \overline{\bm{x}}_{t+1}^{(j)}\right)\right\|^2 = & \left\|X_{t+1}^{(j)}(I_n - A_n) \right\|^2_F\\
= & \gamma^2\left\|\sum_{s=1}^tH_{t,s}^{(j)}(I_n-A_n) \right\|^2_F\\
= & \gamma^2 \Tr\left((I_n-A_n)\sum_{s=1}^t\left(H_{t,s}^{(j)}\right)^{\top}\sum_{s'=1}^tH_{t,s'}^{(j)}(I-An) \right)\\
= & \gamma^2 \sum_{s,s'=1}^t\Tr\left((I_n-A_n)\left(H_{t,s}^{(j)}\right)^{\top}H_{t,s'}^{(j)}(I-An) \right)\\
\leq & \frac{\gamma^2}{2} \sum_{s,s'=1}^t\left( k_{s,s'}\left\|H_{t,s}^{(j)}(I_n-A_n)\right\|^2_F + \frac{1}{k_{s,s'}}\left\|H_{t,s'}^{(j)}(I_n-A_n)\right\|^2_F \right) ,\numberthis\label{lemma:xavekey_3}
\end{align*}
where $k_{s,s'}$ is a scale factor that is to be computed later. The last inequality is because $2\Tr(A^{\top}B)\leq k\|A\|_F^2 + \frac{1}{k}\|B\|_F^2$ for any matrix $A$ and $B$.

For $\left\|H_{t,s}^{(j)}(I_n-A_n)\right\|^2_F$, we have
\begin{align*}
&\mathbb E \left\|H_{t,s}^{(j)}(I_n-A_n)\right\|^2_F\\
 = & \mathbb E\Tr\left( G^{(j)}(X_s;\Xi_s)W_{s}^{(j)}\cdots W_{t}^{(j)} (I_n-A_n) \left(W_{t}^{(j)}\right)^{\top}\cdots  \left(W_{s}^{(j)}\right)^{\top} \left(G^{(j)}(X_s;\Xi_s)\right)^{\top}  \right). \numberthis\label{lemma: keyyc_1}
\end{align*}

Now we can take expectation from time $t-1$ back to time $s-1$. When taking expectation of time $t$, we only need to compute $\mathbb{E} \left[W_t^{(j)}(I_n-A_n)\left(W_t^{(j)}\right)^{\top}\right]$. From Lemma \ref{lem: EWW} and Lemma \ref{lem: EWAW}, this is just $(\alpha_1 - \alpha_2)(I_n-A_n)$. Applying this to (\ref{lemma: keyyc_1}), we can get the similar form except replacing $t$ by $t-1$ and multiplying by factor $\alpha_1-\alpha_2$. Therefore, we have the following:
\begin{align*}
\mathbb E \left\|H_{t,s}^{(j)}(I_n-A_n)\right\|^2_F = & (\alpha_1- \alpha_2)^{t-s+1}\mathbb{E} \left\|G^{(j)}(X_s;\Xi_s)(I_n-A_n)\right\|^2_F\\
 \leq & (\alpha_1-\alpha_2)^{t-s}\mathbb{E} \left\|G^{(j)}(X_s;\Xi_s)(I_n-A_n)\right\|^2_F.
\end{align*}
The last inequality comes from $\alpha_2\le \alpha_1$c and $\beta = \alpha_1-\alpha_2 $ is defined in Theorem~\ref{theo:1} .

Then \eqref{lemma:xavekey_3} becomes
\begin{align*}
&\sum_{i=1}^n\left\|\left(\bm{x}_{t+1}^{(i,j)} - \overline{\bm{x}}_{t+1}^{(j)}\right)\right\|^2\\
 \leq  & \frac{\gamma^2}{2} \sum_{s,s'=1}^t\left( k_{s,s'}\beta^{t-s}\left\|G^{(j)}(X_s;\Xi_s)(I_n-A_n)\right\|^2_F + \frac{1}{k_{s,s'}}\beta^{t-s'}\left\|G^{(j)}(X_{s'};\xi_{s'})(I_n-A_n)\right\|^2_F \right).
\end{align*}
So if we choose $k_{s,s'} = \beta^{\frac{s-s'}{2}}$, the above inequality becomes
\begin{align*}
&\sum_{i=1}^n\left\|\left(\bm{x}_{t+1}^{(i,j)} - \overline{\bm{x}}_{t+1}^{(j)}\right)\right\|^2\\
 \leq  & \frac{\gamma^2}{2} \sum_{s,s'=1}^t\left( \beta^{\frac{2t-s-s'}{2}}\left\|G^{(j)}(X_s;\Xi_s)(I_n-A_n)\right\|^2_F + \beta^{\frac{2t-s'-s}{2}}\left\|G^{(j)}(X_{s'};\xi_{s'})(I_n-A_n)\right\|^2_F \right)\\
 = & \frac{\gamma^2\beta^{t}}{2} \sum_{s,s'=1}^t\beta^{\frac{-s-s'}{2}}\left( \left\|G^{(j)}(X_s;\Xi_s)(I_n-A_n)\right\|^2_F + \left\|G^{(j)}(X_{s'};\xi_{s'})(I_n-A_n)\right\|^2_F \right)\\
 = & \gamma^2\beta^{t} \sum_{s,s'=1}^t\beta^{\frac{-s-s'}{2}} \left\|G^{(j)}(X_s;\Xi_s)(I_n-A_n)\right\|^2_F \\
 = & \gamma^2 \sum_{s=1}^t\beta^{\frac{t-s}{2}} \left\|G^{(j)}(X_s;\Xi_s)(I_n-A_n)\right\|^2_F \sum_{s'=1}^t \beta^{\frac{t-s'}{2}}\\
 \leq & \frac{\gamma^2}{1-\sqrt{\beta}} \sum_{s=1}^t\beta^{\frac{t-s}{2}} \left\|G^{(j)}(X_s;\Xi_s)(I_n-A_n)\right\|^2_F \numberthis\label{lemma:xavekey_4}
\end{align*}

We also have: 
\begin{align*}
&\sum_{j=1}^n\mathbb E\left\|G^{(j)}(X_s;\Xi_s)(I_n-A_n)\right\|_F^2\\
=& \sum_{j=1}^n\sum_{i=1}^n\mathbb E_{t,G}\left\|\bm{g}^{(i,j)}(\bm{x}_t^{(i)};\bm{\xi}_t^{(i)}) - \overline{\bm{g}}^{(j)}(X_t;\Xi_t)\right\|^2\\
\leq & \sum_{j=1}^n\sum_{i=1}^n\mathbb E_{t,G}\left\|\bm{g}^{(i,j)}(\bm{x}_t^{(i)};\bm{\xi}_t^{(i)}) -\nabla^{(j)} f_i(\bm{x}_t^{(i)})\right\|^2 \\
& + 3\sum_{j=1}^n\sum_{i=1}^n\left\| \nabla^{(j)} f_i(\overline{\bm{x}}_t) - \nabla^{(j)} f(\overline{\bm{x}}_t) \right\|^2 + 6L\sum_{j=1}^n\sum_{i=1}^n\left\|\bm{x}_t^{(i,j)} - \overline{\bm{x}}_t^{(j)}\right\|^2 \text{    } (\text{using } \eqref{lemma:key_4})\\
= & \sum_{i=1}^n\mathbb E_{t,G}\left\|\bm{g}^{(i)}(\bm{x}_t^{(i)};\bm{\xi}_t^{(i)}) -\nabla f^{(i)}(\bm{x}_t^{(i)})\right\|^2 \\
& + 3\sum_{i=1}^n\left\| \nabla f^{(i)}(\overline{\bm{x}}_t) - \nabla f(\overline{\bm{x}}_t) \right\|^2 + 6L\sum_{i=1}^n\left\|\bm{x}_t^{(i)} - \overline{\bm{x}}_t\right\|^2\\
\leq & n\sigma^2 + 6L\sum_{i=1}^n\left\|\overline{\bm{x}}_s - \bm{x}_s^{(i)}\right\|^2 + 3n\zeta^2 
\end{align*}

From the inequality above and \eqref{lemma:xavekey_4} we have
\begin{align*}
&\sum_{j=1}^n\sum_{s=1}^T\sum_{i=1}^n\mathbb E\left\|\bm{x}_{s+1}^{(i,j)} - \overline{\bm{x}}_{s+1}^{(j)}\right\|^2 \\
\leq & \left(\frac{\gamma^2n\sigma^2}{1-\sqrt{\beta}} + \sum_{j=1}^n\frac{3n\zeta^2}{1-\sqrt{\beta}}\right)\sum_{s=1}^T\sum_{r=1}^s\beta^{\frac{s-r}{2}} + \frac{6L^2\gamma^2}{1-\sqrt{\beta}}\sum_{s=1}^T\sum_{r=1}^s\sum_{i=1}^n\beta^{\frac{s-r}{2}}\left\|\overline{\bm{x}}_r- \bm{x}_r^{(i)}\right\|^2\\
\leq & \frac{\gamma^2n\sigma^2T}{(1-\sqrt{\beta})^2} + \frac{n\zeta^2T}{(1-\sqrt{\beta})^2} + \frac{6L^2\gamma^2}{1-\sqrt{\beta}}\sum_{i=1}^n\sum_{s=1}^T\sum_{r=1}^s\beta^{\frac{s-r}{2}}\left\|\overline{\bm{x}}_r- \bm{x}_r^{(i)}\right\|^2\\
= & \frac{\gamma^2n\sigma^2T}{(1-\sqrt{\beta})^2} + \frac{3n\zeta^2T}{(1-\sqrt{\beta})^2} + \frac{6L^2\gamma^2}{1-\sqrt{\beta}}\sum_{i=1}^n\sum_{r=1}^T\sum_{s=r}^T\beta^{\frac{s-r}{2}}\left\|\overline{\bm{x}}_r- \bm{x}_r^{(i)}\right\|^2\\
= & \frac{\gamma^2n\sigma^2T}{(1-\sqrt{\beta})^2} + \frac{3n\zeta^2T}{(1-\sqrt{\beta})^2} + \frac{6L^2\gamma^2}{1-\sqrt{\beta}}\sum_{i=1}^n\sum_{r=1}^T\sum_{s=0}^{T-r}\beta^{\frac{s}{2}}\left\|\overline{\bm{x}}_r- \bm{x}_r^{(i)}\right\|^2\\
\leq  & \frac{\gamma^2n\sigma^2T}{(1-\sqrt{\beta})^2} + \frac{3n\zeta^2T}{(1-\sqrt{\beta})^2} + \frac{6L^2\gamma^2}{(1-\sqrt{\beta})^2}\sum_{i=1}^n\sum_{r=1}^T\left\|\overline{\bm{x}}_r- \bm{x}_r^{(i)}\right\|^2.
\end{align*}
If $\gamma$ is small enough that satisfies $\left(1- \frac{6L^2\gamma^2}{(1-\sqrt{\beta})^2} \right) > 0$, then we have
\begin{align*}
\left(1- \frac{6L^2\gamma^2}{(1-\sqrt{\beta})^2} \right) \sum_{s=1}^T\sum_{i=1}^n\mathbb E\left\|\bm{x}_{s}^{(i)} - \overline{\bm{x}}_{s}\right\|^2 \leq & \frac{\gamma^2n\sigma^2T}{(1-\sqrt{\beta})^2} + \frac{3n\zeta^2T}{(1-\sqrt{\beta})^2}.
\end{align*}
Denote $C_1:=\left(1- \frac{6L^2\gamma^2}{(1-\sqrt{\beta})^2} \right)^{-1}$, then we have
\begin{align*}
\sum_{s=1}^T\sum_{i=1}^n\mathbb E\left\|\bm{x}_{s}^{(i)} - \overline{\bm{x}}_{s}\right\|^2 \leq & \frac{2\gamma^2n\sigma^2TC_1}{(1-\sqrt{\beta})^2} + \frac{6n\zeta^2TC_1}{(1-\sqrt{\beta})^2}.
\end{align*}

\end{proof}

\subsection{Proof to Theorem~\ref{theo:1}}

\begin{lemma}\label{L:deltax}
From the updating rule (\ref{eq: updatingrule}) and Assumption \ref{ass:global}, we have
\begin{align*}
\mathbb{E}_{t,P}\big[\left\|\Delta\overline{\bm{x}}_t\right\|^2\big] = &\frac{\alpha_2}{n}\sum_{j=1}^n \Tr\left( \left(V_t^{(j)}\right)^{\top}\left(I_n-A_n\right)V_t^{(j)} \right) +  \gamma^2 \left\| \overline{\bm{g}}(X_t;\Xi_t) \right\|^2,\\
\mathbb{E}_{t,P}[\Delta\overline{\bm{x}}_t]= &-\gamma \overline{\bm{g}}(X_t; \Xi_t).
\end{align*}
\begin{proof}
We begin with $\mathbb{E}_{t,P}\big[\left\|\Delta\overline{\bm{x}}_t\right\|^2\big]$: 
\begin{align*}
\mathbb{E}_{t,P}\big[\left\|\Delta\overline{\bm{x}}_t\right\|^2\big] =&  \sum\limits_{j=1}^n \mathbb{E}_{t,P}\Big[\left\|\Delta^{(j)}\overline{\bm{x}}_t\right\|^2\Big]\\
\xlongequal{(\ref{re: 2})}& \sum_{j=1}^n\mathbb{E}_{t,P}\bigg[\left\| \left(X_{t+1}^{(j)} - X_{t}^{(j)}\right)\frac{\bm{1}_n}{n}\right\|^2\bigg]\\
=& \sum_{j=1}^n\mathbb{E}_{t,P} \bigg[\left\| \left(V_t^{(j)}W_t^{(j)} - X_{t}^{(j)}\right)\frac{\bm{1}_n}{n}\right\|^2\bigg]\\
\xlongequal{(\ref{eq: T})}& \frac{1}{n}\sum_{j=1}^n\mathbb{E}_{t,P} \bigg[\Tr\left( \left(V_t^{(j)}W_t^{(j)} - X_{t}^{(j)}\right) A_n \left( \left(W_t^{(j)}\right)^{\top}\left(V_t^{(j)}\right)^{\top} - \left(X_{t}^{(j)}\right)^{\top}\right)   \right)\bigg]\\
= & \frac{1}{n}\sum_{j=1}^n \Tr\left( V_t^{(j)}\mathbb{E}_{t,P}\left[ W_t^{(j)} A_n \left(W_t^{(j)}\right)^{\top}\right] \left(V_t^{(j)}\right)^{\top} \right)\\
& - \frac{2}{n}\sum_{j=1}^n \Tr\left( X_{t}^{(j)} A_n\mathbb{E}_{t,P}\left[\left(W_t^{(j)}\right)^{\top}\right] \left(V_t^{(j)}\right)^{\top} \right) + \frac{1}{n}\sum_{j=1}^n \Tr\left( X_{t}^{(j)} A_n \left(X_{t}^{(j)}\right)^{\top}\right)\\
= & \frac{\alpha_2}{n}\sum_{j=1}^n \Tr\left( V_t^{(j)}\left(I_n-A_n\right)\left(V_t^{(j)}\right)^{\top} \right) + \frac{1}{n}\sum_{j=1}^n \Tr\left( V_t^{(j)} A_n \left(V_t^{(j)}\right)^{\top}  \right)\\ 
& - \frac{2}{n}\sum_{j=1}^n \Tr\left( X_{t}^{(j)} A_n \left(V_t^{(j)}\right)^{\top} \right) + \frac{1}{n}\sum_{j=1}^n \Tr\left( X_{t}^{(j)} A_n \left(X_{t}^{(j)}\right)^{\top}\right),\numberthis\label{lemma:deltax_1}
\end{align*}
where for the last two equations, we use Lemma (\ref{lem: EWAW}), Lemma(\ref{lem: EW}), and (\ref{re: 4}).
From (\ref{re: 5}), we can obtain the following equation: 
\begin{align*}
 V_t^{(j)}A_n \left(V_t^{(j)}\right)^{\top} = &X_t^{(j)} A_n \left(X_t^{(j)}\right)^{\top}  - \gamma G^{(j)}(X_t;\Xi_t)  A_n \left(X_t^{(j)}\right)^{\top} - \gamma X_t^{(j)}A_n\left(G^{(j)}(X_t;\Xi_t)\right)^{\top}\\
& + \gamma^2 G^{(j)}(X_t;\Xi_t) A_n \left( G^{(j)}(X_t;\Xi_t) \right)^{\top}\\
X_t^{(j)} A_n \left(V_t^{(j)}\right)^{\top} = &  X_t^{(j)} A_n \left(X_t^{(j)}\right)^{\top} - \gamma G^{(j)}(X_t;\Xi_t) A_n \left(X_t^{(j)}\right)^{\top} ,\numberthis\label{lemma:deltax_2}
\end{align*}
From the property of trace, we have:
\begin{equation} \label{lemma: deltax_3}
	\Tr \left(G^{(j)}(X_t;\Xi_t)  A_n \left(X_t^{(j)}\right)^{\top}\right) = \Tr \left(X_t^{(j)}A_n^{\top}\left(G^{(j)}(X_t;\Xi_t)\right)^{\top}\right) = \Tr \left(X_t^{(j)}A_n\left(G^{(j)}(X_t;\Xi_t)\right)^{\top}\right).
\end{equation}
Combing \eqref{lemma:deltax_1}, \eqref{lemma:deltax_2} and \eqref{lemma: deltax_3}, we have
\begin{align*}
\mathbb{E}_{t,P}\left\|\Delta\overline{\bm{x}}_t\right\|^2 = & \frac{\alpha_2}{n}\sum_{j=1}^n \Tr\left( V_t^{(j)}\left(I_n-A_n\right)\left(V_t^{(j)}\right)^{\top} \right) +  \frac{\gamma^2}{n}\sum_{j=1}^n \Tr\left( G^{(j)}(X_t;\Xi_t) A_n \left( G^{(j)}(X_t;\Xi_t) \right)^{\top}\right)\\
\xlongequal{(\ref{eq: T})} & \frac{\alpha_2}{n}\sum_{j=1}^n \Tr\left( V_t^{(j)}\left(I_n-A_n\right)\left(V_t^{(j)}\right)^{\top} \right) +  \gamma^2\sum_{j=1}^n \left\| G^{(j)}(X_t;\Xi_t)\frac{\bm{1}_n}{n} \right\|^2\\
\xlongequal{(\ref{re: 9})} & \frac{\alpha_2}{n}\sum_{j=1}^n \Tr\left( V_t^{(j)}\left(I_n-A_n\right)\left(V_t^{(j)}\right)^{\top} \right) +  \gamma^2 \left\| \overline{\bm{g}}(X_t;\Xi_t) \right\|^2.
\end{align*}
For $\mathbb{E}_{t,P}[\Delta\overline{\bm{x}}_t]$, we first compute $\mathbb{E}_{t,P}[\Delta^{(j)}\overline{\bm{x}}_t], (j\in [n])$.
\begin{align*}
\mathbb{E}_{t,P}[\Delta^{(j)}\overline{\bm{x}}_t] &= \mathbb{E}_{t,P}[\overline{\bm{x}}_{t+1}^{(j)}] - \mathbb{E}_{t,P}[\overline{\bm{x}}_t^{(j)}]\\
&=\mathbb{E}_{t,P}\left[X_t^{(j)}\right]\frac{\mathbf{1}}{n} - \overline{\bm{x}}_t^{(j)}\\
&=V_t^{(j)}\mathbb{E}_{t,P}\left[W_t^{(j)}\right]\frac{\bm{1}}{n} - \overline{\bm{x}}_t^{(j)}\\
&\xlongequal{\text{Lemma (\ref{lem: EW})}} V_t^{(j)}\left(\alpha_1I_n +(1-\alpha_1)A_n\right)\frac{\bm{1}}{n} - \overline{\bm{x}}_t^{(j)}\\
&=  V_t^{(j)}\frac{\bm{1}}{n} - \overline{\bm{x}}_t^{(j)}\\
&= \overline{\bm{v}}_t^{(j)} - \overline{\bm{x}}_t^{(j)}\\
&=  -\gamma \overline{\bm{g}}^{(j)}(X_t;\Xi_t),
\end{align*}
which immediately leads to $\mathbb{E}_{t,P}\left[\Delta\overline{\bm{x}}_t\right] = -\gamma \overline{\bm{g}}(X_t; \Xi_t)$.
\end{proof}

\end{lemma}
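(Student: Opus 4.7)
My plan is to handle the two identities separately, starting with the mean (which is cleaner) and then the second moment (where the algebra is heavier). Throughout, I will work block-by-block in the index $j$ and use the identity $\overline{\bm{x}}_t^{(j)} = X_t^{(j)}\mathbf{1}_n/n$ from \eqref{re: 2}, together with the update $X_{t+1}^{(j)} = V_t^{(j)} W_t^{(j)}$ and the relation $V_t^{(j)} = X_t^{(j)} - \gamma G^{(j)}(X_t;\Xi_t)$ from \eqref{re: 6}. The randomness in $\mathbb{E}_{t,P}$ is only in $W_t^{(j)}$, so $V_t^{(j)}$ and $X_t^{(j)}$ come out of the expectation.

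For the mean, I write $\Delta^{(j)}\overline{\bm{x}}_t = (X_{t+1}^{(j)} - X_t^{(j)})\mathbf{1}_n/n = (V_t^{(j)}W_t^{(j)} - X_t^{(j)})\mathbf{1}_n/n$. Taking $\mathbb{E}_{t,P}$ reduces the problem to evaluating $\mathbb{E}[W_t^{(j)}]\mathbf{1}_n/n$. Since Lemma \ref{lem: EW} gives $\mathbb{E}[W_t^{(j)}] A_n = A_n$, and $A_n = \mathbf{1}_n\mathbf{1}_n^\top/n$ has columns equal to $\mathbf{1}_n/n$, this forces $\mathbb{E}[W_t^{(j)}]\mathbf{1}_n/n = \mathbf{1}_n/n$. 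Hence $\mathbb{E}_{t,P}[\Delta^{(j)}\overline{\bm{x}}_t] = \overline{\bm{v}}_t^{(j)} - \overline{\bm{x}}_t^{(j)} = -\gamma \overline{\bm{g}}^{(j)}(X_t;\Xi_t)$ by \eqref{re: 6}--\eqref{re: 9}. Concatenating over $j$ yields the second claim.

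For the second moment, I again work per block and rewrite $\|\Delta^{(j)}\overline{\bm{x}}_t\|^2$ as a trace via the identity \eqref{eq: T}: $\|\Delta^{(j)}\overline{\bm{x}}_t\|^2 = \tfrac{1}{n}\Tr\big((V_t^{(j)}W_t^{(j)} - X_t^{(j)}) A_n (W_t^{(j)\top}V_t^{(j)\top} - X_t^{(j)\top})\big)$. Expanding yields four trace terms. I take $\mathbb{E}_{t,P}$ of each: the pure-$W$ quadratic term becomes $\tfrac{1}{n}\Tr(V_t^{(j)}(\alpha_2 I_n + (1-\alpha_2)A_n)(V_t^{(j)})^\top)$ by Lemma \ref{lem: EWAW}; the two cross terms each reduce to $-\tfrac{2}{n}\Tr(X_t^{(j)}A_n(V_t^{(j)})^\top)$ using $\mathbb{E}[W_t^{(j)}]A_n = A_n$ and cyclicity of trace (plus $A_n^\top = A_n$); and the pure-$X_t$ term is unchanged. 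I then split the quadratic piece as $\alpha_2(I_n - A_n) + A_n$, isolating the $\alpha_2\Tr(V_t^{(j)}(I_n-A_n)(V_t^{(j)})^\top)/n$ contribution that appears in the statement.

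The main obstacle is verifying that the residual pieces ($\tfrac{1}{n}\Tr(V_t^{(j)} A_n V_t^{(j)\top} - 2 X_t^{(j)} A_n V_t^{(j)\top} + X_t^{(j)} A_n X_t^{(j)\top})$) collapse to exactly $\gamma^2\|\overline{\bm{g}}^{(j)}(X_t;\Xi_t)\|^2$. Here I substitute $V_t^{(j)} = X_t^{(j)} - \gamma G^{(j)}(X_t;\Xi_t)$ and expand; the $X_t^{(j)} A_n X_t^{(j)\top}$ terms cancel, the mixed terms $-\gamma X_t^{(j)} A_n G^{(j)\top}$ and $-\gamma G^{(j)} A_n X_t^{(j)\top}$ combine with the cross terms and cancel by the trace identity used in \eqref{lemma: deltax_3}, leaving only $\gamma^2 \tfrac{1}{n}\Tr(G^{(j)}(X_t;\Xi_t)A_n G^{(j)\top}(X_t;\Xi_t))$. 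Applying \eqref{eq: T} and \eqref{re: 9} converts this to $\gamma^2\|\overline{\bm{g}}^{(j)}(X_t;\Xi_t)\|^2$. Summing over $j$ gives the first identity. The careful cancellation in this last step is the only delicate computation; everything else is a direct application of the two matrix-expectation lemmas.
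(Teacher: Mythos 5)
Your proposal is correct and follows essentially the same route as the paper's proof: block-wise reduction to traces via $\|X\mathbf{1}_n/n\|^2 = \tfrac{1}{n}\Tr(XA_nX^{\top})$, Lemma \ref{lem: EWAW} for the quadratic-in-$W$ term and Lemma \ref{lem: EW} for the cross terms, followed by the substitution $V_t^{(j)} = X_t^{(j)} - \gamma G^{(j)}(X_t;\Xi_t)$ and the trace-symmetry cancellation that isolates $\gamma^2\|\overline{\bm{g}}(X_t;\Xi_t)\|^2$. The only cosmetic difference is that you invoke $\mathbb{E}[W_t^{(j)}]A_n = A_n$ directly for the mean, whereas the paper passes through the explicit form $\alpha_1 I_n + (1-\alpha_1)A_n$; the two are equivalent.
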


\begin{lemma}\label{L:key}
From the updating rule (\ref{eq: updatingrule}) and Assumption \ref{ass:global}, we have
\begin{align*}
\sum_{j=1}^n\mathbb{E}_{t,G}\left[\Tr\left( V_t^{(j)}\left(I_n-A_n\right)\left(V_t^{(j)}\right)^{\top} \right)\right]
\leq & (2+12\gamma^2L)\sum_{i=1}^n\left\|\left(\bm{x}_t^{(i)} - \overline{\bm{x}}_t\right)\right\|^2 + 6n\gamma^2\zeta^2 + 2n\gamma^2\sigma^2.\end{align*}

\begin{proof}
\begin{align*}
\Tr\left( V_t^{(j)}\left(I_n-A_n\right)\left(V_t^{(j)}\right)^{\top} \right) = & \Tr\left( V_t^{(j)}\left(V_t^{(j)}\right)^{\top}\right) - \Tr\left( V_t^{(j)}A_n\left(V_t^{(j)}\right)^{\top} \right)\\
\xlongequal{(\ref{eq: T}} & \left\|V_t^{(j)}\right\|^2_F - n\left\|V_t^{(j)}\frac{\bm{1}_n}{n}\right\|^2\\
= & \sum_{i=1}^n\left(\left\|\bm{v}_t^{(i,j)}\right\|^2 - \left\|\overline{\bm{v}}_t^{(j)}\right\|^2\right)\\
= & \sum_{i=1}^n\left\|\bm{v}_t^{(i,j)} - \overline{\bm{v}}_t^{(j)}\right\|^2,\numberthis\label{lemma:key_1}
\end{align*}
the last equation above is because
\begin{align} \label{eq: a}
\sum_{i=1}^n \|\bm{a}_i\|^2 - \left\|\sum_{i=1}^n\frac{\bm{a}_i}{n}\right\|^2 = \sum_{i=1}^n\left\|\bm{a}_i-\sum_{k=1}^n\frac{\bm{a}_i}{n}\right\|^2.
\end{align}

Since
\begin{align*}
\overline{\bm{v}}_t^{(j)} = & \overline{\bm{x}}_t^{(j)} - \gamma \overline{\bm{g}}^{(j)}(X_t;\Xi_t)\\
\bm{v}_t^{(i,j)} - \overline{\bm{v}}_t^{(j)} = & \left(\bm{x}_t^{(i,j)} - \overline{\bm{x}}_t^{(j)}\right) - \gamma\left(\bm{g}^{(i,j)}(\bm{x}_t^{(i)};\bm{\xi}_t^{(i)}) - \overline{\bm{g}}^{(j)}(X_t;\Xi_t)\right),
\end{align*}
we have the following: 
\begin{align*}
\sum_{i=1}^n\left\|\bm{v}_t^{(i,j)} - \overline{\bm{v}}_t^{(j)}\right\|^2 = & \sum_{i=1}^n\left\|\left(\bm{x}_t^{(i,j)} - \overline{\bm{x}}_t^{(j)}\right) - \gamma\left(\bm{g}^{(i,j)}(\bm{x}_t^{(i)};\bm{\xi}_t^{(i)}) - \overline{\bm{g}}^{(j)}(X_t;\Xi_t)\right)\right\|^2\\
\leq & 2\sum_{i=1}^n\left\|\left(\bm{x}_t^{(i,j)} - \overline{\bm{x}}_t^{(j)}\right)\right\|^2 + 2\gamma^2\sum_{i=1}^n\left\|\bm{g}^{(i,j)}(\bm{x}_t^{(i)};\bm{\xi}_t^{(i)}) - \overline{\bm{g}}^{(j)}(X_t;\Xi_t)\right\|^2,\numberthis\label{lemma:key_2}
\end{align*}
where the last inequality comes from $\left\|\bm{a}+\bm{b}\right\|^2\le 2\left\|\bm{a}\right\|^2 + 2\left\|\bm{b}\right\|^2$.

For $\left\|\bm{g}^{(i,j)}(\bm{x}_t^{(i)};\bm{\xi}_t^{(i)}) - \overline{\bm{g}}^{(j)}(X_t;\Xi_t)\right\|^2$, we have
\begin{align*}
&\sum_{i=1}^n\mathbb E_{t,G}\left\|\bm{g}^{(i,j)}(\bm{x}_t^{(i)};\bm{\xi}_t^{(i)}) - \overline{\bm{g}}^{(j)}(X_t;\Xi_t)\right\|^2\\
\xlongequal{(\ref{eq: a})} & \sum_{i=1}^n\mathbb E_{t,G}\left\|\bm{g}^{(i,j)}(\bm{x}_t^{(i)};\bm{\xi}_t^{(i)}) \right\|^2 - n\mathbb E_{t,G}\left\|\overline{\bm{g}}^{(j)}(X_t;\Xi_t)\right\|^2\\
 = & \sum_{i=1}^n\mathbb E_{t,G}\left\|\left(\bm{g}^{(i,j)}(\bm{x}_t^{(i)};\bm{\xi}_t^{(i)}) -\nabla^{(j)} f_i(\bm{x}_t^{(i)})\right) + \nabla^{(j)} f_i(\bm{x}_t^{(i)})\right\|^2\\
 & - n\mathbb E_{t,G}\left\|\left(\overline{\bm{g}}^{(j)}(X_t;\Xi_t) - \overline{\nabla}^{(j)}f(X_t)\right) + \overline{\nabla}^{(j)}f(X_t) \right\|^2\\
=& \sum_{i=1}^n\mathbb E_{t,G}\left\|\bm{g}^{(i,j)}(\bm{x}_t^{(i)};\bm{\xi}_t^{(i)}) -\nabla^{(j)} f_i(\bm{x}_t^{(i)})\right\|^2 + \sum_{i=1}^n\left\|\nabla^{(j)} f_i(\bm{x}_t^{(i)})\right\|^2 
  - n\mathbb E_{t,G}\left\|\overline{\bm{g}}^{(j)}(X_t;\Xi_t) - \overline{\nabla}^{(j)}f(X_t)\right\|^2\\
  & - n\left\| \overline{\nabla}^{(j)}f(X_t) \right\|^2
  + 2\sum_{i=1}^n\mathbb E_{t,G}\left[\left\langle\bm{g}^{(i,j)}(\bm{x}_t^{(i)};\bm{\xi}_t^{(i)}) -\nabla^{(j)} f_i(\bm{x}_t^{(i)}),\nabla^{(j)} f_i(\bm{x}_t^{(i)})\right\rangle\right]\\
 & - 2n\mathbb E_{t,G}\left[\left\langle\overline{\bm{g}}^{(j)}(X_t;\Xi_t) - \overline{\nabla}^{(j)}f(X_t),\overline{\nabla}^{(j)}f(X_t)\right\rangle\right]\\
 = & \sum_{i=1}^n\mathbb E_{t,G}\left\|\bm{g}^{(i,j)}(\bm{x}_t^{(i)};\bm{\xi}_t^{(i)}) -\nabla^{(j)} f_i(\bm{x}_t^{(i)})\right\|^2 + \sum_{i=1}^n\left\|\nabla^{(j)} f_i(\bm{x}_t^{(i)})\right\|^2 
  - n\mathbb E_{t,G}\left\|\overline{\bm{g}}^{(j)}(X_t;\Xi_t) - \overline{\nabla}^{(j)}f(X_t)\right\|^2\\
& - n\left\| \overline{\nabla}^{(j)}f(X_t) \right\|^2\\
\leq & \sum_{i=1}^n\mathbb E_{t,G}\left\|\bm{g}^{(i,j)}(\bm{x}_t^{(i)};\bm{\xi}_t^{(i)}) -\nabla^{(j)} f_i(\bm{x}_t^{(i)})\right\|^2 + \sum_{i=1}^n\left\|\nabla^{(j)} f_i(\bm{x}_t^{(i)})\right\|^2  - n\left\| \overline{\nabla}^{(j)}f(X_t) \right\|^2.\numberthis\label{lemma:key_3}
\end{align*}
For $\sum_{i=1}^n\left\|\nabla^{(j)} f_i(\bm{x}_t^{(i)})\right\|^2  - n\left\| \overline{\nabla}^{(j)}f(X_t) \right\|^2$, we have
\begin{align*}
&\sum_{i=1}^n\left\|\nabla^{(j)} f_i(\bm{x}_t^{(i)})\right\|^2  - n\left\| \overline{\nabla}^{(j)}f(X_t) \right\|^2\\
\xlongequal{(\ref{eq: a})} & \sum_{i=1}^n\left\|\nabla^{(j)} f_i(\bm{x}_t^{(i)}) - \overline{\nabla}^{(j)}f(X_t) \right\|^2\\
= &  \sum_{i=1}^n\left\|\left(\nabla^{(j)} f_i(\bm{x}_t^{(i)}) - \nabla^{(j)} f_i(\overline{\bm{x}}_t)\right) - \left(\overline{\nabla}^{(j)}f(X_t) - \nabla^{(j)} f(\overline{\bm{x}}_t)\right) + \left(\nabla^{(j)} f_i(\overline{\bm{x}}_t) - \nabla^{(j)} f(\overline{\bm{x}}_t)\right) \right\|^2\\
\leq &  3\sum_{i=1}^n\left\|\nabla^{(j)} f_i(\bm{x}_t^{(i)}) - \nabla^{(j)} f_i(\overline{\bm{x}}_t)\right\|^2 + 3\sum_{i=1}^n\left\| \overline{\nabla}^{(j)}f(X_t) - \nabla^{(j)} f(\overline{\bm{x}}_t)\right\|^2 \\
& + 3\sum_{i=1}^n\left\| \nabla^{(j)} f_i(\overline{\bm{x}}_t) - \nabla^{(j)} f(\overline{\bm{x}}_t) \right\|^2 \text{        }\left(\text{due to } \left\|\bm{a} + \bm{b} + \bm{c}\right\|^2\le 3\left\|\bm{a}\right\|^2 + 3\left\|\bm{}\right\|^2 + 3\left\|\bm{c}\right\|^2\right)\\
\leq &  3L^2\sum_{i=1}^n\left\|\bm{x}_t^{(i,j)} - \overline{\bm{x}}_t^{(j)}\right\|^2  + 3n\left\| \overline{\nabla}^{(j)}f(X_t) - \nabla^{(j)} f(\overline{\bm{x}}_t)\right\|^2+ 3\sum_{i=1}^n\left\| \nabla^{(j)} f_i(\overline{\bm{x}}_t) - \nabla^{(j)} f(\overline{\bm{x}}_t) \right\|^2\\
= &  3L^2\sum_{i=1}^n\left\|\bm{x}_t^{(i,j)} - \overline{\bm{x}}_t^{(j)}\right\|^2  + \frac{3}{n}\left\|\sum_{k=1}^n \left( \nabla^{(j)} f_k(\bm{x}_t^{(k)}) - \nabla^{(j)} f_i(\overline{\bm{x}}_t) \right)\right\|^2+ 3\sum_{i=1}^n\left\| \nabla^{(j)} f_i(\overline{\bm{x}}_t) - \nabla^{(j)} f(\overline{\bm{x}}_t) \right\|^2\\
\leq &  3L^2\sum_{i=1}^n\left\|\bm{x}_t^{(i,j)} - \overline{\bm{x}}_t^{(j)}\right\|^2  + 3\sum_{k=1}^n \left\| \nabla^{(j)} f_k(\bm{x}_t^{(k)}) - \nabla^{(j)} f_k(\overline{\bm{x}}_t)\right\|^2+ 3\sum_{i=1}^n\left\| \nabla^{(j)} f_i(\overline{\bm{x}}_t) - \nabla^{(j)} f(\overline{\bm{x}}_t) \right\|^2\\
\leq &  3L^2\sum_{i=1}^n\left\|\bm{x}_t^{(i,j)} - \overline{\bm{x}}_t^{(j)}\right\|^2  + 3L^2\sum_{k=1}^n\left\|\bm{x}_t^{(k,j)} - \overline{\bm{x}}_t^{(j)}\right\|^2+ 3\sum_{i=1}^n\left\| \nabla^{(j)} f_i(\overline{\bm{x}}_t) - \nabla^{(j)} f(\overline{\bm{x}}_t) \right\|^2\\
\leq &  6L^2\sum_{i=1}^n\left\|\bm{x}_t^{(i,j)} - \overline{\bm{x}}_t^{(j)}\right\|^2 + 3\sum_{i=1}^n\left\| \nabla^{(j)} f_i(\overline{\bm{x}}_t) - \nabla^{(j)} f(\overline{\bm{x}}_t) \right\|^2.
\end{align*}
Taking the above inequality into \eqref{lemma:key_3}, we get
\begin{align*}
&\sum_{i=1}^n\mathbb E_{t,G}\left\|\bm{g}^{(i,j)}(\bm{x}_t^{(i)};\bm{\xi}_t^{(i)}) - \overline{\bm{g}}^{(j)}(X_t;\Xi_t)\right\|^2\\ 
\leq & \sum_{i=1}^n\mathbb E_{t,G}\left\|\bm{g}^{(i,j)}(\bm{x}_t^{(i)};\bm{\xi}_t^{(i)}) -\nabla^{(j)} f_i(\bm{x}_t^{(i)})\right\|^2  + 3\sum_{i=1}^n\left\| \nabla^{(j)} f_i(\overline{\bm{x}}_t) - \nabla^{(j)} f(\overline{\bm{x}}_t) \right\|^2\\
&+ 6L^2\sum_{i=1}^n\left\|\bm{x}_t^{(i,j)} - \overline{\bm{x}}_t^{(j)}\right\|^2.\numberthis\label{lemma:key_4}
\end{align*}

Combinig \eqref{lemma:key_2} and \eqref{lemma:key_4} together we have
\begin{align*}
\mathbb E_{t,G}\sum_{i=1}^n\left\|\bm{v}_t^{(i,j)} - \overline{\bm{v}}_t^{(j)}\right\|^2 \leq & (2+12L^2\gamma^2)\sum_{i=1}^n\left\|\bm{x}_t^{(i,j)} - \overline{\bm{x}}_t^{(j)}\right\|^2 + 6\gamma^2\sum_{i=1}^n\left\| \nabla^{(j)} f_i(\overline{\bm{x}}_t) - \nabla^{(j)} f(\overline{\bm{x}}_t) \right\|^2\\
& + 2\gamma^2\sum_{i=1}^n\mathbb E_{t,G}\left\|\bm{g}^{(i,j)}(\bm{x}_t^{(i)};\bm{\xi}_t^{(i)}) -\nabla^{(j)} f_i(\bm{x}_t^{(i)})\right\|^2
\end{align*}
Summing $j$ from $1$ to $n$, we obtain the following: 
\begin{align*}
\mathbb E_{t,G}\sum_{i=1}^n\left\|\bm{v}_t^{(i)} - \overline{\bm{v}}_t\right\|^2 \leq & (2+12L^2\gamma^2)\sum_{i=1}^n\left\|\bm{x}_t^{(i)} - \overline{\bm{x}}_t\right\|^2 + 6\gamma^2\sum_{i=1}^n\left\| \nabla f_{i}(\overline{\bm{x}}_t) - \nabla f(\overline{\bm{x}}_t) \right\|^2\\
& + 2\gamma^2\sum_{i=1}^n\mathbb E_{t,G}\left\|\bm{g}^{(i)}(\bm{x}^{(i)}_t;\bm{\xi}^{(i)}_t) -\nabla f_i(\bm{x}_t^{(i)})\right\|^2\\
\leq & (2+12L^2\gamma^2)\sum_{i=1}^n\left\|\left(\bm{x}_t^{(i)} - \overline{\bm{x}}_t\right)\right\|^2 + 6n\gamma^2\zeta^2
 + 2n\gamma^2\sigma^2.\numberthis\label{lemma:key_5}
\end{align*}
From \eqref{lemma:key_1} and \eqref{lemma:key_5}, we have
\begin{align*}
\sum_{j=1}^n\mathbb{E}_{t,G}\Tr\left( V_t^{(j)}\left(I_n-A_n\right)\left(V_t^{(j)}\right)^{\top} \right)
\leq & (2+12\gamma^2L)\sum_{i=1}^n\left\|\left(\bm{x}_t^{(i)} - \overline{\bm{x}}_t\right)\right\|^2 + 6n\gamma^2\zeta^2 + 2n\gamma^2\sigma^2.
\end{align*}

\end{proof}
\end{lemma}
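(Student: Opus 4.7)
The plan is to reduce the matrix trace to a sum of per-worker squared block deviations, substitute in the SGD update, and then control the resulting gradient-dispersion term by combining the variance and Lipschitz assumptions. To start, I would rewrite the trace using identity \eqref{eq: T}: since $A_n = \mathbf{1}_n\mathbf{1}_n^\top/n$, we have $\Tr(V_t^{(j)}(I_n - A_n)(V_t^{(j)})^\top) = \|V_t^{(j)}\|_F^2 - n\|V_t^{(j)}\mathbf{1}_n/n\|^2$. Combining this with identity \eqref{eq: a}, the quantity collapses to $\sum_{i=1}^n \|v_t^{(i,j)} - \bar{v}_t^{(j)}\|^2$, converting the matrix object into a transparent sum of squared deviations of the $j$th block across workers.

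Next, I would substitute the SGD update $v_t^{(i,j)} = x_t^{(i,j)} - \gamma g^{(i,j)}(x_t^{(i)}; \xi_t^{(i)})$, using averages on both sides to write
\[
v_t^{(i,j)} - \bar{v}_t^{(j)} = \bigl(x_t^{(i,j)} - \bar{x}_t^{(j)}\bigr) - \gamma\bigl(g^{(i,j)}(x_t^{(i)};\xi_t^{(i)}) - \bar{g}^{(j)}(X_t;\Xi_t)\bigr).
\]
Applying $\|a+b\|^2 \le 2\|a\|^2 + 2\|b\|^2$ and summing over $i$ yields a ``model dispersion'' contribution $2\sum_i\|x_t^{(i,j)} - \bar{x}_t^{(j)}\|^2$ plus a ``gradient dispersion'' contribution $2\gamma^2 \sum_i\|g^{(i,j)} - \bar{g}^{(j)}\|^2$ that still needs to be controlled in expectation.

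The main work is bounding $\mathbb{E}_{t,G}\sum_i \|g^{(i,j)} - \bar{g}^{(j)}\|^2$. I would apply identity \eqref{eq: a} once more to get $\sum_i \mathbb{E}\|g^{(i,j)}\|^2 - n\mathbb{E}\|\bar{g}^{(j)}\|^2$, then insert and subtract the deterministic block gradients $\nabla^{(j)} f_i(x_t^{(i)})$ and their block average $\bar{\nabla}^{(j)} f(X_t)$. Because stochastic gradients are conditionally unbiased, the cross terms vanish under $\mathbb{E}_{t,G}$, leaving a stochastic variance piece bounded via Assumption~\ref{ass:global}(2) and a purely deterministic piece of the form $\sum_i\|\nabla^{(j)} f_i(x_t^{(i)}) - \bar{\nabla}^{(j)} f(X_t)\|^2$. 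For this deterministic piece I would use the three-term inequality $\|a+b+c\|^2 \le 3\|a\|^2+3\|b\|^2+3\|c\|^2$ to split each summand into (i) $\nabla^{(j)} f_i(x_t^{(i)}) - \nabla^{(j)} f_i(\bar{x}_t)$, controlled by $L^2\|x_t^{(i,j)} - \bar{x}_t^{(j)}\|^2$ via the Lipschitz assumption, (ii) $\bar{\nabla}^{(j)} f(X_t) - \nabla^{(j)} f(\bar{x}_t)$, controlled via Jensen and the same Lipschitz bound, and (iii) $\nabla^{(j)} f_i(\bar{x}_t) - \nabla^{(j)} f(\bar{x}_t)$, controlled by the gradient heterogeneity bound $\zeta^2$ after summing over $j$.

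Finally I would sum over $j$ from $1$ to $n$, invoking the block partition identity $\sum_{j}\|u^{(j)}\|^2 = \|u\|^2$ to consolidate: block-level deviations $\sum_{j,i}\|x_t^{(i,j)} - \bar{x}_t^{(j)}\|^2$ aggregate to $\sum_i\|x_t^{(i)} - \bar{x}_t\|^2$; the variance bound of Assumption~\ref{ass:global}(2) aggregates to $n\sigma^2$; and the heterogeneity bound aggregates to $n\zeta^2$. Collecting all constants produces exactly the stated coefficients $(2 + 12\gamma^2 L^2)$ on the model-dispersion term and $6n\gamma^2\zeta^2 + 2n\gamma^2\sigma^2$ on the constant term. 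The main obstacle will be bookkeeping through the three nested decompositions (SGD update, unbiased-gradient bias/variance split, and the three-term Lipschitz/heterogeneity split), and in particular making sure the cross terms really vanish in conditional expectation and that the heterogeneity bound is only invoked at the common point $\bar{x}_t$ after the Lipschitz-based projection has been applied.
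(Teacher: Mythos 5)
Your proposal follows the paper's proof essentially step for step: the same trace-to-sum reduction via \eqref{eq: T} and \eqref{eq: a}, the same $\|a+b\|^2\le 2\|a\|^2+2\|b\|^2$ split after substituting the SGD update, the same bias/variance decomposition with vanishing cross terms under $\mathbb{E}_{t,G}$, the same three-term Lipschitz/Jensen/heterogeneity split, and the same final aggregation over blocks $j$. The argument is correct, and your coefficient $(2+12\gamma^2L^2)$ matches the paper's intermediate bound (the $L$ in the lemma statement appears to be a typo for $L^2$).
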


\begin{proof} [\textbf{Proof to Theorem~\ref{theo:1}}]
From Lemma \ref{lem: EW} and Lemma \ref{lem: EWAW}, we have
\begin{align*}
\mathbb{E}_{t,P}(W_t^{(j)}) =& \alpha_1I_n + (1-\alpha_1)A_n\\
\mathbb{E}_{t,P}\left(W_t^{(j)}A_n\left(W_t^{(j)}\right)^{\top}\right) =& \alpha_2 I_n + (1-\alpha_2)A_n
\end{align*}

From the updating rule (\ref{eq: updatingrule}) and L-Lipschitz of $f$, we have
\begin{align*}
\mathbb{E}_{t,P}f(\overline{X}_{t+1}) \leq & f(\overline{X}_{t}) + \mathbb{E}_{t,P}\langle \nabla f(\overline{X}_{t}), \Delta\overline{\bm{x}}_t \rangle + \mathbb{E}_{t,P}\frac{L}{2}\left\|\overline{\bm{x}}_t\right\|^2\\
\xlongequal{\text{Lemma \ref{L:deltax}}}& f(\overline{X}_{t}) - \gamma\langle \nabla f(\overline{X}_{t}), \gamma \overline{\bm{g}}(X_t; \Xi_t) \rangle + \mathbb{E}_{t,P}\frac{L}{2}\left\|\overline{\bm{x}}_t\right\|^2\\
\xlongequal{\text{Lemma \ref{L:deltax}}} & f(\overline{X}_{t}) - \gamma\langle \nabla f(\overline{X}_{t}), \gamma \overline{\bm{g}}(X_t; \Xi_t) \rangle + \frac{\alpha_2 L}{2n}\sum_{j=1}^n \Tr\left( V_t^{(j)}\left(I_n-A_n\right)\left(V_t^{(j)}\right)^{\top} \right)\\
& +  \sum_{j=1}^n\frac{\gamma^2L}{2} \mathbb E_{t,G}\left\| \overline{\bm{g}}^{(j)}(X_t;\Xi_t) \right\|^2.
\end{align*}
So
\begin{align*}
\mathbb{E}_{t}f(\overline{X}_{t+1}) \leq & f(\overline{X}_{t}) -\gamma \langle \nabla f(\overline{X}_{t}), \overline{\nabla} f(X_t) \rangle + \frac{\alpha_2 L}{2n}\sum_{j=1}^n \mathbb E_{t,G}\Tr\left( V_t^{(j)}\left(I_n-A_n\right)\left(V_t^{(j)}\right)^{\top} \right)\\
& +  \frac{\gamma^2L}{2} \sum_{j=1}^n\mathbb E_{t,G}\left\| \overline{\bm{g}}^{(j)}(X_t;\Xi_t) \right\|^2.\numberthis\label{main:1_1}
\end{align*}

Since
\begin{align*}
\mathbb E_{t,G}\left\|\overline{\bm{g}}^{(j)}(X_t;\Xi_t)\right\|^2 =& \mathbb E_{t,G}\left\|\left( \overline{\bm{g}}^{(j)}(X_t;\Xi_t)- \overline{\nabla}^{(j)}f(X_t)\right) + \overline{\nabla}^{(j)}f(X_t) \right\|^2\\
= & \mathbb E_{t,G}\left\|\overline{\bm{g}}^{(j)}(X_t;\Xi_t)- \overline{\nabla}^{(j)}f(X_t)\right\|^2+ \mathbb E \left\|\overline{\nabla}^{(j)}f(X_t) \right\|^2\\
& + 2E_{t,G}\left\langle \overline{\bm{g}}^{(j)}(X_t;\Xi_t)- \overline{\nabla}^{(j)}f(X_t), \overline{\nabla}^{(j)}f(X_t) \right\rangle\\
= & \mathbb E_{t,G}\left\|\overline{\bm{g}}^{(j)}(X_t;\Xi_t)- \overline{\nabla}^{(j)}f(X_t)\right\|^2+ \left\|\overline{\nabla}^{(j)}f(X_t) \right\|^2,
\end{align*}
and
\begin{align*}
&\sum_{j=1}^n\mathbb E_{t,G}\left\|\overline{\bm{g}}^{(j)}(X_t;\Xi_t) - \overline{\nabla}^{(j)}f(X_t)\right\|^2\\
 = & \frac{1}{n^2}\sum_{j=1}^n\mathbb E_{t,G}\left\| \sum_{i=1}^n \left(\bm{g}^{(i,j)}(\bm{x}_t^{(i)};\bm{\xi}_t^{(i)}) - \nabla^{(j)} f_i(\bm{x}_t^{(i)})\right)\right\|^2\\
= & \frac{1}{n^2}\sum_{j=1}^n\sum_{i=1}^n \mathbb E_{t,G}\left\| \bm{g}^{(i,j)}(\bm{x}_t^{(i)};\bm{\xi}_t^{(i)}) - \nabla^{(j)} f_i(\bm{x}_t^{(i)})\right\|^2\\
& + \frac{1}{n^2}\sum_{j=1}^n\mathbb E_{t,G}\sum_{i\not=i'}\left\langle \bm{g}^{(i,j)}(\bm{x}_t^{(i)};\bm{\xi}_t^{(i)}) - \nabla^{(j)} f_i(\bm{x}_t^{(i)}), \bm{g}^{(i',j)}(\bm{x}_t^{(i')};\bm{\xi}_t^{(i')}) - \nabla^{(j)} f_{i'}(\bm{x}_t^{(i')}) \right\rangle\\
=&\frac{1}{n^2}\sum_{j=1}^n\sum_{i=1}^n \mathbb E_{t,G}\left\| \bm{g}^{(i,j)}(\bm{x}_t^{(i)};\bm{\xi}_t^{(i)}) - \nabla^{(j)} f_i(\bm{x}_t^{(i)})\right\|^2\\
=&\frac{1}{n^2}\sum_{i=1}^n \mathbb E_{t,G}\left\| \bm{g}^{(i)}(\bm{x}_t^{(i)};\bm{\xi}_t^{(i)}) - \nabla f_i(\bm{x}_t^{(i)})\right\|^2\\
\leq & \frac{1}{n^2}\sum\limits_{i=1}^n \sigma^2\\
= & \frac{\sigma^2}{n}
\end{align*}
then we have 
\begin{align*}
\mathbb E_{t,G}\sum_{j=1}^n\left\|\overline{\bm{g}}^{(j)}(X_t;\Xi_t)\right\|^2 \leq & \frac{\sigma^2}{n} + \sum_{j=1}^n \left\|\overline{\nabla}^{(j)}f(X_t) \right\|^2 \numberthis\label{main:1_2}
\end{align*}
Combining \eqref{main:1_1} and \eqref{main:1_2}, we have
\begin{align*}
\mathbb{E}_{t}f(\overline{X}_{t+1})\leq & f(\overline{X}_{t}) - \gamma\langle \nabla f(\overline{X}_{t}), \overline{\nabla} f(X_t) \rangle + \frac{\alpha_2 L}{2n}\sum_{j=1}^n \mathbb E_{t,G}\Tr\left( V_t^{(j)}\left(I_n-A_n\right)\left(V_t^{(j)}\right)^{\top} \right)\\
& +  \frac{\gamma^2L\sigma^2}{2n} + \frac{\gamma^2L}{2}\sum_{j=1}^n \left\|\overline{\nabla}^{(j)}f(X_t) \right\|^2\\
\leq &f(\overline{X}_{t}) - \gamma\langle \nabla f(\overline{X}_{t}), \overline{\nabla} f(X_t) \rangle  +  \frac{\gamma^2L\sigma^2}{2n} + \frac{\gamma^2L}{2}\sum_{j=1}^n \left\|\overline{\nabla}^{(j)}f(X_t) \right\|^2\\
& + \frac{\alpha_2 L(2+ 12L^2\gamma^2)}{2n}\sum_{i=1}^n\sum_{j=1}^n\left\|\left(\bm{x}_t^{(i,j)} - \overline{\bm{x}}_t^{(j)}\right)\right\|^2 + 2\alpha_2 L^2\gamma^2\sigma^2 + 6\alpha_2 L\zeta^2\gamma^2\quad \text{(due to Lemma~\ref{L:key} )}\\
= & f(\overline{X}_{t}) - \gamma\langle \nabla f(\overline{X}_{t}), \overline{\nabla} f(X_t) \rangle  +  \frac{\gamma^2L\sigma^2}{2n} + \frac{\gamma^2L}{2}\left\|\overline{\nabla}f(X_t) \right\|^2\\
& + \frac{\alpha_2 L(2+ 12L^2\gamma^2)}{2n}\sum_{i=1}^n\left\|\left(\bm{x}_t^{(i)} - \overline{\bm{x}}_t\right)\right\|^2 + 2\alpha_2 L\sigma^2\gamma^2 + 6\alpha_2 L\zeta^2\gamma^2\\
= & f(\overline{X}_{t}) -  \frac{\gamma}{2}\left\|\nabla f(\overline{X}_{t})\right\|^2 -\frac{\gamma}{2}\left\|\overline{\nabla} f(X_t)\right\|^2 + \frac{\gamma}{2}\left\|\nabla f(\overline{X}_{t})-\overline{\nabla} f(X_t)\right\|^2 +  \frac{\gamma^2L\sigma^2}{2n}\\
& + \frac{\gamma^2L}{2}\left\|\overline{\nabla}f(X_t) \right\|^2 + \frac{\alpha_2 L(2+ 12L^2\gamma^2)}{2n}\sum_{i=1}^n\left\|\left(\bm{x}_t^{(i)} - \overline{\bm{x}}_t\right)\right\|^2 + 2\alpha_2 L\sigma^2\gamma^2 + 6\alpha_2 L\zeta^2\gamma^2. \numberthis \label{main:2_1}
\end{align*}
Since 
\begin{align*}
\left\|\nabla f(\overline{X}_{t})-\overline{\nabla} f(X_t)\right\|^2 =& \frac{1}{n^2}{\left\|
	\sum_{i=1}^n\left(\nabla f_i(\overline{X}_t)  -  \nabla f_i(\bm{x}_t^{(i)})\right)\right\|^2}\\
\leq & \frac{1}{n}\sum_{i=1}^n\left\|\nabla f_i(\overline{X}_t)  -  \nabla f_i(\bm{x}_t^{(i)})\right\|^2\\
\leq & \frac{L^2}{n}\sum_{i=1}^n\left\|\overline{X}_t-\bm{x}_t^{(i)}\right\|^2.
\end{align*}
So \eqref{main:2_1} becomes
\begin{align*}
\mathbb{E}_{t}f(\overline{X}_{t+1})
\leq & f(\overline{X}_{t}) -  \frac{\gamma}{2}\left\|\nabla f(\overline{X}_{t})\right\|^2 -\frac{\gamma}{2}\left\|\overline{\nabla} f(X_t)\right\|^2 +  \frac{\gamma^2L\sigma^2}{2n}\\
& + \frac{\gamma^2L}{2}\left\|\overline{\nabla}f(X_t) \right\|^2 + \frac{\alpha_2 L(2+ 12L^2\gamma^2) + L^2\gamma}{2n}\sum_{i=1}^n\left\|\left(\bm{x}_t^{(i)} - \overline{\bm{x}}_t\right)\right\|^2 + 2\alpha_2 L\sigma^2\gamma^2 + 6\alpha_2 L\zeta^2\gamma^2\\
= & f(\overline{X}_{t}) -  \frac{\gamma}{2}\left\|\nabla f(\overline{X}_{t})\right\|^2 -\frac{\gamma(1-L\gamma)}{2}\left\|\overline{\nabla} f(X_t)\right\|^2 +  \frac{\gamma^2L\sigma^2}{2n} + 2\alpha_2 L\sigma^2\gamma^2 + 6\alpha_2 L\zeta^2\gamma^2\\
&  + \frac{\alpha_2 L(2+ 12L^2\gamma^2) + L^2\gamma}{2n}\sum_{i=1}^n\left\|\left(\bm{x}_t^{(i)} - \overline{\bm{x}}_t\right)\right\|^2 .
\end{align*}
Taking the expectation over the whole history, the inequality above becomes
\begin{align*}
\mathbb{E}f(\overline{X}_{t+1})
\leq & \mathbb{E}f(\overline{X}_{t}) -  \frac{\gamma}{2}\mathbb{E}\left\|\nabla f(\overline{X}_{t})\right\|^2 -\frac{\gamma(1-L\gamma)}{2}\mathbb{E}\left\|\overline{\nabla} f(X_t)\right\|^2 +  \frac{\gamma^2L\sigma^2}{2n} + 2\alpha_2 L\sigma^2\gamma^2 + 6\alpha_2 L\zeta^2\gamma^2\\
&  + \frac{\alpha_2 L(2+ 12L^2\gamma^2) + L^2\gamma}{2n}\sum_{i=1}^n\mathbb{E}\left\|\left(\bm{x}_t^{(i)} - \overline{\bm{x}}_t\right)\right\|^2 ,\\
\end{align*}
which implies
\begin{align*}
\frac{\gamma}{2}\mathbb{E}\left\|\nabla f(\overline{X}_{t})\right\|^2 + \frac{\gamma(1-L\gamma)}{2}\mathbb{E}\left\|\overline{\nabla} f(X_t)\right\|^2
\leq & \mathbb{E}f(\overline{X}_{t}) - \mathbb{E}f(\overline{X}_{t+1}) + \frac{\gamma^2L\sigma^2}{2n} + 2\alpha_2 L\sigma^2\gamma^2 + 6\alpha_2 L\zeta^2\gamma^2\\
& + \frac{\alpha_2 L(2+ 12L^2\gamma^2) + L^2\gamma}{2n}\sum_{i=1}^n\mathbb{E}\left\|\left(\bm{x}_t^{(i)} - \overline{\bm{x}}_t\right)\right\|^2.\numberthis\label{main:2_2}
\end{align*}
Summing up both sides of \eqref{main:2_2}, it becomes
\begin{align*}
&\sum_{t=1}^T\left( \mathbb{E}\left\|\nabla f(\overline{X}_{t})\right\|^2 + (1-L\gamma)\mathbb{E}\left\|\overline{\nabla} f(X_t)\right\|^2 \right)\\
\leq & \frac{2(f(\overline{\bm{x}}_0) - \mathbb Ef(\overline{\bm{x}}_{T+1}))}{\gamma}  + \frac{\gamma L\sigma^2 T}{n} + 4\alpha_2 L\sigma^2\gamma T + 12\alpha_2 L\zeta^2\gamma T\\
& + \frac{\alpha_2 L(2+ 12L^2\gamma^2) + L^2\gamma}{2n\gamma}\sum_{t=1}^T\sum_{i=1}^n\mathbb{E}\left\|\left(\bm{x}_t^{(i)} - \overline{\bm{x}}_t\right)\right\|^2.\numberthis\label{main:2_3}
\end{align*}
According to Lemma~\ref{L:xavekey}, we have
\begin{align*}
\sum_{t=1}^T\sum_{i=1}^n\mathbb E\left\|\bm{x}_{t}^{(i)} - \overline{\bm{x}}_{t}\right\|^2 \leq & \frac{2\gamma^2n\sigma^2TC_1}{(1-\sqrt{\beta})^2} + \frac{6n\zeta^2TC_1}{(1-\sqrt{\beta})^2},
\end{align*}
where $C_1=\left(1- \frac{6L^2\gamma^2}{(1-\sqrt{\beta})^2} \right)^{-1}$. Combing the inequality above with \eqref{main:2_3} we get
\begin{align*}
&\frac{1}{T}\sum_{t=1}^T\left( \mathbb{E}\left\|\nabla f(\overline{X}_{t})\right\|^2 + (1-L\gamma)\mathbb{E}\left\|\overline{\nabla} f(X_t)\right\|^2 \right)\\
\leq & \frac{2f(\bm{0}) -2f(\bm{x}^*)}{\gamma T}  + \frac{\gamma L\sigma^2 }{n} + 4\alpha_2 L\gamma(\sigma^2 + 3\zeta^2)\\
& + \frac{\left(\alpha_2 L\gamma(2+ 12L^2\gamma^2) + L^2\gamma^2\right)\sigma^2 C_1}{(1-\sqrt{\beta})^2} + \frac{3\left(\alpha_2 L\gamma(2+ 12L^2\gamma^2) + L^2\gamma^2\right)\zeta^2 C_1}{(1-\sqrt{\beta})^2}.
\end{align*}
\end{proof}

\section{Proof to Corollary~\ref{coro1}}
\begin{proof} [\textbf{Proof to Corollary~\ref{coro1}}]
Setting 
\begin{align*}
\gamma = \frac{1-\sqrt{\beta}}{6L + 3(\sigma+\zeta)\sqrt{\alpha_2 T} + \frac{\sigma\sqrt{T}}{\sqrt{n}}},
\end{align*}
then we have
\begin{align*}
1-L\gamma \geq & 0\\
C_1 \leq & 2\\
2 + 12 L^2\gamma^2 \leq & 4
\end{align*}
So \eqref{theo1eq} becomes
\begin{align*}
\frac{1}{T}\sum_{t=1}^T\mathbb{E}\left\|\nabla f(\overline{X}_{t})\right\|^2 
\leq & \frac{(2f(\bm{0}) -2f(\bm{x}^*) + L)\sigma}{\sqrt{nT}(1-\sqrt{\beta})} + \frac{(2f(\bm{0}) -2f(\bm{x}^*) + L)(\sigma+\zeta)}{1-\sqrt{\beta}}\sqrt{\frac{\alpha_2}{T}}\\
& + \frac{(2f(\bm{0}) -2f(\bm{x}^*))L}{T} + \frac{L^2(\sigma^2 + \zeta^2)}{(\frac{T}{n} + \alpha_2 T )\sigma^2 + \alpha_2 T \zeta^2},\\
\frac{1}{T}\sum_{t=1}^T\mathbb{E}\left\|\nabla f(\overline{X}_{t})\right\|^2 
\lesssim & \frac{\sigma + \zeta}{(1-\sqrt{\beta})\sqrt{nT}} + \frac{\sigma + \zeta}{(1-\sqrt{\beta})}\sqrt{\frac{\alpha_2}{T}}
 + \frac{1}{T} + \frac{n(\sigma^2 + \zeta^2)}{(1 + n\alpha_2  )\sigma^2 T + n\alpha_2 T \zeta^2}.
\end{align*}

\end{proof}

\section{Properties of Weighted Matrix $W_t^{(j)}$}\label{secD}
In this section, we will give three properties of $W_t^{(j)}$, described by Lemma \ref{lem: EW}, Lemma \ref{lem: EWW} and Lemma \ref{lem: EWAW}.

Throughout this section, we will frequently use the following two fact:
\paragraph{Fact 1:} $\frac{1}{m+1}\binom{n}{m} = \frac{1}{n+1}\binom{n+1}{m+1}$.
\paragraph{Fact 2:} $\frac{1}{(m+1)(m+2)}\binom{n}{m} = \frac{1}{(n+1)(n+2)}\binom{n+2}{m+2}$.
\begin{lemma} \label{lem: EW}
	Under the updating rule (\ref{eq: updatingrule}), there exists $\alpha_1\in [0,1], s.t.,  \forall j\in [n], \forall$ time $t$,
	\begin{align*}
		\mathbb{E}_{t,P} \big[W_t^{(j)}\big] = \alpha_1 I_n + (1-\alpha_1)A_n.
	\end{align*}
\begin{proof}
	Because of symmetry, we will fix $j$, say, $j=1$. So for simplicity, we omit superscript ${(j)}$ for all quantities in this proof, the subscript $t$ for $W$, and the subscript $t,P$ for $\mathbb{E}$, because they do not affect the proof.
	
	First we proof: $\exists \alpha_1, s.t. $
	\begin{equation} \label{eq: EW}
		\mathbb{E} [W]= \alpha_1 I_n + (1-\alpha_1)A_n.
	\end{equation}
	
	Let us understand the meaning of the element of $W$. For the $(k,l)$th element $W_{kl}$. From $X_{t+1} = V_tW$, we know that, $W_{kl}$ represents the portion that $\bm{v}_t^{(l)}$ will be in $\bm{x}_{t+1}^{(k)}$ (the block number ${j}$ has been omitted, as stated before). For $\bm{v}_t^{(l)}$ going into $\bm{x}_{t+1}^{(k)}$, it should first sent from $k$, received by node $b_t$ (also omit ${j}$), averaged with other $j$th blocks by node $b_t$, and at last sent from $b_t$ to $l$. For all pairs $(k,l)$ satisfied $k\not= l$, the expectations of $W_{kl}$ are equivalent because of the symmetry (the same packet drop rate, and independency). For the same reason, the expectations of $W_{kl}$ are also equivalent for all pairs $(k,l)$ satisfied $k=l$. But for two situations that $k=l$ and $k\not=l$, the expectation need not to be equivalent. This is because when the sending end $l$ is also the receiving end $k$, node $l$ (or $k$) will always keep its own portion $\bm{v}_t^{(l)}$ if $l$ is also the node dealing with block $j$, which makes a slight different.

\end{proof}
\end{lemma}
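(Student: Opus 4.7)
The plan is to exploit the invariance of the RPS procedure under relabeling of worker indices. Fix the block index $j$ and drop it from the notation. Spelling out the two-step description of the algorithm yields, for $m := b_t^{(j)}$ the server chosen for block $j$, $r_\ell \in \{0,1\}$ the indicator that the Reduce-Scatter packet from $\ell$ reached $m$, and $g_k \in \{0,1\}$ the indicator that the All-Gather packet from $m$ reached $k$ (with the convention $r_m = g_m = 1$),
\begin{equation*}
W_{\ell,k} \;=\; (1-g_k)\,\mathbf{1}\{\ell = k\} \;+\; g_k\,\frac{r_\ell}{|\mathcal{N}_t^{(j)}|}.
\end{equation*}
Here $\{r_\ell\}_{\ell \ne m}$ and $\{g_k\}_{k \ne m}$ are i.i.d.\ Bernoulli$(1-p)$, and $m$ is uniform on $[n]$ independently of the drops.

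First I would observe that for any pair $k \ne \ell$, permuting the worker labels that fixes $m$ preserves the joint law of the randomness, and the choice of $m$ is itself uniform; hence $\mathbb{E}_{t,P}[W_{\ell,k}]$ depends only on whether $\ell = k$. Writing $a := \mathbb{E}_{t,P}[W_{kk}]$ and $b := \mathbb{E}_{t,P}[W_{\ell,k}]$ for $\ell \ne k$, this gives
\begin{equation*}
\mathbb{E}_{t,P}[W_t^{(j)}] \;=\; a\,I_n + b\,(\bm{1}_n\bm{1}_n^\top - I_n) \;=\; (a-b)\,I_n + nb\,A_n.
\end{equation*}
Next I would invoke a deterministic column-sum identity: for every column $k$, if $g_k = 0$ then $W_{\ell,k} = \mathbf{1}\{\ell = k\}$, and if $g_k = 1$ then $\sum_\ell W_{\ell,k} = \sum_\ell r_\ell/|\mathcal{N}_t^{(j)}| = 1$ by definition of $\mathcal{N}_t^{(j)}$. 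Either way the column sums to $1$, so taking expectations gives $a + (n-1)b = 1$. Setting $\alpha_1 := a - b$ then makes $nb = 1 - \alpha_1$, and we recover the claimed identity $\mathbb{E}_{t,P}[W_t^{(j)}] = \alpha_1 I_n + (1-\alpha_1) A_n$.

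Finally I would verify $\alpha_1 \in [0,1]$. Since entries of $W$ are non-negative, $b \ge 0$, giving $\alpha_1 = 1 - nb \le 1$. For the lower bound, I would condition on the designated server $m$ and on $g_k$ and appeal to exchangeability of $r_k$ and $r_\ell$ inside the normalizer $|\mathcal{N}_t^{(j)}|$, obtaining
\begin{equation*}
\mathbb{E}_{t,P}\!\left[W_{kk} - W_{\ell,k}\,\middle|\, m, g_k\right] \;=\; (1-g_k) \;+\; g_k\,\mathbb{E}_{t,P}\!\left[\tfrac{r_k - r_\ell}{|\mathcal{N}_t^{(j)}|}\,\middle|\, m\right].
\end{equation*}
When $k, \ell \ne m$ the swap $r_k \leftrightarrow r_\ell$ is measure-preserving and leaves $|\mathcal{N}_t^{(j)}|$ invariant, so the conditional expectation on the right vanishes; the boundary cases $k = m$ or $\ell = m$ only add a non-negative contribution because the deterministic $r_m = 1$ stochastically dominates $r_\ell$. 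Averaging over the uniform choice of $m$ preserves the sign, so $a - b \ge 0$, completing the verification.

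The main obstacle will be that the normalizer $|\mathcal{N}_t^{(j)}| = \sum_\ell r_\ell$ couples all of the Reduce-Scatter Bernoullis, so no naive factorization of expectations is available for computing $a$ and $b$ individually. The exchangeability argument above is what lets me avoid ever having to evaluate the combinatorial sums $\mathbb{E}[r_\ell/|\mathcal{N}_t^{(j)}|]$ in closed form, reducing the whole claim to a pair of sign checks plus the column-sum constraint.
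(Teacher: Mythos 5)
Your proposal follows essentially the same route as the paper's own (very terse) proof: the reduction of $\mathbb{E}[W]$ to the two-parameter form $aI_n + b(\bm{1}_n\bm{1}_n^{\top}-I_n)$ via exchangeability of worker labels is exactly the paper's symmetry argument, and your explicit representation $W_{\ell,k}=(1-g_k)\mathbf{1}\{\ell=k\}+g_k r_\ell/|\mathcal{N}_t^{(j)}|$ plus the deterministic column-sum identity $a+(n-1)b=1$ makes the passage to $\alpha_1 I_n+(1-\alpha_1)A_n$ more complete than what the paper actually writes down. The paper never verifies $\alpha_1\in[0,1]$ at all, so your sign checks go beyond its proof; they are also where the one genuine problem lies.

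In the verification of $a-b\ge 0$ you assert that the boundary cases $k=m$ and $\ell=m$ "only add a non-negative contribution because the deterministic $r_m=1$ stochastically dominates $r_\ell$." For $k=m$ this is correct, but for $\ell=m$ the domination points the wrong way: there $r_\ell=r_m=1\ge r_k$, so $\mathbb{E}\bigl[(r_k-r_\ell)/|\mathcal{N}_t^{(j)}|\,\big|\,m=\ell\bigr]\le 0$ and the second term is a \emph{non-positive} contribution. The conclusion survives, but only by combining with the $(1-g_k)$ term you dropped in that case: conditional on $m=\ell$ one has
\begin{equation*}
\mathbb{E}\bigl[W_{kk}-W_{\ell,k}\,\big|\,m=\ell\bigr]\;=\;p+(1-p)\,\mathbb{E}\!\left[\tfrac{r_k-1}{|\mathcal{N}_t^{(j)}|}\right]\;\ge\;p-(1-p)p\;=\;p^2\;\ge\;0,
\end{equation*}
using $|\mathcal{N}_t^{(j)}|\ge 1$ and $1-r_k\in\{0,1\}$. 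An even shorter repair: your column-sum identity gives $\alpha_1=(na-1)/(n-1)$, so $\alpha_1\ge 0$ is equivalent to $a=\mathbb{E}[W_{kk}]\ge 1/n$, which follows from $W_{kk}\ge(1-g_k)+g_k r_k/n$ (since $|\mathcal{N}_t^{(j)}|\le n$) together with the independence of the RS and AG drops. With either repair the argument is complete.
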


\begin{lemma} \label{lem: EWW}
	Under the updating rule (\ref{eq: updatingrule}), there exists $\alpha_1\in [0,1], s.t.,  \forall j\in [n], \forall$ time $t$, 
	\begin{align*}
		\mathbb{E}_{t,P} \Big[W_t^{(j)}{W_t^{(j)}}^{\top}\Big] = \alpha_1 I_n + (1-\alpha_1)A_n.
	\end{align*}
	Moreover, $\alpha_1$ satisfies:
	\begin{equation*}
	 	\alpha_1 \leq  \frac{np + (1-p)^n + nT_1 + nT_2 - 1}{n-1},
	\end{equation*}
	where
	\begin{align*}
		T_1 &= \frac{2\big(1-p^{n+1}-(n+1)(1-p)p^n - (n+1)n(1-p)^2p^{n-1}/2 - (1-p)^{n+1}\big)}{n(n+1)(1-p)^2},\\
		T_2 &=\frac{1-p^n-n(1-p)p^{n-1}-(1-p)^n}{(n-1)(1-p)}.
	\end{align*}
\begin{proof}
Similar to Lemma (\ref{lem: EW}), we fix $j=1$, and omit superscript $(j)$ for all quantities in this proof, the subscript $t$ for W and the subscript $t,P$ for $\mathbb{E}$.

Also similar to Lemma (\ref{lem: EW}), there exists $\alpha, s.t.$
\begin{equation} \label{eq: EWW}
	\mathbb{E} \big[WW^{\top}\big] = \alpha_1I_n + (1-\alpha_1)A_n
\end{equation}

The only thing left is to bound $\alpha_1$. From (\ref{eq: EWW}), we know that $\alpha_1 = \frac{n\mathbb{E}[WW^{\top}]_{(1,1)}-1}{n-1}$. After simple compute, we have $\mathbb{E}[WW^{\top}]_{(1,1)} = \mathbb{E} \Big[\sum\limits_{i=1}^n W_{1,i}^2\Big]$. So we have the following:
\begin{align*}
	\alpha_1 = \frac{n\mathbb{E} \Big[\sum\limits_{i=1}^n W_{1,i}^2\Big]-1}{n-1}.
\end{align*}
 Therefore, bounding $\alpha_1$ equals bounding $\mathbb{E} \Big[\sum\limits_{i=1}^n W_{1,i}^2\Big]$. Similar to Lemma (\ref{lem: EW}), we denote the event "node 1 deal with the first block" by $A$.

 \paragraph{Case 1: node 1 deal with the first block}  In this case, let's understand $W$ again. node 1 average the 1st blocks it has received, then broadcast to all nodes. Therefore, for every node $i$ who received this averaged block, $\bm{x}_{t+1}^{(i)}$ has the same value, in other words, the column $i$ of $W$ equals, or, $W_{1,i}$ equals to $W_{1,1}$.  On the other hand, for every node $i$ who did not receive this averaged block, they keep their origin model $v_t^{(i)}$. But $i\not=1$ (because node 1 deal with this block, itself must receive its own block), which means $W_{1,i} = 0$.
 
 Therefore, for $i\not=1, i\in [n]$, if node $i$ receive the averaged model, $W_{1,i} = W_{1,1}$. Otherwise, $W_{1,i} = 0$. Based on this fact, we can define the random variable $B_i$ for $i\not=1, i\in [n]$. $B_i = 1$ if node $i$ receive the averaged block., $B_i=0$ if node $i$ does not receive the averaged block. Immediately, we can obtain the following equation:
 \begin{equation} \label{eq: 21}
 	\mathbb{E} \Big[\sum\limits_{i=1}^n W_{1,i}^2\mid A\Big] = \mathbb{E}\Big[W_{1,1}^2\cdot\Big(\sum\limits_{i=2}^n B_i+1\Big)\mid A\Big] = \mathbb{E}[W_{1,1}^2 \mid A]\cdot\big(1+(n-1)\mathbb{E}[B_n\mid A]\big).
\end{equation}
The last equation results from that $A,B_2,\cdots, B_n$ are independent and $B_2, \cdots, B_n$ are from identical distribution.

First let's compute $\mathbb{E}[W_{1,1}^2 \mid A]$. If node $i$ received the 1st block from $m (0\le m \le n-1)$ nodes (except itself), then $W_{1,1} = 1/(m+1)$. The probability of this event is $\binom{n-1}{m}(1-p)^mp^{n-1-m}$. So we can obtain:
\begin{align*}
	\mathbb{E}[W_{1,1}^2 \mid A] &= \sum\limits_{m=0}^{n-1}\frac{1}{(m+1)^2}\binom{n-1}{m}(1-p)^mp^{n-1-m}\\
	&= \frac{1}{n^2}(1-p)^{n-1} + p^{n-1}+ \sum\limits_{m=1}^{n-2}\frac{1}{(m+1)^2}\binom{n-1}{m}(1-p)^mp^{n-1-m}\\
	&\le \frac{1}{n^2}(1-p)^{n-1} + p^{n-1}+\sum\limits_{m=1}^{n-2}\frac{2}{(m+1)(m+2)}\binom{n-1}{m}(1-p)^mp^{n-1-m}\\
	&\xlongequal{\text{Fact 2}}\frac{1}{n^2}(1-p)^{n-1}  + p^{n-1}+\sum\limits_{m=1}^{n-2}\frac{2}{n(n+1)}\binom{n+1}{m+2}(1-p)^mp^{n-1-m}\\
	&=\frac{1}{n^2}(1-p)^{n-1} + p^{n-1}+\frac{2}{n(n+1)}\sum\limits_{m=3}^{n}\binom{n+1}{m}(1-p)^{m-2}p^{n+1-m}\\
	&=\frac{1}{n^2}(1-p)^{n-1} + p^{n-1}
	  +\frac{2}{n(n+1)(1-p)^2}\sum\limits_{m=3}^{n}\binom{n+1}{m}(1-p)^{m}p^{n+1-m}\\
	&=\frac{1}{n^2}(1-p)^{n-1} +p^{n-1}\\
	& \quad + \frac{2\big(1-p^{n+1}-(n+1)(1-p)p^n - (n+1)n(1-p)^2p^{n-1}/2 - (1-p)^{n+1}\big)}{n(n+1)(1-p)^2}\\
	&\le \frac{1}{n^2}(1-p)^{n-1} + p^{n-1} +T_1,
\end{align*}
where we denote $T_1 := \frac{2\big(1-p^{n+1}-(n+1)(1-p)p^n - (n+1)n(1-p)^2p^{n-1}/2 - (1-p)^{n+1}\big)}{n(n+1)(1-p)^2}$.

Next let's compute $\mathbb{E}[B_n\mid A]$. $B_n$ is just a $0-1$ distribution, with success probability $1-p$. Therefore, $\mathbb{E}[B_n\mid A] = 1-p$.

Applying all these equations into (\ref{eq: 21}), we can get:
\begin{align*}
	\mathbb{E} \Big[\sum\limits_{i=1}^n W_{1,i}^2\mid A\Big] \leq & \left(\frac{1}{n^2}(1-p)^{n-1} + p^{n-1} + T_1\right)(1+(n-1)(1-p))\\
	\leq&\frac{(1-p)^{n-1}}{n} + p^{n-1} + nT_1\numberthis\label{case1_1}
\end{align*}

\paragraph{Case 2: node 1 does not deal with the first block and node 1 does not receive averaged block} We define a new event $C$, representing that node 1 does not receive the averaged block. So, \textbf{Case 2} equals the event $\bar{A}\cap C$. In this case, node 1 keeps its origin block $\bm{v}_t^{(1)}$, which means $W_{1,1} = 1$. 

Again due to symmetry, we can suppose that node $n$ deal with the first block. Then we can use the method in \textbf{Case 1}. But in this case, we only use $B_2, \cdots, B_{n-1}$, because node $n$ must receive its own block and node 1 does not receive averaged block, and we use $W_{1,n}$ instead of $W_{1,1}$. Then, we obtain:
\begin{align} \label{eq: EWW2}
	\mathbb{E} \Big[\sum\limits_{i=1}^n W_{1,i}^2\mid \bar{A}, C\Big] &= 1 + \mathbb{E}\Big[W_{1,n}^2\cdot\Big(\sum\limits_{i=2}^{n-1} B_i+1\Big)\mid \bar{A},C\Big]\\
	& = 1 + \mathbb{E}[W_{1,n}^2 \mid \bar{A},C]\cdot\big(1+(n-2)\mathbb{E}[B_{n-1}\mid \bar{A},C]\big).
\end{align}
Here, we similarly have $\mathbb{E}[B_{n-1}\mid \bar{A},C] = 1-p$, but we need to compute $\mathbb{E} [W_{1,n}^2\mid \bar{A},C]$. When the 1st block from node 1 is not received by node $n$, $W_{1,n} = 0$. If node 1's block is received, together with other $m, (0\le m\le n-2)$ nodes' blocks, then $W_{1,n} = 1/(m+2)$ (node $n$'s block is always received by itself). The probability of this event is $\binom{n-2}{m}(1-p)^{m+1}p^{n-2-m}$. Therefore,
\begin{align*}
	\mathbb{E}[W_{1,n}^2 \mid \bar{A},C] &= \sum\limits_{m=0}^{n-2}\frac{1}{(m+2)^2}\binom{n-2}{m}(1-p)^{m+1}p^{n-2-m}\\
	&=\frac{1}{n^2}(1-p)^{n-1} + \sum\limits_{m=0}^{n-3}\frac{1}{(m+2)^2}\binom{n-2}{m}(1-p)^{m+1}p^{n-2-m}\\
	&\le \frac{1}{n^2}(1-p)^{n-1} + \sum\limits_{m=0}^{n-3}\frac{1}{(m+2)(m+1)}\binom{n-2}{m}(1-p)^{m+1}p^{n-2-m}\\
	&\xlongequal{\text{Fact 2}}\frac{1}{n^2}(1-p)^{n-1} + \sum\limits_{m=0}^{n-3}\frac{1}{n(n-1)}\binom{n}{m+2}(1-p)^{m+1}p^{n-2-m}\\
	&=\frac{1}{n^2}(1-p)^{n-1} + \frac{1}{n(n-1)}\sum\limits_{m=2}^{n-1} \binom{n}{m}(1-p)^{m-1}p^{n-m}\\
	&=\frac{1}{n^2}(1-p)^{n-1} + \frac{1}{n(n-1)(1-p)}\sum\limits_{m=2}^{n-1} \binom{n}{m}(1-p)^{m}p^{n-m}\\
	&=\frac{1}{n^2}(1-p)^{n-1} + \frac{1-p^n-n(1-p)p^{n-1}-(1-p)^n}{n(n-1)(1-p)}\\
	&\le \frac{1}{n^2}(1-p)^{n-1} + \frac{1}{n}T_2,
\end{align*}
where $T_2:=\frac{1-p^n-n(1-p)p^{n-1}-(1-p)^n}{(n-1)(1-p)}$.

Applying these equations into (\ref{eq: EWW2}), we get:
\begin{align*}
	\mathbb{E} \Big[\sum\limits_{i=1}^n W_{1,i}^2\mid \bar{A}, C\Big] \leq & 1 + \left(\frac{1}{n^2}(1-p)^{n-1} + \frac{1}{n}T_2\right)\cdot(1+(n-2)(1-p))\\
	\leq & 1 + \frac{(1-p)^{n-1}}{n} +  T_2\numberthis\label{case2_1}
\end{align*}

\paragraph{Case 3: node 1 does not deal with the first block and node 1 receives averaged block} This is the event $\bar{C}\cap \bar{A}$. Similar to the analysis above, we have:
\begin{align*}
	\mathbb{E} \Big[\sum\limits_{i=1}^n W_{1,i}^2\mid \bar{A}, \bar{C}\Big] &= \mathbb{E}\Big[W_{1,1}^2\cdot\Big(\sum\limits_{i=2}^{n-1} B_i+2\Big)\mid \bar{A},\bar{C}\Big]\\
	&=\mathbb{E}\Big[W_{1,1}^2\mid \bar{A},\bar{C}\Big]\cdot (2+(n-2)\mathbb{E} [B_2 \mid \bar{A},\bar{C}])
\end{align*}

Similarly, we have $\mathbb{E} [B_2\mid \bar{A}, \bar{C}] = 1-p$. For $\mathbb{E}\Big[W_{1,1}^2\mid \bar{A},\bar{C}\Big]$, the argument is the same as $\mathbb{E} [W_{1,n}^2\mid \bar{A},C]$ in \textbf{Case 2}. So, we have:
\begin{align*}
	\mathbb{E}\Big[W_{1,1}^2\mid \bar{A},\bar{C}\Big] \le \frac{1}{n^2}(1-p)^{n-1} + \frac{1}{n}T_2.
\end{align*}

Applying these together, we can obtain:
\begin{align*}
	\mathbb{E} \Big[\sum\limits_{i=1}^n W_{1,i}^2\mid \bar{A}, \bar{C}\Big] \leq & \left(\frac{1}{n^2}(1-p)^{n-1} + \frac{1}{n}T_2\right)\cdot(2+(n-2)(1-p))\\
	\leq & \frac{(1-p)^{n-1}}{n} + T_2\numberthis\label{case3_1}
\end{align*}

Combined with three cases, $P(A) = 1/n$, $P(\bar{A},C) = p(n-1)/n$, and $P(\bar{A}, \bar{C}) = (1-p)(n-1)/n$, we have 
\begin{align*}
	\mathbb{E} \Big[\sum\limits_{i=1}^n W_{1,i}^2\Big] \leq& \frac{1}{n}\mathbb{E} \Big[\sum\limits_{i=1}^n W_{1,i}^2\mid A\Big] + \frac{p(n-1)}{n} \mathbb{E} \Big[\sum\limits_{i=1}^n W_{1,i}^2\mid \bar{A}, C\Big] 
	 + \frac{(1-p)(n-1)}{n}\mathbb{E} \Big[\sum\limits_{i=1}^n W_{1,i}^2\mid \bar{A}, \bar{C}\Big].	
\end{align*}
Combing the inequality above and \eqref{case1_1} \eqref{case2_1} \eqref{case3_1} together, we get
\begin{align*}
\mathbb{E} \Big[\sum\limits_{i=1}^n W_{1,i}^2\Big] \leq& \frac{(1-p)^n}{n^2} + \frac{p^n}{n} + T_1  + \frac{p(n-1)}{n} + \frac{p(1-p)^{n-1}}{n} + T_2p + \frac{(1-p)^n(n-1)}{n^2} + T_2(1-p) \\
\leq & p + \frac{(1-p)^n}{n} + T_1 + T_2  \\
\alpha_1 \leq & \frac{np + (1-p)^n + nT_1 + nT_2 - 1}{n-1}
\end{align*}

\end{proof}
\end{lemma}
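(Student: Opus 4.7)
The plan is to separate the claim into two pieces: the structural form of $\mathbb{E}[W W^\top]$, and the quantitative bound on $\alpha_1$. For the structure, I would exploit the permutation symmetry of $W := W_t^{(j)}$: since the averaging worker $b_t^{(j)}$ is chosen uniformly and all packet drops are i.i.d.\ and independent of $b_t^{(j)}$, the law of $W$ is invariant under conjugation by every permutation matrix $P_\pi$. Hence $\mathbb{E}[WW^\top]$ commutes with every $P_\pi$ and must take the form $d\,I_n + o\,(J_n - I_n)$ for scalars $d = \mathbb{E}[(WW^\top)_{11}]$ and $o = \mathbb{E}[(WW^\top)_{12}]$. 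To pin these down, I would invoke the conservation law $\mathbf{1}_n^\top W = \mathbf{1}_n^\top$: every column of $W$ sums to $1$, because column $k$ is either the normalized indicator of the averaged set or the standard basis vector $e_k$. Combined with Lemma~\ref{lem: EW}, this yields $\mathbb{E}[WW^\top]\mathbf{1}_n = \mathbb{E}[W]\mathbf{1}_n = \mathbf{1}_n$, hence $d + (n-1)o = 1$. The matrix form then collapses to $\alpha_1 I_n + (1-\alpha_1) A_n$ with $\alpha_1 = d - o = (nd - 1)/(n-1)$ and $d = \mathbb{E}[\sum_k W_{1k}^2]$.

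To bound $d$, I would condition on three disjoint events regarding worker $1$'s role for block $1$: (A) worker $1$ is itself the averaging worker (prior $1/n$); (B) worker $1$ is not the averaging worker and its broadcast reception fails (prior $p(n-1)/n$); (C) worker $1$ is not the averaging worker but its broadcast reception succeeds (prior $(1-p)(n-1)/n$). In each case the nonzero entries of row~$1$ of $W$ all share a single common value $W_{1,\star}$, determined by the size of the averaging set, while the set of nonzero columns is determined by independent Bernoulli indicators with parameter $1-p$ for whether each other worker receives the broadcast. Thus $\sum_k W_{1k}^2$ factors as $W_{1,\star}^2$ times (a constant plus) a sum of independent Bernoulli indicators, so by independence its conditional expectation factors.

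The main obstacle is evaluating $\mathbb{E}[W_{1,\star}^2 \mid \text{case}]$, which is a sum of the form $\sum_{m=0}^{n-1} \frac{1}{(m+1)^2}\binom{n-1}{m}(1-p)^m p^{n-1-m}$ or a $1$-shifted variant, where $m$ counts how many other blocks reach the averaging worker. I would dominate $\frac{1}{(m+1)^2} \le \frac{2}{(m+1)(m+2)}$ and then apply the identity $\binom{n-1}{m}/\big((m+1)(m+2)\big) = \binom{n+1}{m+2}/\big(n(n+1)\big)$ to recast the series as a partial expansion of $(p+(1-p))^{n+1} = 1$ minus its $m' \in \{0,1,2\}$ boundary terms, after pulling out a $(1-p)^{-2}$ factor; collecting these pieces produces exactly the constants $T_1$ and $T_2$. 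Assembling the three conditional bounds weighted by their prior probabilities and simplifying yields the stated upper bound $\alpha_1 \le (np + (1-p)^n + nT_1 + nT_2 - 1)/(n-1)$.
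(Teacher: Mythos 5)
Your proposal follows essentially the same route as the paper's proof: the same reduction of $\alpha_1$ to $\mathbb{E}\big[\sum_{k} W_{1,k}^2\big]$ via symmetry, the same three-case conditioning on worker $1$'s role with priors $1/n$, $p(n-1)/n$, $(1-p)(n-1)/n$, the same independence factorization over the broadcast indicators, and the same binomial-series bound via $\tfrac{1}{(m+1)^2}\le\tfrac{2}{(m+1)(m+2)}$ and the hockey-stick-type identity, yielding $T_1$ and $T_2$. The only (welcome) addition is that you make the structural step explicit---permutation-conjugation invariance of the law of $W$ plus its column-stochasticity---where the paper simply appeals to symmetry.
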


\begin{lemma} \label{lem: EWAW}
	Under the updating rule (\ref{eq: updatingrule}), there exists $\alpha_2\in [0,1], s.t.,  \forall j\in [n], \forall$ time $t$,
	\begin{align*}
		\mathbb{E}_{t,P} \Big[W_t^{(j)}A_n{W_t^{(j)}}^{\top}\Big] = \alpha_2 I_n + (1-\alpha_2)A_n.
	\end{align*}
	Moreover, $\alpha_2$ satisfies: 
	\begin{equation*}
		\alpha_2 \le \frac{p(1+2T_3) + (1-p)^{n-1}}{n} + \frac{2p(1-p)^n}{n} + \frac{p^n(1-p)}{n^2} + T_1 + T_2,
	\end{equation*}
	where
	\begin{align*}
		T_1 &= \frac{2\big(1-p^{n+1}-(n+1)(1-p)p^n - (n+1)n(1-p)^2p^{n-1}/2 - (1-p)^{n+1}\big)}{n(n+1)(1-p)^2},\\
		T_2 &=\frac{1-p^n-n(1-p)p^{n-1}-(1-p)^n}{(n-1)(1-p)},\\
		T_3 &= \frac{n}{n-1}\left(1-p^{n-1} - (1-p)^{n-1}\right) + (1-p)^{n-1}.				
	\end{align*}
\begin{proof}
Similar to Lemma (\ref{lem: EW}) and Lemma (\ref{lem: EWW}), we fix $j=1$, and omit superscript $(j)$ for all quantities in this proof, the subscript $t$ for W and the subscript $t,P$ for $\mathbb{E}$. And we still use $A$ to denote the event "node 1 deal with the first block", use the binary random variable $B_i$ to denote whether node $i$ receive the averaged block. The definitions is the same to them in Lemma \ref{lem: EWW}.  

Again similar to Lemma (\ref{lem: EW}), there exists $\alpha_2, s.t.$
\begin{equation} \label{eq: EWAW}
	\mathbb{E} \big[WA_nW^{\top}\big] = \alpha_2I_n + (1-\alpha_2)A_n.
\end{equation}
The only thing left is to bound $\alpha_2$. From (\ref{eq: EWAW}), we know that $\alpha_2 = \frac{n\mathbb{E}[WA_nW^{\top}]_{(1,1)}-1}{n-1}$. After simple compute, we have $\mathbb{E}[WA_nW^{\top}]_{(1,1)} =  \mathbb{E} \left[\left(\sum\limits_{i=1}^n W_{1,i}\right)^2\right]/n$. So we have the following:
\begin{align*}
	\alpha_2 = \frac{\mathbb{E} \left[\left(\sum\limits_{i=1}^n W_{1,i}\right)^2\right]-1}{n-1}.
\end{align*}
Therefore, bounding $\alpha_2$ equals bounding $\mathbb{E} \left[\left(\sum\limits_{i=1}^n W_{1,i}\right)^2\right]$.

\paragraph{Case 1: node 1 deal with the first block} 
 In this case, $\sum\limits_{i=1}^n W_{1,i} = W_{1,1}\cdot \left(1+\sum\limits_{i=2}^n B_i \right)$, which means, $\left(\sum\limits_{i=1}^n W_{1,i}\right)^2 = W_{1,1}^2\cdot \left(1+\sum\limits_{i=2}^n B_i\right)^2$. Similar to Lemma \ref{lem: EWW}, $A$ and $\left\{B_i\right\}_{i=2}^n$ are independent, so we have:
 \begin{equation*}
 	\mathbb{E} \left[\left(\sum\limits_{i=1}^n W_{1,i}\right)^2\mid A\right] = \mathbb{E} \left[W_{1,1}^2\mid A\right]\cdot \mathbb{E} \left[ \left(1+\sum\limits_{i=2}^n B_i\right)^2\mid A\right].
\end{equation*}

From Lemma \ref{lem: EWW}, we have 
\begin{align*}
\mathbb{E} \left[W_{1,1}^2\mid A\right] \le \frac{1}{n^2}(1-p)^{n-1} + p^{n-1} +T_1.
\end{align*}
For $\mathbb{E} [(1+\sum\limits_{i=2}^n B_i)^2\mid A]$, since $\left\{B_i\right\}_{i=2}^n$ are independent, we have the following:
\begin{align*}
	\mathbb{E} \left[ \left(1+\sum\limits_{i=2}^n B_i\right)^2\mid A\right]&=\left(\mathbb{E} \left[1+\sum\limits_{i=2}^n B_i\mid A\right]\right)^2 + \mathrm{Var}\left[1+\sum\limits_{i=2}^n B_i\mid A\right]\\
	& = \left(1+\left(n-1\right)\left(1-p\right)\right)^2 + (n-1)p(1-p)
\end{align*}

Combined these together, we obtain:
\begin{align*}
	&\mathbb{E} \left[\left(\sum\limits_{i=1}^n W_{1,i}\right)^2\mid A\right]\\
	\leq & \left(\frac{1}{n^2}(1-p)^{n-1} + p^{n-1} +T_1\right)\left(\left(1+\left(n-1\right)\left(1-p\right)\right)^2 + (n-1)p(1-p)\right)\\
	\leq & \frac{(1-p)^{n-1}(1+(n-1)(1-p))^2}{n^2} + p^{n-1}(1+(n-1)(1-p))^2 + \frac{p^n(1-p) + (1-p)^np}{n} + n^2 T_1
\end{align*}

\paragraph{Case 2: node 1 does not deal with the first block and node 1 does not receive averaged block} In this case, $\sum\limits_{i=1}^n W_{1,i} = 1 + W_{1,n}\cdot\left(\sum\limits_{i=2}^{n-1}B_i + 1\right)$ (suppose node $n$ deal with the first block). So we have:
\begin{equation*}
	\left(\sum\limits_{i=1}^n W_{1,i}\right)^2 = 1 + 2W_{1,n}\left(\sum\limits_{i=2}^{n-1}B_i + 1\right) + W_{1,n}^2\left(\sum\limits_{i=2}^{n-1}B_i+1\right)^2,
\end{equation*}
which means (notice $W_{1,n}$ and $\{B_i\}_{i=2}^{n-1}$ are independent),
\begin{equation*}
	\mathbb{E} \left[\left(\sum\limits_{i=1}^n W_{1,i}\right)^2\mid \bar{A},C\right] = 1 + 2\mathbb{E}\left[W_{1,n}\mid \bar{A},C\right]\mathbb{E} \left[\sum\limits_{i=2}^{n-1}B_i+1\mid \bar{A},C\right] + \mathbb{E}\left[W_{1,n}^2\mid \bar{A},C\right]\mathbb{E} \left[\left(\sum\limits_{i=2}^{n-1} B_i + 1\right)^2\mid \bar{A},C\right].
\end{equation*}

First let's consider $\mathbb{E}[W_{1,n}\mid \bar{A},C]$. Similar to the analysis of \textbf{Case 2} in Lemma \ref{lem: EWW} except instead first moment of second moment, we have:
\begin{align*}
	\mathbb{E}[W_{1,n}\mid \bar{A},C] &= \sum\limits_{m=0}^{n-2} \frac{1}{m+2}\binom{n-2}{m}(1-p)^{m+1}p^{n-2-m}\\
	&= \frac{1}{n}(1-p)^{n-1} + \sum\limits_{m=0}^{n-3} \frac{1}{m+2}\binom{n-2}{m}(1-p)^{m+1}p^{n-2-m}\\
	&\le \frac{1}{n}(1-p)^{n-1} + \sum\limits_{m=0}^{n-3} \frac{1}{m+1}\binom{n-2}{m}(1-p)^{m+1}p^{n-2-m}\\
	&\xlongequal{\text{Fact 1}}\frac{1}{n}(1-p)^{n-1} + \sum\limits_{m=0}^{n-3} \frac{1}{n-1}\binom{n-1}{m+1}(1-p)^{m+1}p^{n-2-m}\\
	&=\frac{1}{n}(1-p)^{n-1} + \frac{1}{n-1}\sum\limits_{m=1}^{n-2}\binom{n-1}{m}(1-p)^mp^{n-1-m}\\
	&=\frac{1}{n}(1-p)^{n-1} + \frac{1}{n-1}\left(1-p^{n-1} - (1-p)^{n-1}\right)\\
	& = \frac{T_3}{n},
\end{align*}
where we denote $T_3:= \frac{n}{n-1}\left(1-p^{n-1} - (1-p)^{n-1}\right) + (1-p)^{n-1}$.

Next, from Lemma \ref{lem: EWW}, we have 
\begin{align*}
\mathbb{E}\left[W_{1,n}^2\mid \bar{A},C\right]\le \frac{1}{n^2}(1-p)^{n-1} + \frac{1}{n}T_2.
\end{align*}

Next we deal with item with $B_i$.
We have the following:
\begin{equation*}
	\mathbb{E} \left[\sum\limits_{i=2}^{n-1}B_i + 1\mid \bar{A},C\right] = (n-2)(1-p) + 1
\end{equation*}
\begin{align*}
	\mathbb{E} \left[\left(\sum\limits_{i=2}^{n-1}B_i+1\right)^2\mid \bar{A},C\right] &= 1 + 2(n-2)\mathbb{E} [B_2 \mid \bar{A},C] + \mathbb{E} \left[\left(\sum\limits_{i=2}^{n-1}B_i\right)^2\mid \bar{A},C\right]\\
	&=1+2(n-2)(1-p) + \left(\mathbb{E} \left[\sum\limits_{i=2}^{n-1}B_i\mid \bar{A},C\right]\right)^2 + \mathrm{Var}\left[\sum\limits_{i=2}^{n-1}B_i\mid \bar{A},C\right]\\
	&=1+2(n-2)(1-p) + (n-2)^2(1-p)^2 + (n-2)p(1-p)
\end{align*}
Combining those terms together we get
\begin{align*}
&\mathbb{E} \left[\left(\sum\limits_{i=1}^n W_{1,i}\right)^2\mid \bar{A},C\right]\\
\leq & 
1 + \frac{(1-p)^{n-1}}{n^2} + \frac{(n-2)(1-p)^n}{n} + \frac{p(1-p)^n}{n} + nT_2  + 2T_3
\end{align*}

\paragraph{Case 3: node 1 does not deal with the first block and node 1 receives averaged block} In this case, $\sum\limits_{i=1}^n W_{1,i} = W_{1,1}\cdot \left(\sum\limits_{i=2}^{n-1}B_i + 2\right)$. So we have: 
\begin{equation*}
	\left(\sum\limits_{i=1}^n W_{1,i}\right)^2 = W_{1,1}^2\cdot \left(\sum\limits_{i=2}^{n-1}B_i + 2\right)^2,
\end{equation*}
which means (notice that $W_{1,1}$ and $\{B_i\}_{i=2}^{n-1}$ are independent)
\begin{equation*}
	\mathbb{E} \left[\left(\sum\limits_{i=1}^n W_{1,i}\right)^2 \mid \bar{A}, \bar{C}\right] = \mathbb{E} \left[W_{1,1}^2\mid \bar{A}, \bar{C}\right]\cdot \mathbb{E} \left[\left(\sum\limits_{i=2}^{n-1}B_i + 2\right)^2\mid \bar{A}, \bar{C}\right].
\end{equation*}

Similar to Lemma \ref{lem: EWW}, $\mathbb{E} \left[W_{1,1}^2\mid \bar{A}, \bar{C}\right]$ is the same as $\mathbb{E}\left[W_{1,n}^2\mid \bar{A},C\right]$.
\begin{align*}
\mathbb{E} \left[W_{1,1}^2\mid \bar{A}, \bar{C}\right] = \frac{1}{n^2}(1-p)^{n-1} + \frac{1}{n}T_2
\end{align*}

Also, we have the following:
\begin{align*}
	\mathbb{E} \left[\left(\sum\limits_{i=2}^{n-1}B_i + 2\right)^2\mid \bar{A}, \bar{C}\right] &= \left(\mathbb{E} \left[\sum\limits_{i=2}^{n-1}B_i + 2\mid \bar{A}, \bar{C}\right]\right)^2 + \mathrm{Var}\left[\sum\limits_{i=2}^{n-1}B_i+2\mid\bar{A},\bar{C}\right]\\
	&=[(n-2)(1-p)+2]^2 + (n-2)p(1-p)
\end{align*}

So we have
\begin{align*}
&\mathbb{E} \left[\left(\sum\limits_{i=1}^n W_{1,i}\right)^2 \mid \bar{A}, \bar{C}\right]\\
\leq & \frac{(1-p)^{n-1}((n-2)(1-p) + 2)^2}{n^2} + \frac{p(1-p)^n}{n} + nT_2
\end{align*}

Combining these inequalities together, we have the following:
\begin{align*}
	&\mathbb{E} \left[\left(\sum\limits_{i=1}^n W_{1,i}\right)^2\right]\\
	=&\mathbb{E} \left[\left(\sum\limits_{i=1}^n W_{1,i}\right)^2\mid A\right]P(A) + \mathbb{E} \left[\left(\sum\limits_{i=1}^n W_{1,i}\right)^2\mid \bar{A},C\right]P(\bar{A},C)\\
	& + \mathbb{E} \left[\left(\sum\limits_{i=1}^n W_{1,i}\right)^2 \mid \bar{A}, \bar{C}\right]P(\bar{A},\bar{C})\\
	\leq & \frac{(1-p)^{n-1}(1+(n-1)(1-p))^2}{n^3} + \frac{p^{n-1}(1+(n-1)(1-p))^2}{n} + \frac{2p^n(1-p) + (1-p)^np}{n^2}\\
	& + n T_1 +   \frac{(n-1)p}{n}\left( 1 + \frac{(1-p)^{n-1}}{n^2} + \frac{(n-2)(1-p)^n}{n} + \frac{p(1-p)^n}{n} + nT_2  + 2T_3 \right) \\
	& + \frac{(n-1)(1-p)}{n} \left( \frac{(1-p)^{n-1}((n-2)(1-p) + 2)^2}{n^2} + \frac{p(1-p)^n}{n} + nT_2 \right)\\
	\leq & \frac{(1-p)^{n-1}}{n} + \frac{p^{n-1}(1+(n-1)(1-p))^2}{n} + \frac{p(1-p)^n+p^n(1-p)}{n^2} + nT_1\\
	& + p + \frac{p(1-p)^n}{n} + nT_2 + 2T_3p + \frac{(n-1)(1-p)^n}{n}\\
	\leq & p(1+2T_3) + (1-p)^{n-1} + \frac{2p(1-p)^n}{n} + \frac{p^n(1-p)}{n^2} + n(T_1+T_2),
\end{align*}
and
\begin{align*}
\alpha_2 \leq & \frac{p(1+2T_3) + (1-p)^{n-1}}{n} + \frac{2p(1-p)^n}{n} + \frac{p^n(1-p)}{n^2} + T_1 + T_2.
\end{align*}

\end{proof}
\end{lemma}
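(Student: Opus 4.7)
My plan is to split the claim into a qualitative part (the algebraic form $\alpha_2 I_n + (1-\alpha_2)A_n$) and a quantitative part (the explicit bound on $\alpha_2$). For the algebraic part, I would first rewrite $W_t^{(j)}A_n(W_t^{(j)})^\top = \frac{1}{n}(W_t^{(j)}\mathbf{1}_n)(W_t^{(j)}\mathbf{1}_n)^\top$ using $A_n = \frac{1}{n}\mathbf{1}_n\mathbf{1}_n^\top$. Because the construction of $W_t^{(j)}$ is fully symmetric under relabelling of the $n$ workers (the block-$j$ manager is chosen uniformly and all packet drops are i.i.d.), the distribution of $W_t^{(j)}\mathbf{1}_n$ is permutation-invariant, hence so is $M := \mathbb{E}_{t,P}[W_t^{(j)}A_n(W_t^{(j)})^\top]$. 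This forces $M = aI_n + b(\mathbf{1}_n\mathbf{1}_n^\top - I_n)$ for some scalars $a,b$, and to pin down the remaining degree of freedom I compute $M\mathbf{1}_n = \mathbb{E}[W_t^{(j)}A_n\mathbf{1}_n] = \mathbb{E}[W_t^{(j)}]\mathbf{1}_n = \mathbf{1}_n$, using $A_n\mathbf{1}_n = \mathbf{1}_n$ and Lemma~\ref{lem: EW}. The row-sum identity $a + (n-1)b = 1$ then rewrites $M = \alpha_2 I_n + (1-\alpha_2)A_n$ with $\alpha_2 := a-b$, and $\alpha_2 \in [0,1]$ follows since $M$ is a Gram matrix of operator norm at most $1$.

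For the quantitative bound, the algebraic part gives $\alpha_2 = (nM_{11}-1)/(n-1)$, so it suffices to bound $M_{11} = \frac{1}{n}\mathbb{E}\bigl[\bigl(\sum_k [W_t^{(j)}]_{1,k}\bigr)^2\bigr]$. Fixing $j=1$ by symmetry, I would condition on three disjoint events describing node~$1$'s role: (i) node~$1$ itself manages block~$1$ (probability $1/n$); (ii) another node, say node~$n$, manages and node~$1$ does not receive the All-Gather broadcast (probability $(n-1)p/n$); (iii) another node manages and node~$1$ does receive the broadcast (probability $(n-1)(1-p)/n$). I will introduce two independent families of $\mathrm{Bernoulli}(1-p)$ indicators: $\{C_k\}$ for Reduce-Scatter successes at the manager and $\{B_k\}$ for All-Gather successes at each worker. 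In every case, row $W_{1,\cdot}$ consists of a single scalar value $W_\star = 1/M_{\mathrm{avg}}$ (with $M_{\mathrm{avg}}$ a binomial-type count determined by the $C_k$'s) on all positions $k$ with $B_k = 1$, plus possibly a deterministic $1$ at the diagonal when node~$1$ keeps its own block. Independence of $W_\star$ from $\{B_k\}$ then factors each conditional second moment of $\sum_k W_{1,k}$ into a product of a moment of $W_\star$ and a moment of an affine binomial.

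The chief technical obstacle will be bounding $\mathbb{E}[W_\star^q]$ for $q\in\{1,2\}$, since $\mathbb{E}[1/M_{\mathrm{avg}}^q]$ admits no closed form. My strategy is to bound $(m+1)^{-2}$ by $2\bigl((m+1)(m+2)\bigr)^{-1}$ (and leave $(m+1)^{-1}$ as is), then apply Facts~1 and~2 to convert $\binom{n-1}{m}/((m+1)(m+2))$ into $\binom{n+1}{m+2}/(n(n+1))$ and $\binom{n-1}{m}/(m+1)$ into $\binom{n}{m+1}/n$. After re-indexing, each resulting sum becomes a partial binomial expansion of $(p+(1-p))^{n+1}$ or $(p+(1-p))^n$; subtracting the missing low-order terms and dividing by appropriate powers of $(1-p)$ produces precisely the explicit constants $T_1, T_2, T_3$ in the statement. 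Finally, assembling the three conditional second moments, weighting them by $1/n$, $(n-1)p/n$, $(n-1)(1-p)/n$, and absorbing $O(1/n^2)$ residual terms into already-bounded $O(1/n)$ ones, yields the displayed upper bound on $\alpha_2$. The most error-prone bookkeeping will be tracking which indicators remain random in each case, since node~$1$'s own Reduce-Scatter bit is irrelevant in Case~(i), forced to $0$ in Case~(ii), and forced to $1$ in Case~(iii), altering both the support of $W_\star$ and the range of the $B$-sum.
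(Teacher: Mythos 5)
Your plan follows essentially the same route as the paper's proof: the same symmetry argument for the form $\alpha_2 I_n+(1-\alpha_2)A_n$, the same reduction to $\alpha_2=\bigl(\mathbb{E}\bigl[(\sum_k W_{1,k})^2\bigr]-1\bigr)/(n-1)$, the same three-case conditioning with weights $1/n$, $(n-1)p/n$, $(n-1)(1-p)/n$, the same factorization of each conditional second moment into a moment of the averaging weight times a moment of an affine binomial, and the same use of Facts~1 and~2 to produce $T_1,T_2,T_3$. One minor caution: your justification that $\alpha_2\le 1$ via an ``operator norm at most $1$'' claim is circular, since the spectrum of $M$ is exactly $\{1,\alpha_2\}$, so that norm bound is equivalent to what you are trying to show --- though this matches the paper, which likewise asserts $\alpha_2\in[0,1]$ by symmetry without a separate argument.
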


\end{document}